\newtheorem{theorem}{Theorem}
\newtheorem{lemma}{Lemma}
\newtheorem{corollary}{Corollary}
\newtheorem{example}{Example}
\newcommand{\eat}[1]{}
\newcommand{\emp}{\emptyset}
\newcommand{\termdef}[1]{\textbf{{#1}}}
\newcommand{\ep}{\varepsilon}
\newcommand{\myalph}{\mit\Sigma}
\newcommand{\numberset}{\mathbb}
\newcommand{\N}{\numberset{N}}
\newcommand{\size}[1]{| {#1} |}
\newcommand{\order}[1]{{\cal O}({#1})}
\newcommand{\gram}{{\cal G}}
\newcommand{\myrule}{P}
\newcommand{\nonterm}{V_{N}} 
\newcommand{\myterm}{V_{T}}
\newcommand{\myivar}{V_{I}}
\newcommand{\de}{\rightarrow}
\newcommand{\deriv}[1]{\Rightarrow_{#1}}
\newcommand{\derivv}[2]{\Rightarrow_{#1}^{#2}}
\newcommand{\fo}{\mathbf{fo}}
\newcommand{\ibd}{\mathbf{ib}}
\newcommand{\ebd}{\mathbf{eb}}
\newcommand{\myindex}[1]{{\cal I}(#1)}
\newcommand{\setindex}[1]{{\sf index}(#1)}
\newcommand{\boxnum}[1]{{\setlength{\fboxsep}{1pt}\raisebox{1pt}
{\hspace{1pt}\fbox{\tiny #1}\hspace{1pt}}}}
\newcommand{\ind}[1]{\ensuremath{^{\kern-0.5pt\boxnum{#1}}}}
\newcommand{\mrgbcw}{\texttt{PMCW}}
\newcommand{\mcess}{\texttt{MBW}}
\newcommand{\cw}{\mathbf{cw}}
\newcommand{\ecw}{\mathbf{ecw}}
\newcommand{\emcw}{\mathbf{emcw}}
\newcommand{\wg}{\mathbf{wd}}
\newcommand{\mwg}{\mathbf{mwd}}
\newcommand{\ewg}{\mathbf{ewd}}
\newcommand{\emwg}{\mathbf{emwd}}
\newcommand{\ggh}{H}
\newcommand{\ggw}{W}
\newcommand{\ngh}{2n^4+1}	
\newcommand{\ngw}{6n^4}
\newcommand{\lggh}{H_l}
\newcommand{\lggw}{W_l}
\newcommand{\lgh}{3n^4+1}
\newcommand{\lgw}{12n^4}
\newcommand{\mgh}{2n^4+1}
\newcommand{\mghplusone}{2n^4+2}
\newcommand{\mgw}{8n^4+1}
\newcommand{\mggh}{H_m}
\newcommand{\mggw}{W_m}
\newcommand{\rgh}{3n^4+1}
\newcommand{\rgw}{12n^4}
\newcommand{\nee}{4n^2}
\newcommand{\halfnee}{2n^2}
\newcommand{\kprime}{3n^4+2n^3+2k+2}	
\newcommand{\ekprime}{3n^4+2n^3+2k+4}
\newcommand{\nklb}{3n^8}
\newcommand{\lrklb}{17n^8}
\newcommand{\mklb}{7n^8}
\newcommand{\frlklb}{5n^8}
\newcommand{\lminusmh}{n^4}
\newcommand{\larr}{\nu'}
\newcommand{\grid}{\Gamma}
\newcommand{\cgrid}{\Sigma}
\begin{document}

\title{Synchronous Context-Free Grammars \\
and Optimal Linear Parsing Strategies}
\author{Pierluigi Crescenzi%
\thanks{Dipartimento di Sistemi e Informatica, Universit\`a di Firenze, 
Viale Morgagni, 65, 50134 Firenze, Italy.} 
\and
Daniel Gildea%
\thanks{Computer Science Department, University of Rochester, Rochester, NY 14627.} 
\and
Andrea Marino%
\thanks{Dipartimento di Informatica, Universit\`a di Milano, 
Via Festa del Perdono 7, 20122 Milano, Italy.}
\and
Gianluca Rossi%
\thanks{Dipartimento di Matematica, Universit\`a di Roma \textit{Tor Vergata}, 
Via Ricerca Scientifica 1, 00133 Roma, Italy.} 
\and
Giorgio Satta%
\thanks{Department of Information Engineering, University of Padua, 
Via Gradenigo 6/A, 35131 Padova, Italy.} 
}

\maketitle

\begin{abstract}
Synchronous Context-Free Grammars (SCFGs), also known as
syntax-directed translation schemata 
\cite{Aho:69a,AhoUll72}, 
are unlike context-free grammars in that they do not have a binary
normal form.  In general, parsing with SCFGs takes space and time
polynomial in the length of the input strings, but with 
the degree of the polynomial depending on the permutations 
of the SCFG rules.  We consider linear parsing strategies,
which add one nonterminal at a time.  We 
show that for a given input permutation, 
the problems of finding the linear parsing 
strategy with the minimum space and time complexity are both NP-hard.
\end{abstract}

\section{Introduction}
\label{sec:intro}

Synchronous Context-Free Grammars (SCFGs) are widely used
to model translational equivalence between strings in 
both the area of compilers for programming languages
and, more recently, in the area of machine 
translation of natural languages.
The formalism was first introduced by Lewis 
and Stearns \cite{LewisStearns68} under the name of 
syntax-directed transduction grammars, and was later 
called syntax-directed translation schemata by 
Aho and Ullman \cite{Aho:69a,AhoUll72}.  
The name SCFG, which we use in this article, 
was later introduced in the literature on computational 
linguistics, where the term ``synchronous'' refers to rewriting 
systems that generate strings in both a source and target 
language simultaneously \cite{Shieber:90CO,SH94,Chiang:04}. 
In fact, SCFGs can be seen as a natural extension of the well-known 
rewriting formalism of Context-Free Grammars (CFGs).  More precisely, while a CFG 
generates a set of strings, a SCFG generates a set of string pairs
using essentially the same context-free rewriting mechanism, 
along with some special synchronization between the two derivations, 
as discussed below. 

A SCFG is a string rewriting system based on synchronous rules. 
Informally, a \termdef{synchronous rule} is composed of two CFG rules along with 
a bijective pairing between all the occurrences of the nonterminal symbols 
in the right-hand side of the first rule and 
all the occurrences of the nonterminal symbols 
in the right-hand side of the second rule.  There is no restriction on 
the terminal symbols appearing in the right-hand sides of the two CFG rules.
Two nonterminal occurrences that are associated by the above bijection are
called \termdef{linked} nonterminals.  
Linked nonterminals are not necessarily occurrences of the same nonterminal symbol.  
In what follows, we will often view a synchronous rule as 
a permutation of the nonterminal occurrences in the two right-hand sides, 
combined with some renaming of these occurrences and with some 
insertion and deletion of terminal symbols.   

In a SCFG rewriting is restricted in two ways: 
the two CFG rule components in a synchronous rule must be applied 
simultaneously, and rewriting must take place at linked nonterminals.
Other than that, the application of a synchronous rule is independent 
of the context, similarly to the CFG case.  
As a result, a SCFG generates a pair of strings by means 
of two context-free parse trees that have the same skeleton 
but differ by some reordering and renaming of the nonterminal 
children at each internal node, and by the insertion 
and the deletion of the terminal children of that node.
Moreover, the projection of the generated string pairs 
on both dimensions are still context-free languages.
Thus, the added generative power of a SCFG lies in its ability to 
model long-distance movement of phrase constituents in the translation 
from the source to the target language, through simple 
permutations implemented at the internal nodes in the generated trees, 
something that is not possible with models based on finite-state transducers.  

Recently, SCFGs have received wide attention in the area of
natural language processing, where several variants of SCFGs
augmented with probabilities are currently used for translation
between natural languages.  This is due to the recent surge of
interest in commercial systems for statistical machine
translation working on large scale, real-world applications such
as the translation of text documents from the world-wide web.
However, from a theoretical perspective our knowledge of the
parsing problem based on SCFGs and of several related tasks is
quite limited, with many questions still left unanswered, as
discussed below.  This is rather surprising, in view of the fact
that SCFGs are a very natural extension of the class of CFGs, for
which the parsing problem has been extensively investigated and is
well understood nowadays.

In the context of statistical machine translation, SCFGs are automatically 
induced from parallel corpora, that is, very large collections of 
source texts that come with target translations, and are usually 
enriched with annotations aligning source and target words \cite{ChiangCL,galley-naacl04}.  
Alternative translation models are currently in use in machine translation, 
such as word-to-word translation models \cite{Brown:93} or phrase-based 
translation models \cite{Koehn-naacl03}, which are essentially finite-state 
models.  However, it has been experimentally shown 
that the more powerful generative capacity of SCFGs achieves better accuracy
than finite-state models in real-world machine translation applications \cite{ChiangCL}.

The \termdef{recognition} (or membership) problem for SCFGs is defined as follows.  
Given as input a SCFG $\gram$ and strings $w_1$ and $w_2$, 
we have to decide whether the pair $w_1, w_2$ can be generated by $\gram$.
The \termdef{parsing} problem for SCFGs (or synchronous parsing problem)
is defined for the same input $\gram$, $w_1$ and $w_2$, 
and produces as output some suitable representation of 
the set of all parse trees in $\gram$ that generate $w_1$ and $w_2$.  
Finally, the \termdef{decoding} (or translation) problem for SCFGs 
requires as input a SCFG $\gram$ and a single string $w_1$, 
and produces as output some suitable representation of the set 
of all parse trees in $\gram$ that generate pairs of the form 
$w_1, w_2$, for some string $w_2$.  
In this paper we investigate the synchronous parsing problem, 
which is strictly related to the other two problems, as
will be discussed in more detail in Section~\ref{disc}. 

From the perspective of synchronous parsing, a crucial difference between CFGs and SCFGs
is that SCFGs cannot always be \termdef{binarized}, that is,
cast into a normal form with no more than two nonterminals
on the right-hand side of each rule component in a synchronous rule.
In fact, SCFGs form an infinite hierarchy, where grammars with at most $r$
nonterminals on the right-hand side of a rule
can generate sets of string pairs not achievable
by grammars with the same quantity bounded by $r-1$, 
for each $r>3$ \cite{AhoUll72}.
Binarization is crucial to standard algorithms
for CFG parsing, whether explicitly as a preprocessing transformation
on the grammar, as for instance in the case of the Cocke-Kasami-Younger 
algorithm \cite{Younger:67}, or implicitly through the use of
dotted rule symbols indicating which nonterminals
have already been parsed, as in the case of Earley algorithm \cite{Earley:70}.  
Unfortunately these techniques cannot be applied to SCFGs, 
because of the above restrictions on binarization, 
and the parsing problem for SCFGs seems significantly more complex 
than the parsing problem for CFGs, from a computational perspective.
While parsing for CFGs can be solved in polynomial space and time 
by the above mentioned algorithms, it has been shown in \cite{Satta:05} 
that parsing for SCFGs is NP-hard, when the grammar is considered as part of the input.

Despite this hardness result, when the SCFG is fixed, parsing 
can be performed in time polynomial in the length of the input strings
using the bottom-up dynamic programming framework
described in Section~\ref{sec:prel}.
The degree of the 
polynomial is determined by the maximum complexity
of any rule in the grammar, because rules are parsed independently of
one another.  More precisely, the complexity of a given rule 
is $\order{n^{d(\pi,\sigma)}}$, where $n$ is the sentence length,
and $d$ is some function of 
$\pi$, the permutation associated with the rule, and $\sigma$,
a parsing \termdef{strategy} for the rule.  
This leads us to consider the 
problem of finding the best strategy for a given rule, that is,
finding the $\sigma$ that minimizes $d(\pi,\sigma)$.
We investigate the problem of finding the best \termdef{linear} parsing strategy
for a given synchronous rule, that is, the ways of collecting one after the other
the linked nonterminals in a synchronous rule that result in the
optimization of the space and time complexity for synchronous parsing.
We show that this task is NP-hard.  
This solves an open problem that has been addressed in several 
previously published works; see for instance \cite{gildea-stefankovic:2007:main}, 
\cite{Huang:2009}, and~\cite{head-driven-acl11}.

\paragraph*{Relation with previous work}
The problem that we explore in this article is an instance of 
a \termdef{grammar factorization} problem.  Factorization 
is the general method of breaking a grammar 
rule into a number of equivalent, smaller rules,
usually for the purpose of finding efficient
parsing algorithms.  Our linear parsing strategies
for SCFGs process two linked nonterminals from the
original rule at each step.  Each of these steps is equivalent
to applying a binary rule in another rewriting system,
which must be more general than SCFGs, since
after all SCFGs cannot be binarized.  

The general problem of grammar factorization has received 
a great deal of study recently in the field of computational 
linguistics, with the rise of statistical systems
for natural language translation, as well as systems for
parsing with monolingual grammars that are more powerful
than CFGs \cite{Huang:2009,gomez-naacl09,sagot-satta:2010:ACL,gomez-naacl10}.  
Most work in this area addresses some subclasses
of the very general rewriting framework known as 
Linear Context-Free Rewriting Systems (LCFRS) \cite{LCFRS},
which is equivalent to Multiple Context-Free Grammars \cite{Seki91},
and which subsumes SCFGs and many other formalisms.
Many algorithms have been proposed for efficiently
factorizing subclasses of LCFRS, in order to optimize 
parsing under various criteria.
Our result in this article is a hardness result,
showing that such algorithms cannot generalize
to the widely used and theoretically important 
class of SCFGs.  A related result has been presented 
by Crescenzi et al.\ \cite{head-driven-acl11},
showing that optimal linear factorization for 
general LCFRS is NP-hard.  Their reduction involves 
constructing LCFRS rules that are not valid as SCFG rules.
Indeed, as already mentioned, SCFG rules can be viewed as permutations, and the special structure of these objects makes reductions less straightforward than in the case of LCFRS.
This article therefore strengthens the result in \cite{head-driven-acl11},
showing that even if we restrict ourselves to SCFGs, 
detection of optimal linear parsing strategies is still NP-hard.

\section{Preliminaries}
\label{sec:prel}

In this section we formally 
introduce the class of synchronous context-free grammars, along with the computational problem that we investigate in this article. 
We assume the reader to be familiar with basic definitions from formal language theory,
and we only briefly summarize here the adopted notation.

For any positive integer $n$, we write $[n]$ to denote the set $\{1,\ldots,n\}$,
and for $n=0$ we let $[n]$ be the empty set.  
We also write $[n]_0$ to denote the set $[n] \cup \{0\}$.

\subsection{Synchronous Context-Free Grammars}
\label{ssec:scfg}

Let $\myalph$ be some finite alphabet.  A string $x$ over $\myalph$ is a finite ordered sequence of symbols from $\myalph$.  The length of $x$ is written $\size{x}$; the empty string is denoted by $\ep$ and we have $\size{\ep} = 0$.  
We write $\myalph^\ast$ for the set of all strings over $\myalph$, 
and $\myalph^+ = \myalph^\ast \setminus \{ \ep \}$. 
For strings $x, y \in \myalph^\ast$, $x \cdot y$ denotes the concatenation of $x$ and $y$, which we abbreviate as $xy$. 

A \termdef{context-free grammar} (CFG for short) is a tuple
$\gram = (\nonterm, \myterm, \myrule, S)$, where $\nonterm$ is a finite set of nonterminals,
$\myterm$ is a finite set of terminals with $\myterm \cap \nonterm = \emp$, 
$S \in \nonterm$ is a special symbol called the start symbol, 
and $\myrule$ is a finite set of rules having the form 
$A \de \alpha$, with $A \in \nonterm$ and $\alpha \in (\myterm \cup \nonterm)^*$.
The size of a CFG $\gram$ is defined as 
$\size{\gram} = \sum_{(A \de \gamma) \in \myrule} \; \size{A \gamma}$. 

The \termdef{derive} relation associated with a CFG $\gram$ is written~$\deriv{\gram}$;
we also use the reflexive and transitive closure of $\deriv{\gram}$, 
written $\derivv{\gram}{\ast}$. 
The language (set of strings) derived in $\gram$ is defined as 
$L(\gram) = \{ w \mid S \derivv{\gram}{\ast} w, \; w \in \myterm^\ast\}$. 

In what follows we need to represent bijections between the
occurrences of nonterminals in two strings over $\nonterm \cup \myterm$.  This
can be realized by annotating nonterminals with \termdef{indices} from an infinite
set.  In this article, we draw indices from the set of positive natural numbers $\N$. 
We define $\myindex{\nonterm} = \{A^\boxnum{$t$} \mid A \in \nonterm, \;
t \in \N \}$ and $\myivar = \myindex{\nonterm} \cup \myterm$.  For a string
$\gamma \in \myivar^\ast$, we write $\setindex{\gamma}$ to denote the set of
all indices that appear in symbols in $\gamma$.

Two strings $\gamma_1, \gamma_2 \in \myivar^\ast$ are \termdef{synchronous} if
each index from
$\N$ occurs at most once in $\gamma_1$ and at most once in $\gamma_2$, and
$\setindex{\gamma_1} = \setindex{\gamma_2}$.  Therefore $\gamma_1, \gamma_2$
have the general form
\begin{align*}
\gamma_1 &= u_{1,0} A_{1,1}^\boxnum{$t_1$}
u_{1,1} A_{1,2}^\boxnum{$t_2$} u_{1,2}  \cdots u_{1,r-1} A_{1,r}^\boxnum{$t_r$}
u_{1,r} \\
\gamma_2 &= u_{2,0} A_{2,1}^\boxnum{$t_{\pi(1)}$} u_{2,1}
A_{2,2}^\boxnum{$t_{\pi(2)}$} u_{2,2} \cdots u_{2,r-1}
A_{2,r}^\boxnum{$t_{\pi(r)}$} u_{2,r}
\end{align*}
where $r \geq 0$, $u_{1,i},
u_{2,i} \in \myterm^\ast$ for each $i \in [r]_0$, 
$A_{1,i}^\boxnum{$t_i$}, A_{2,i}^\boxnum{$t_{\pi(i)}$} \in
\myindex{\nonterm}$ for each $i \in [r]$, 
$t_i \neq t_j$ for $i \neq j$, and $\pi$ is a
permutation of the set $[r]$.
Note that, under the above convention, nonterminals $A_{1,i}, A_{2,\pi^{-1}(i)}$
appear with the same index $t_i$, for each $i \in [r]$.  
In a pair of synchronous strings, two nonterminal occurrences 
with the same index are called \termdef{linked} nonterminals. 

A \termdef{synchronous context-free grammar} (SCFG) is a tuple $\gram =
(\nonterm, \myterm, \myrule, S)$,%
\footnote{We overload symbol $\gram$.  It will always be clear from the
context whether $\gram$ refers to a CFG or to a SCFG.}
where $\nonterm$, $\myterm$ and $S$ are defined as for CFGs, 
and $\myrule$ is a finite set of
\termdef{synchronous rules}.  Each synchronous rule has the form $[A_1 \de
\alpha_1, \;\; A_2 \de \alpha_2]$, where $A_1, A_2 \in \nonterm$ and where
$\alpha_1, \alpha_2 \in \myivar^\ast$ are synchronous strings.  
We refer to $A_1 \de \alpha_1$ and $A_2 \de \alpha_2$, respectively, 
as the left and right \termdef{components} of the synchronous rule.
Note that if we ignore the indices annotating the nonterminals in 
$\alpha_1$ and $\alpha_2$, then $A_1 \de \alpha_1$ and $A_2 \de \alpha_2$ 
are context-free rules. 

\begin{example}
\label{ex:scfg}
The list of synchronous rules reported below implicitly defines a SCFG.  Symbols
$s_i$ are rule labels, to be used as references in later examples. 
\[
\begin{array}{ll}
s_1: & [S \de A^\boxnum{1} \; B^\boxnum{2} , \; S \de B^\boxnum{2} \; A^\boxnum{1}  ] \\
s_2: & [A \de a A^\boxnum{1} b , \; A \de b A^\boxnum{1} a ] \\
s_3: & [A \de a b , \; A \de b a ] \\
s_4: & [B \de c B^\boxnum{1} d , \; B \de d B^\boxnum{1} c ] \\
s_5: & [B \de c d , \; B \de d c ]
\end{array}
\]
\end{example}

We now define the notion of derivation associated with a SCFG. 
In a derivation we rewrite a pair of synchronous strings, 
always producing a new pair of synchronous strings.
This is done in several steps, where, at each step, 
two linked nonterminals are rewritten by a synchronous rule. 
We use below the auxiliary notion of
\termdef{reindexing}, which is an injective function $f$ from $\N$ to $\N$.
We extend $f$ to $\myivar$ by letting
$f(A^\boxnum{$t$}) = A^\boxnum{$f(t)$}$ for $A^\boxnum{$t$} \in \myindex{\nonterm}$ and $f(a) = a$ for $a \in \myterm$.
We also extend $f$ to strings in $\myivar^\ast$ by letting $f(\ep) = \ep$ and
$f(X\gamma) = f(X)f(\gamma)$, for each $X \in \myivar$ and $\gamma \in \myivar^\ast$.

Let $\gamma_1, \gamma_2 \in \myivar^\ast$ be two synchronous strings. The
\termdef{derive} relation $[\gamma_1, \;\; \gamma_2] \deriv{\gram}
[\delta_1, \;\; \delta_2]$ holds whenever there exist an index $t$ in
$\setindex{\gamma_1}=\setindex{\gamma_2}$, a synchronous rule $s \in \myrule$ 
of the form $[A_1 \de \alpha_1, \; A_2 \de \alpha_2]$, and some reindexing $f$ such that
\begin{enumerate}
\item
$\setindex{f(\alpha_1)} \cap ( \setindex{\gamma_1} \setminus \{ t \} ) = \emptyset$;
\item
$\gamma_{1} = \gamma'_{1} A_{1}^\boxnum{$t$} \gamma''_{1}$,
$\gamma_{2} = \gamma'_{2} A_{2}^\boxnum{$t$} \gamma''_{2}$;  and
\item
$\delta_{1} = \gamma'_{1} f(\alpha_1) \gamma''_{1}$, $\delta_{2} =
\gamma'_{2} f(\alpha_2) \gamma''_{2}$.
\end{enumerate}
We also write $[\gamma_1, \; \gamma_2] \derivv{\gram}{s} [\delta_1, \; \delta_2]$
to explicitly indicate that the derive relation holds through rule $s$.

Note that $\delta_1, \delta_2$ above are guaranteed to be synchronous strings,
because $\alpha_1$ and $\alpha_2$ are synchronous strings and because of condition~(i) above.  Note also that, for a given pair $[\gamma_1, \; \gamma_2]$ of synchronous strings,
an index $t$ and a synchronous rule as above, there may be infinitely many choices of a reindexing $f$ such that the above constraints are satisfied.
However, all essential results about SCFGs are independent of the specific choice of 
reindexing, and therefore we will not further discuss this issue here.

A derivation in $\gram$ is a sequence $\sigma = s_1 s_2 \cdots s_d$, $d \geq 0$, of synchronous rules $s_i \in \myrule$, $i \in [d]$, with $\sigma = \ep$ for $d = 0$, satisfying the following property.  
For some pairs of synchronous strings $[\gamma_{1,i}, \; \gamma_{2,i}]$, 
$i \in [d]_0$, we have 
$[\gamma_{1,i-1}, \; \gamma_{2,i-1}] \derivv{\gram}{s_i} [\gamma_{1,i}, \;
\gamma_{2,i}]$ for each $i \in [d]$.
We always implicitly assume some canonical form for derivations in $\gram$, by
demanding for instance that each step rewrites a pair of linked nonterminal
occurrences of which the first is leftmost in the left component.
When we want to focus on the specific synchronous strings
being derived, we also write derivations in the form $[\gamma_{1,0}, \;
\gamma_{2,0}] \derivv{\gram}{\sigma} [\gamma_{1,d}, \; \gamma_{2,d}]$, and we write
$[\gamma_{1,0}, \; \gamma_{2,0}] \derivv{\gram}{\ast} [\gamma_{1,d}, \; \gamma_{2,d}]$
when $\sigma$ is not further specified.  The \termdef{translation} 
generated by a
SCFG $\gram$ is defined as
\begin{eqnarray*}
T(\gram) & = & \{ [w_1, \; w_2] \mid [S^\boxnum{1}, \; S^\boxnum{1}]
 \derivv{\gram}{\ast} [w_1, \; w_2], \; w_1, w_2 \in \myterm^\ast \} \,.
\end{eqnarray*}

\begin{example}
Consider the SCFG $\gram$ from Example~\ref{ex:scfg}.
The following is a (canonical) derivation in $\gram$
\begin{eqnarray*}
[S^\boxnum{1}, \; S^\boxnum{1}]
  & \derivv{\gram}{s_1} & [A^\boxnum{1} B^\boxnum{2}, B^\boxnum{2} A^\boxnum{1}] \\
  & \derivv{\gram}{s_2} & [a A^\boxnum{3} b B^\boxnum{2}, 
        B^\boxnum{2} b A^\boxnum{3} a] \\
  & \derivv{\gram}{s_2} & [aa A^\boxnum{4} bb B^\boxnum{2}, 
        B^\boxnum{2} bb A^\boxnum{4} aa] \\
  & \derivv{\gram}{s_3} & [aaabbb B^\boxnum{2}, B^\boxnum{2} bbbaaa] \\
  & \derivv{\gram}{s_4} & [aaabbb c B^\boxnum{5} d, d B^\boxnum{5} c bbbaaa] \\
  & \derivv{\gram}{s_5} & [aaabbb ccdd, ddcc bbbaaa]
\end{eqnarray*}
It is not difficult to see that 
$T(\gram) = \{ [a^p b^p c^q d^q, \; d^q c^q b^p a^p] \mid p, q \geq 1 \}$. 
\end{example}

We conclude this section with a remark.
Our definition of SCFG is essentially the same as the definition of
the syntax-directed transduction grammars in \cite{LewisStearns68}
and the syntax-directed translation schemata in \cite{Aho:69a,AhoUll72},
as already mentioned in the introduction. 
The only difference is that in a synchronous rule
$[A_1 \de \alpha_1, \; A_2 \de \alpha_2]$ we allow $A_1, A_2$ 
to be different nonterminals, while in the above formalisms 
we always have $A_1 = A_2$.  
Although our generalization does not add to the weak generative power 
of the model, that is, the class of translations generated by the two models
are the same, it does increase its strong generative capacity,
that is, the parse tree mappings defined by 
syntax-directed translation schemata are a proper
subset of the parse tree mappings defined by SCFGs.  
As a consequence of this fact, when the definitions of 
the two models are enriched with probabilities, 
SCFGs can define certain
parse tree distributions that cannot be captured by 
syntax-directed translation schemata, as argued in \cite{Satta:05}.  
The above generalization has been adopted 
in several translation models for natural language.

\subsection{Parsing Strategies for SCFGs}
\label{ssec:membership}

Recognition and parsing algorithms for SCFGs are 
commonly used in the area of statistical machine translation.
Despite the fact that the underlying problems are NP-hard, 
it has been experimentally shown that 
the typology of synchronous rules that we encounter in real world 
applications can be processed efficiently, for the most of the cases, 
if we adopt the appropriate parsing strategy, as already 
discussed in Section \ref{sec:intro}.   We are thus interested 
in the problem of finding optimal parsing strategies for 
synchronous rules, under some specific parsing framework.

Standard parsing algorithms for SCFGs exploit dynamic programming techniques,
and are derived as a generalization of the well-known Cocke-Kasami-Younger 
algorithm for word recognition based on CFGs \cite{Younger:67,Hopcroft:79},
which essentially uses a bottom-up strategy. 
All these algorithms are based on the representation described below.  
For a string $w = a_1 \cdots a_n$, $n \geq 1$, and for integers $i, j \in [n]_0$ 
with $i<j$, we write $w[i, j]$ to denote the substring $a_{i+1} \cdots a_j$.
Assume we are given the input pair $[w_1, \; w_2]$. 
To simplify the discussion, we focus on a sample synchronous rule 
containing only occurrences of nonterminal symbols
\begin{eqnarray}
s: && [A_1 \de A_{1,1}^\boxnum{$1$} A_{1,2}^\boxnum{$2$} A_{1,3}^\boxnum{$3$}
   A_{1,4}^\boxnum{$4$} A_{1,5}^\boxnum{$5$} A_{1,6}^\boxnum{$6$}, \nonumber\\
 &&  \; A_2 \de A_{2,1}^\boxnum{$6$} A_{2,2}^\boxnum{$1$} A_{2,3}^\boxnum{$4$} 
   A_{2,4}^\boxnum{$2$} A_{2,5}^\boxnum{$5$}  A_{2,6}^\boxnum{$3$}], \label{eq:rule-s}
\end{eqnarray}
Synchronous rule $s$ can be associated with the permutation $\pi$ 
of the set $[6]$ identified by the sequence $614253$,
which is visualized in Figure~\ref{fig:strategy}a. 
Recall that, for each $k \in [6]$, nonterminals $A_{1,k}, A_{2,\pi^{-1}(k)}$
are linked in rule $s$. 

Assume that, for each $k \in [6]$, 
we have already parsed linked nonterminals $A_{1,k}, A_{2,\pi^{-1}(k)}$, 
realizing that $[A_{1,k}^\boxnum{1}, A_{2,\pi^{-1}(k)}^\boxnum{1}] 
\derivv{\gram}{\ast} [w_1[i_{1,k}, j_{1,k}], \; 
w_2[i_{2,\pi^{-1}(k)}, j_{2,\pi^{-1}(k)}]]$,
for integers $i_{1,k}, j_{1,k} \in [\size{w_1}]_0$ and 
$i_{2,\pi^{-1}(k)}, j_{2,\pi^{-1}(k)} \in [\size{w_2}]_0$.
Informally, we say that linked nonterminals $A_{1,k}, A_{2,\pi^{-1}(k)}$ span
substrings $w_1[i_{1,k}, j_{1,k}]$ and $w_2[i_{2,\pi^{-1}(k)}, j_{2,\pi^{-1}(k)}]$
of the input.  In order to parse the left-hand side nonterminals $A_{1}, A_{2}$, 
we need to gather together all linked nonterminals $A_{1,k}, A_{2,\pi^{-1}(k)}$, and check 
whether the combination of the above derivations can provide a new derivation
of the form $[A_{1}^\boxnum{1}, A_{2}^\boxnum{1}] 
\derivv{\gram}{\ast} [w_1[i_{1,1}, j_{1,6}], \; w_2[i_{2,1}, j_{2,6}]]$.
In other words, we want to know whether the combination 
of the derivations for linked nonterminals $A_{1,k}, A_{2,\pi^{-1}(k)}$
can span two (contiguous) substrings of the input. 

\begin{figure}
\resizebox{4in}{!}{
\begin{tabular}{lp{5.0in}p{5.0in}}
a)\\
&{\begin{pspicture}(-2,-2)(13,7)
\psset{fillcolor=gray}
\rput(-1.5, 6.25){$(A_{1,6},A_{2,1})$}
\rput(-1.5, 5.25){$(A_{1,5},A_{2,5})$}
\rput(-1.5, 4.25){$(A_{1,4},A_{2,3})$}
\rput(-1.5, 3.25){$(A_{1,3},A_{2,6})$}
\rput(-1.5, 2.25){$(A_{1,2},A_{2,4})$}
\rput(-1.5, 1.25){$(A_{1,1},A_{2,2})$}
\rput(-1.5, 0.25){$(A_1,A_2)$}
\psframe[fillstyle=solid](0, 1)(1, 1.5)
\psframe[fillstyle=solid](1, 2)(2, 2.5)
\psframe[fillstyle=solid](2, 3)(3, 3.5)
\psframe[fillstyle=solid](3, 4)(4, 4.5)
\psframe[fillstyle=solid](4, 5)(5, 5.5)
\psframe[fillstyle=solid](5, 6)(6, 6.5)
\psframe[fillstyle=solid](7, 6)(8, 6.5)
\psframe[fillstyle=solid](8, 1)(9, 1.5)
\psframe[fillstyle=solid](9, 4)(10, 4.5)
\psframe[fillstyle=solid](10, 2)(11, 2.5)
\psframe[fillstyle=solid](11, 5)(12, 5.5)
\psframe[fillstyle=solid](12, 3)(13, 3.5)
\psframe[fillstyle=solid](-.2, 0.75)(13.2, 0.75) 
\psframe[fillstyle=solid](0, 0)(6, 0.5)
\psframe[fillstyle=solid](7, 0)(13, 0.5)
\end{pspicture} }
\\
b)\\
&{\begin{pspicture}(-2,-2)(9,3)
\psset{fillcolor=gray}
\rput(-1.5, 2.25){$(A_{1,2},A_{2,4})$}
\rput(-1.5, 1.25){$(s,\sigma,1)$}
\rput(-1.5, 0.25){$(s,\sigma,2)$}
\psframe[fillstyle=solid](0, 1)(1, 1.5)
\psframe[fillstyle=solid](1, 2)(2, 2.5)
\psframe[fillstyle=solid](8, 1)(9, 1.5)
\psframe[fillstyle=solid](10, 2)(11, 2.5)
\psframe[fillstyle=solid](-.2, 0.75)(13.2, 0.75) 
\psframe[fillstyle=solid](0, 0)(2, 0.5)
\psframe[fillstyle=solid](8, 0)(9, 0.5)
\psframe[fillstyle=solid](10, 0)(11, 0.5)
\end{pspicture} }
\\
c)\\
&{\begin{pspicture}(-2,-2)(9,3)
\psset{fillcolor=gray}
\rput(-1.5, 2.25){$(A_{1,3},A_{2,6})$}
\rput(-1.5, 1.25){$(s,\sigma,2)$}
\rput(-1.5, 0.25){$(s,\sigma,3)$}
\psframe[fillstyle=solid](2, 2)(3, 2.5)
\psframe[fillstyle=solid](0, 1)(2, 1.5)
\psframe[fillstyle=solid](12, 2)(13, 2.5)
\psframe[fillstyle=solid](8, 1)(9, 1.5)
\psframe[fillstyle=solid](10, 1)(11, 1.5)
\psframe[fillstyle=solid](-.2, 0.75)(13.2, 0.75) 
\psframe[fillstyle=solid](0, 0)(3, 0.5)
\psframe[fillstyle=solid](8, 0)(9, 0.5)
\psframe[fillstyle=solid](10, 0)(11, 0.5)
\psframe[fillstyle=solid](12, 0)(13, 0.5)
\end{pspicture} }
\\
d)
\\
&{\begin{pspicture}(-2,-1)(13,3)
\psset{fillcolor=gray}
\rput(-1.5, 2.25){$(A_{1,4},A_{2,3})$}
\rput(-1.5, 1.25){$(s,\sigma,3)$}
\rput(-1.5, 0.25){$(s,\sigma,4)$}
\psframe[fillstyle=solid](3, 2)(4, 2.5) 
\psframe[fillstyle=solid](0, 1)(3, 1.5)  
\psframe[fillstyle=solid](9, 2)(10, 2.5) 
\psframe[fillstyle=solid](8, 1)(9, 1.5) 
\psframe[fillstyle=solid](10, 1)(11, 1.5)
\psframe[fillstyle=solid](12, 1)(13, 1.5) 
\psframe[fillstyle=solid](-.2, 0.75)(13.2, 0.75) 
\psframe[fillstyle=solid](0, 0)(4, 0.5)
\psframe[fillstyle=solid](8, 0)(11, 0.5)
\psframe[fillstyle=solid](12, 0)(13, 0.5)
\end{pspicture} }
\\
\end{tabular}}
\caption{a): combining spans
to parse the SCFG rule $s$ of eqn.~(\ref{eq:rule-s}).
b), c) and d): the first three steps in a linear parsing
strategy for this rule.}\label{fig:strategy}
\end{figure}
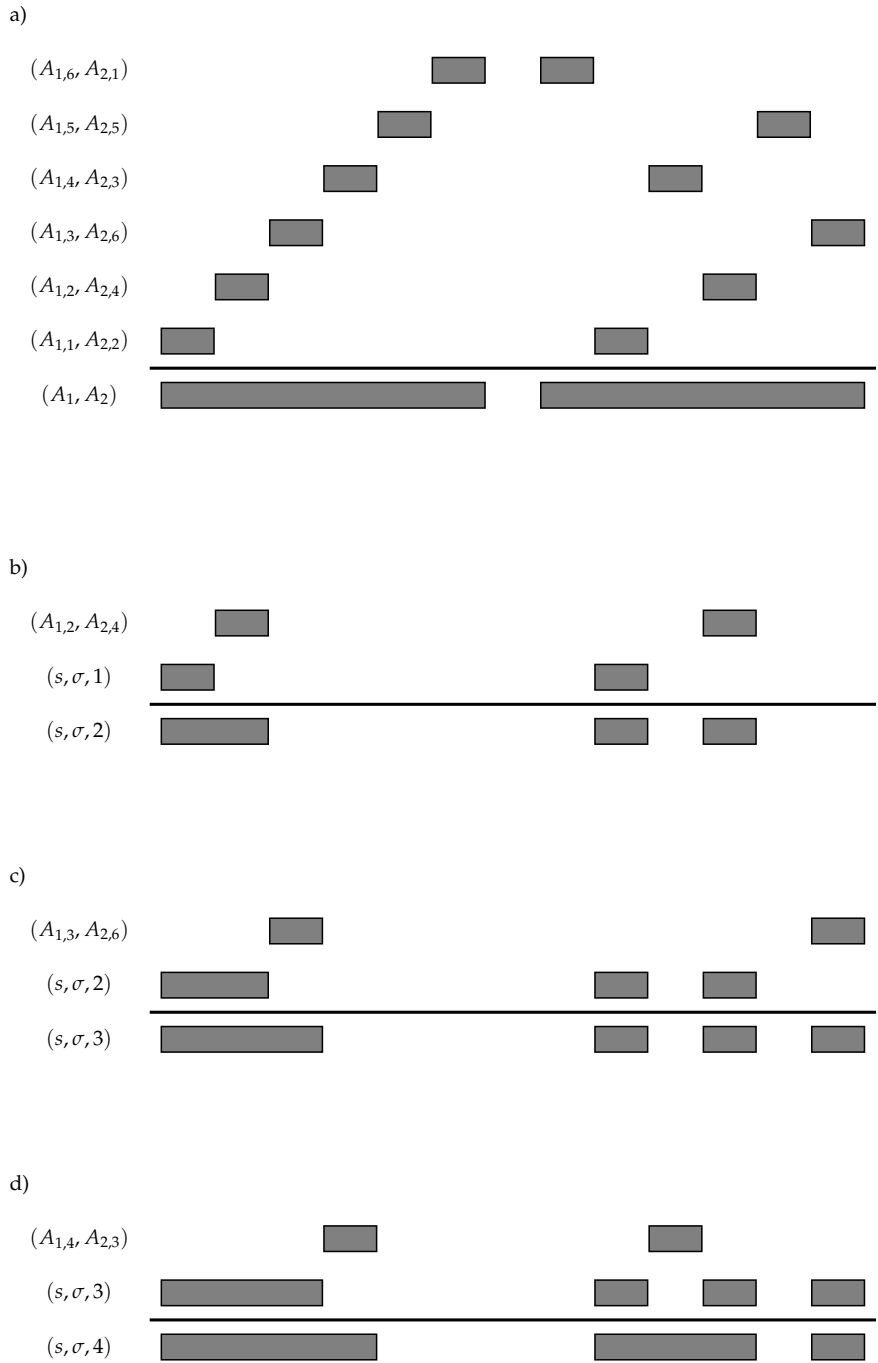

For reasons of computational efficiency, it is advantageous 
to break the parsing of a synchronous rule into several steps, 
adding linked nonterminals one pair at each step,
according to some fixed total ordering, which we call 
a \termdef{linear} parsing strategy. The result of 
the partial analyses obtained at each step is represented by
means of a data structure which we call a \termdef{state}.
To provide a concrete example, let us choose the linear parsing strategy 
$\sigma$ of gathering all the $A_{1k}$'s on the first component 
of rule $s$ from left to right.  At the first step we then collect 
linked nonterminals $A_{1,1}, A_{2,\pi^{-1}(1)} = A_{2,2}$ and construct
the partial analysis represented by the state  
$\langle (s, \sigma, 1), (i_{1,1}, j_{1,1}), (i_{2,2}, j_{2,2}) \rangle$, 
meaning that $A_{1,1}$ spans substring $w_1[i_{1,1}, j_{1,1}]$ 
and $A_{2,2}$ spans substring $w_2[i_{2,2}, j_{2,2}]$.
The first element in the state, $(s, \sigma, 1)$, indicates that
this state is generated from synchronous rule $s$ after the first combination step,
assuming our current strategy $\sigma$.  We refer to this first element as
the \termdef{type} of the state.  

Assuming now that $j_{1,1} = i_{1,2}$, at the second step we 
add to our partial analysis the linked nonterminals $A_{1,2}, A_{2,4}$, 
as shown in Figure~\ref{fig:strategy}b.  We construct
a new state $\langle (s, \sigma, 2),$ $(i_{1,1}, j_{1,2}),$ $(i_{2,2}, j_{2,2}),$ 
$(i_{2,4}, j_{2,4}) \rangle$, meaning that 
$A_{1,1}, A_{1,2}$ together span $w_1[i_{1,1}, j_{1,2}]$, 
$A_{2,2}$ spans $w_2[i_{2,2}, j_{2,2}]$ and 
$A_{2,4}$ spans $w_2[i_{2,4}, j_{2,4}]$.  
Note here that the specific value of $j_{1,1} = i_{1,2}$ is dropped from 
the description of the state, since it will not be
referenced by any further step based on the associated partial analysis.
After adding the third pair of linked nonterminals $A_{1,3}, A_{2,6}$,
and assuming $j_{1,2} = i_{1,3}$,
we create state $\langle (s, \sigma, 3),$ $(i_{1,1}, j_{1,3}),$ $(i_{2,2}, j_{2,2}),$
$(i_{2,4}, j_{2,4}),$ $(i_{2,6}, j_{2,6}) \rangle$, 
spanning four separate substrings of the input,
as shown in Figure~\ref{fig:strategy}c.
Assume now that $j_{2,2} = i_{2,3}$ and $j_{2,3} = i_{2,4}$. 
After adding the linked nonterminals $A_{1,4}, A_{2,3}$, we have that 
the span of $A_{2,3}$ fills in the gap between the spans of 
the previously parsed nonterminals $A_{2,2}$ and $A_{2,4}$,
as shown in Figure~\ref{fig:strategy}d.  We can then collapse
these three spans into a single string, obtaining a new state
$\langle (s, \sigma, 4),$ $(i_{1,1}, j_{1,4}),$ $(i_{2,2}, j_{2,4}),$
$(i_{2,6}, j_{2,6}) \rangle$ which spans three separate substrings of the input.  
Finally, assuming that $j_{2,1} = i_{2,2}$ and $j_{2,4} = i_{2,5}$,
at the next two steps states of type $(s, \sigma, 5)$ and $(s, \sigma, 6)$ 
can be constructed, each spanning two substrings only.

We refer below to the number of substrings spanned by a state
as the \termdef{fan-out} of the state (this notion will be 
formally defined later).  The above example shows that 
the fan-out of each state depends on the 
parsing strategy that we are adopting. 

Bottom-up dynamic programming algorithms for the parsing problem 
for SCFGs are designed 
on the basis of the above state representation for partial analyses. 
These algorithms store in some appropriate data structure 
the states that have already been constructed, 
and then retrieve and combine states in order to construct new states. 
Let $n$ be the maximum length between the input strings $w_1$ and $w_2$.
Because a state with fan-out $f$ may have $\order{n^{2f}}$ instantiations, 
fan-out provides a way of bounding the space complexity of our algorithm.
When we use linear parsing strategies, fan-out is also relevant 
in assessing upper bounds on time complexity.  Consider the basic step of
adding the $(k+1)$-th pair of linked nonterminals to a state of type $(s, \sigma, k)$ 
having fan-out $f$.   As before, we have $\order{n^{2f}}$ instantiations 
for states of type $(s, \sigma, k)$.  We also have $\order{n^4}$ 
possible instantiations for the span of the $(k+1)$-th pair, 
since any pair of linked nonterminals spans exactly two substrings.
However, the $(k+1)$-th pair might share some of its boundaries 
with the boundaries of the state of type $(s, \sigma, k)$,
depending on the permutation associated with the synchronous rule $s$. 
If we define $\delta(s, \sigma, k)$ as the number of independent boundaries 
in the $(k+1)$-th pair, with $0 \leq \delta(s, \sigma, k) \leq 4$, 
we have that all executions of the above step can 
be carried out in time $\order{n^{2f+\delta(s, \sigma, k)}}$.

If we want to optimize the space or the time complexity of a dynamic
programming algorithm cast in the above framework, 
we need to search for a parsing strategy that minimizes the maximum
fan-out of its states, or else a strategy that minimizes 
the maximum value of the sum of the fan-out and the $\delta()$ function.  
This needs to be done for each individual synchronous rule in the grammar. 
In our running example, the critical step is provided by state 
type $(s, \sigma, 3)$ with fan-out 4, leading to space complexity 
of $\order{n^{8}}$.  Furthermore, the combination of state type $(s, \sigma, 2)$ 
(fan-out 3) with linked pair $A_{1,3}, A_{2,6}$ ($\delta(s, \sigma, 3) = 3$) 
leads to time complexity of $\order{n^{9}}$.
However, we can switch to a different strategy $\sigma'$, by collecting 
linked nonterminal pairs in $s$ in the order given by the left components
$A_{1,4}, A_{1,5}, A_{1,2}, A_{1,3}, A_{1,1}, A_{1,6}$.  
According to this new strategy, states of types $(s, \sigma', 2)$
and $(s, \sigma', 3)$ both have fan-out three, while every other state type 
has fan-out two.  This leads to space complexity of $\order{n^{6}}$ for 
rule $s$.  It is not difficult to see that this strategy is also space optimal
for rule $s$, on the basis of the observation that any grouping of two 
linked nonterminals $A_{1,k}, A_{2,\pi^{-1}(k)}$ and 
$A_{1,k'}, A_{2,\pi^{-1}(k')}$ with $k, k' \in [6]$ and $k \neq k'$,
has a fan-out of at least three.
As for the time complexity,  the critical step is the combination 
of state type $(s, \sigma', 2)$ (fan-out 3) with linked pair 
$A_{1,2}, A_{2,4}$ ($\delta(s, \sigma', 3) = 2$), leading to 
time complexity of $\order{n^{8}}$ for this strategy.  
It is not difficult to verify that $\sigma'$ is also a time optimal strategy.

\subsection{Fan-out and Optimization of Parsing}
\label{ssec:fo}

What we have informally shown in the previous section is that, 
under the outlined framework based on state representations for partial analyses, 
we can exploit the properties of the specific permutation 
of a given synchronous rule to reduce the maximum fan-out of states, 
and hence improve the space and time complexity of our parsing algorithms.
In this section, we provide formal definitions of these concepts, 
and introduce the computational problem that is investigated in this article. 

Let $s$ be a synchronous rule with $r > 2$ linked nonterminals, 
and let $\pi_s$ be the permutation representing $s$.
A \termdef{linear parsing strategy} for $s$ is defined as 
a permutation $\sigma_s$ of the set $[r]$.  The intended meaning 
of $\sigma_s$ is that, when parsing the rule $s$, 
the pair of linked nonterminals $A_{1,\sigma(k)}, A_{2,\pi^{-1}(\sigma(k))}$
is collected at the $k$-th step, for each $k \in [r]$, 
as shown in Figure~\ref{fig:strategy}. 

Let us consider state type $(s, \sigma_s, k)$, $k \in [r]$,
defined as in Section~\ref{ssec:membership}.  
We define the count of \termdef{internal boundaries} for $(s, \sigma_s, k)$ as
\begin{align}
\ibd(\pi_s, \sigma_s, k) = & \left| \left\{ h : \sigma_s^{-1}(h) \leq k \wedge \sigma_s^{-1}(h+1) > k \right\} \right| + &   \nonumber \\
& \left|\left\{ h : \sigma_s^{-1}(h) > k \wedge \sigma_s^{-1}(h+1) \leq k \right\}\right| +\nonumber \\
& \left|\left\{ h : \sigma_s^{-1}(\pi_s^{-1}(h)) \leq k \wedge \sigma_s^{-1}(\pi_s^{-1}(h+1)) > i \right\}\right| + \nonumber \\
& \left|\left\{ h : \sigma_s^{-1}(\pi_s^{-1}(h)) > k \wedge \sigma_s^{-1}(\pi_s^{-1}(h+1)) \leq k \right\}\right| \; . 
\label{eq:ibd}
\end{align}
In the definition above, the term $\left|\left\{ h : \sigma_s^{-1}(j) 
\leq i \wedge \sigma_s^{-1}(h+1) > k \right\}\right|$
counts the number of nonterminals $A_{1,i}$ that have already 
been collected at step $k$ and such that nonterminal $A_{1,i+1}$
has not yet been collected.  Informally, this term 
counts the nonterminals in the right-hand side of the first 
CFG rule component of $s$ that represent right internal boundaries
of the span of a state of type $(s, \sigma_s, k)$. 
The second term in the definition counts the number of nonterminals 
in the right-hand side of the same rule component that represent left 
internal boundaries.
Similarly, the remaining two terms count right and left internal
boundary nonterminals, respectively, 
in the right-hand side of the second CFG rule component of $s$. 

For state type $(s, \sigma_s, k)$, $k \in [r]$, we also define 
the count of \termdef{external boundaries} as
\begin{align}
\ebd(\pi_s, \sigma_s, k) = & I(\sigma_s^{-1}(1) \le k) +
 I(\sigma_s^{-1}(n) \le k) + \nonumber \\
& I(\sigma_s^{-1}(\pi_s^{-1}(1)) \le k) +
 I(\sigma_s^{-1}(\pi_s^{-1}(n)) \le k) \; . 
\label{eq:ebd}
\end{align}
The indicator functions $I()$ count the 
number of nonterminals that are placed at the left and right
ends of the right-hand sides of the two rule components
and that have already been collected at step $k$.
Informally, the sum of these functions counts 
the nonterminals that represent external boundaries
of the span of state type $(s, \sigma_s, k)$. 

Finally, the \termdef{fan-out} of state type $(s, \sigma_s, k)$ is defined as
\begin{align}
\fo(\pi_s, \sigma_s, k) = & \frac{1}{2} 
  (\ibd(\pi_s, \sigma_s, k) + \ebd(\pi_s, \sigma_s, k)) \; . 
\label{eq:fo}
\end{align}
Dividing the total number of boundaries by two gives the number of 
substrings spanned by the state type $(s, \sigma_s, k)$.
Observe that the fan-out at step $k$ is a function of both the permutation $\pi_s$
associated with the SCFG rule $s$, and the linear parsing strategy $\sigma_s$.

As discussed in Section~\ref{ssec:membership}, the fan-out at step 
$k$ gives space and time bounds 
on the parsing algorithm relative to that step and parsing strategy $\sigma_s$.  
Thus the complexity of the parsing algorithm relative to synchronous rule $s$ 
depends on the fan-out at the most complex step of $\sigma_s$.  We wish to
find, for an input synchronous rule $s$ with associated permutation $\pi_s$, 
the linear parsing strategy that minimizes quantity
\begin{equation}
 \min_{\sigma} \; \max_{k \in [r]} \; \fo(\pi_s, \sigma, k) \; , 
\label{eq:minfanout}
\end{equation}
where $\sigma$ ranges over all possible linear parsing strategies for $s$. 
Our main result in this article is that this minimization problem is NP-hard.
This is shown by first proving that the optimization of the 
$\ibd(\pi_s, \sigma_s, k)$ component of the fan-out is NP-hard, in the next section, 
and then by extending the result to the whole fan-out in a successive section.

\section{Permutation Multigraphs and Cutwidth}
\label{sec:internal_boundaries}

With the goal of showing that the minimization problem 
in~(\ref{eq:minfanout}) is NP-hard, in this section we investigate 
the minimization problem for the $\ibd(\pi_s, \sigma_s, k)$ component of the fan-out, 
defined in~(\ref{eq:ibd}).  More precisely, given as input a
synchronous rule $s$ with $r > 2$ nonterminals and with associated permutation $\pi_s$, 
we investigate a decision problem associated with the computation of the quantity
\begin{equation}
 \min_{\sigma} \; \max_{k \in [r]} \; \ibd(\pi_s, \sigma, k) \; , 
\label{eq:minibd}
\end{equation}
where $\sigma$ ranges over all possible linear parsing strategies for $s$. 
We do this by introducing a multigraph representation for the synchronous 
rule $s$, and by studying the so-called cutwidth problem for such a multigraph.

\subsection{Permutation Multigraphs}
\label{ssec:pmg}

Our strategy for proving the NP-hardness of
the optimization problem in~(\ref{eq:minibd}) will be to reduce 
to the problem of finding the cutwidth of a certain class of multigraphs,
which represent the relevant structure of the input synchronous 
rule.  In this section we introduce this class of multigraphs, and
discuss its relation with synchronous rules. 
We denote undirected multigraphs as pairs $G = (V,E)$, 
with set of nodes $V$ and multiset of edges $E$.  

A \termdef{permutation multigraph} is a
multigraph $G = (V, A \uplus B)$ such that both $P_A = (V,A)$ and $P_B =
(V,B)$ are Hamiltonian paths, and $\uplus$ is the merge operation 
defined for multisets.  In the following, the edges in
$A$ will be called \termdef{red}, the edges 
in $B$ will be called \termdef{green}.

A permutation multigraph $G = (V, A \uplus B)$ can be thought of as encoding some permutation:
if we identify nodes in $V$ with integers in $[\size{V}]$ according to their
position on path $A$, the order of vertices along path $B$
defines a permutation of set $[\size{V}]$.  
We can therefore use a permutation multigraph to encode the permutation associated 
with a given synchronous rule.  
More precisely, let $s$ be a synchronous rule of the form
\begin{eqnarray}
\label{e:rule_with_pi}
s: && [A_1 \de u_{1,0} A_{1,1}^\boxnum{$1$} u_{1,1} 
         \cdots u_{1,r-1} A_{1,r}^\boxnum{$r$} u_{1,r}, \nonumber \\
   && \; A_2 \de u_{2,0} A_{2,1}^\boxnum{$\pi_s(1)$} u_{2,1} 
         \cdots u_{2,r-1} B_{2,r}^\boxnum{$\pi_s(r)$} u_{2,r}] \; ,
\end{eqnarray}
where $r \geq 2$, $u_{1,i}, u_{2,i} \in \myterm^\ast$ for each $i \in [r]_0$ 
and $\pi_s$ is a permutation of the set $[r]$.
We associate with $s$ the permutation multigraph 
$G_s = (V_s, E_{s,A} \uplus E_{s,B})$ defined as
\begin{itemize}
\item $V_s = \{(A_{1,i}, A_{2,\pi_s^{-1}(i)}) : i \in [r] \}$;
\item $E_{s,A} = \{((A_{1,i},A_{2,\pi_s^{-1}(i)}), (A_{1,j},A_{2,\pi_s^{-1}(j)})) : 
  i,j \in [r] \wedge |i-j| = 1\}$;
\item $E_{s,B} = \{((A_{1,\pi_s(i)},A_{2,i}), (A_{1,\pi_s(j)},A_{2,j})) : 
  i,j \in [r] \wedge |i-j| = 1\}$.
\end{itemize}
To see that $G_s$ is a permutation multigraph, observe that $G_s$ is 
the superposition of the following two Hamiltonian paths
\begin{itemize}
\item $\langle (A_{1,1}, A_{2,\pi_s^{-1}(1)}),((A_{1,2}, A_{2,\pi_s^{-1}(2)}),
  \ldots,(A_{1,r-1}, A_{2,\pi_s^{-1}(r-1)}), (A_{1,r}, A_{2,\pi_s^{-1}(r)})\rangle$;
\item $\langle (A_{1,\pi_s(1)}, A_{2,1}), (A_{1,\pi_s(2)}, A_{2,2}), 
  \ldots, (A_{1,\pi_s(r-1)}, A_{2,r-1}), (A_{1,\pi_s(r)}, A_{2,r})\rangle$.
\end{itemize}
An example permutation multigraph is shown in Figure~\ref{fig:graph},
with one Hamiltonian path shown above and one below the vertices.

\begin{figure}
\begin{center}
\resizebox{\linewidth}{!}{
\begin{tikzpicture}[scale=1]
\draw (0,0) node(1) {$(A_{1,1},A_{2,2})$};
\draw (3,0) node(2) {$(A_{1,2},A_{2,4})$};
\draw (6,0) node(3) {$(A_{1,3},A_{2,6})$};
\draw (9,0) node(4) {$(A_{1,4},A_{2,3})$};
\draw (12,0) node(5) {$(A_{1,5},A_{2,5})$};
\draw (15,0) node(6) {$(A_{1,6},A_{2,1})$};
\draw[color=red] (1) .. controls +(0.5,0.5) and +(-0.5,0.5) .. (2);
\draw[color=red] (2) .. controls +(0.5,0.5) and +(-0.5,0.5) .. (3);
\draw[color=red] (3) .. controls +(0.5,0.5) and +(-0.5,0.5) .. (4);
\draw[color=red] (4) .. controls +(0.5,0.5) and +(-0.5,0.5) .. (5);
\draw[color=red] (5) .. controls +(0.5,0.5) and +(-0.5,0.5) .. (6);
\draw[color=green] (6) .. controls +(0,-2.5) and +(0,-2.5) .. (1);
\draw[color=green] (1) .. controls +(0,-1.5) and +(0,-1.5) .. (4);
\draw[color=green] (4) .. controls +(0,-1) and +(0,-1) .. (2);
\draw[color=green] (2) .. controls +(0,-1.5) and +(0,-1.5) .. (5);
\draw[color=green] (5) .. controls +(0,-1) and +(0,-1) .. (3);
\end{tikzpicture}}
\end{center}
\caption{The permutation multigraph corresponding to the SCFG rule $s$
of eqn.~(\ref{eq:rule-s})}
\label{fig:graph}
\end{figure}
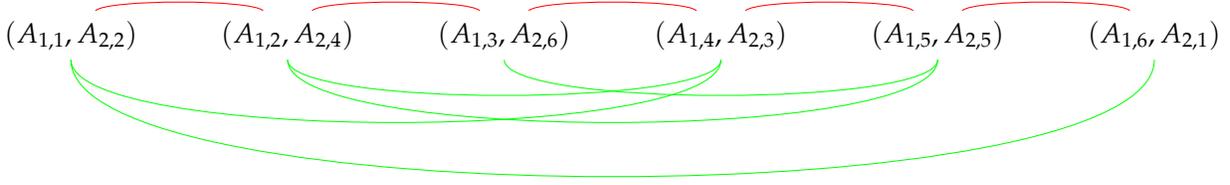

We shall now discuss a mathematical relation between internal boundary
counts for states associated with linear parsing strategies 
for the synchronous rule $s$ and width values for the permutation multigraph $G_s$.  
We first recall the definition of the width and cutwidth of a graph and a multigraph.
Let $G = (V,E)$ be an undirected (multi)graph such that $\size{V} = n >1$.  
A \termdef{linear arrangement} 
of $G$ is a bijective mapping $\nu$ from $V$ to $[n]$.  
We call \termdef{positions} the integer values of $\nu$.  
For any $i \in [n-1]$, the \termdef{width} of $G$ at $i$ with
respect to $\nu$, denoted by $\wg(G,\nu,i)$, is defined as
$\size{\{(u,v) \in E : \nu(u) \leq i < \nu(v)\}}$.   
In the case of a multigraph, the
size of the previous set should be computed taking into account multiple occurrences.  
Informally, $\wg(G,\nu,i)$ is the number of distinct edges crossing over 
the gap between positions $i$ and $i+1$ in the linear arrangement $\nu$.  
To simplify the notation below, we also let $\wg(G,\nu,n) = 0$.  
The \termdef{cutwidth} of $G$ is then defined as 
\begin{eqnarray*}
\cw(G) & = & \min_{\nu} \; \max_{i \in [n]} \; \wg(G,\nu,i) \; ,
\end{eqnarray*}
where $\nu$ ranges over all possible linear arrangements of $G$. 
The cutwidth of the multigraph of Figure~\ref{fig:graph} is
six, which is achieved between $(A_{1,3},A_{2,6})$ and $(A_{1,4},A_{2,3})$ in the linear 
arrangement shown.

Let us now consider synchronous rule $s$ in (\ref{e:rule_with_pi})
and the associated permutation $\pi_s$, 
and let $\sigma_s$ be some linear parsing strategy 
defined for $s$.  The \termdef{linear arrangement associated with} $\sigma_s$
is the linear arrangement $\nu_s$ for permutation multigraph 
$G_s = (V_s, E_s)$ defined as follows.
For each $i \in [r]$, $\nu_s((A_{1,i}, A_{2,\pi_s^{-1}(i)})) = k$ if and only if 
$\sigma_s(k) = (A_{1,i}, A_{2,\pi_s^{-1}(i)})$. 
The following relation motivates our investigation of the cutwidth problem 
for permutation multigraphs in the remaining part of this section. 

\begin{lemma}
\label{lem:ib-cutwidth}
Let $s$ be a synchronous rule with $r > 2$ linked nonterminals, and let
$\sigma_s$ be a linear parsing strategy for $s$.  
Let $\pi_s$ and $G_s$ be the permutation and the permutation multigraph,
respectively, associated with $s$, and let $\nu_s$ be the linear arrangement
for $G_s$ associated with $\sigma_s$.  For every $i \in [r]$ we have 
\begin{align*}
\wg(G_s, \nu_s, i) = \ibd(\pi_s, \sigma_s, i) \;.
\end{align*}
\end{lemma}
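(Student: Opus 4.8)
The plan is to establish the identity separately for each position $i\in[r]$, exploiting the two-colour structure of $G_s$. Since its edge multiset splits as $E_{s,A}\uplus E_{s,B}$, the width $\wg(G_s,\nu_s,i)$ is the number of red edges crossing the gap between positions $i$ and $i+1$ plus the number of green edges crossing it, each counted with multiplicity. On the right-hand side, the four summands of (\ref{eq:ibd}) fall into a red pair (the first two terms, attached to the first rule component) and a green pair (the last two, attached to the second component). I would prove $\wg=\ibd$ by matching the red crossings to the red pair and the green crossings to the green pair, and then adding the two identities.

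The bridge between the two sides is the reading of $\nu_s$: by definition the linked pair carrying synchronisation index $t$ sits at position $\sigma_s^{-1}(t)$, so ``position $\le i$'' means exactly ``collected within the first $i$ steps of $\sigma_s$.'' Hence an edge of $G_s$ crosses the gap at $i$ precisely when one of its endpoints has been collected by step $i$ and the other has not --- which is the defining condition of an internal boundary. For the red edges this is immediate: the red edge indexed by $h$ joins the pairs with synchronisation indices $h$ and $h+1$, placed at $\sigma_s^{-1}(h)$ and $\sigma_s^{-1}(h+1)$; it crosses the gap at $i$ iff exactly one of these is $\le i$, and separating the two orientations $\sigma_s^{-1}(h)\le i<\sigma_s^{-1}(h+1)$ and $\sigma_s^{-1}(h+1)\le i<\sigma_s^{-1}(h)$ reproduces the first and second summands of (\ref{eq:ibd}) respectively. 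Each crossing red edge contributes to exactly one orientation, so the red crossings sum to these two terms.

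For the green edges the argument is the same once the order of the second component is translated through the permutation: the green Hamiltonian path visits the linked pairs in the order in which they occur in the right component, so its edge indexed by $h$ joins the pairs occupying consecutive positions there, and the identical orientation split yields the last two summands. I expect the only delicate point to be exactly this bookkeeping for the green path --- verifying that ``adjacent in the second component'' maps to consecutive arguments inside the permutation-composed expression of (\ref{eq:ibd}), and getting the direction of $\pi_s$ versus $\pi_s^{-1}$ right --- together with the observation that crossings must be tallied with multiplicity, so that a pair of nodes adjacent on \emph{both} Hamiltonian paths is counted twice on the left and once in each of the red and green pairs on the right. Adding the red and green identities then gives $\wg(G_s,\nu_s,i)=\ibd(\pi_s,\sigma_s,i)$ for every $i\in[r]$, which is the claim.
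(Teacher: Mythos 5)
Your proposal is correct and follows exactly the same route as the paper's own (much terser) proof: the paper simply observes that the first two terms of~(\ref{eq:ibd}) count the crossing edges of $E_{s,A}$ (your red pair) and the last two terms count the crossing edges of $E_{s,B}$ (your green pair), which is precisely your decomposition. Your additional care about multiplicity for edges lying on both Hamiltonian paths, and about the $\pi_s$ versus $\pi_s^{-1}$ bookkeeping in the second component, fills in detail the paper leaves implicit but does not change the argument.
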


\begin{proof}
The lemma follows from the definition of the internal boundary function
in~(\ref{eq:ibd}) and the definition of the permutation multigraph.
The first two terms in~(\ref{eq:ibd}) count edges from the set $E_{s,A}$
crossing the gap at position $i$ in linear arrangement 
$\nu_s$ of $G_s$, while the second two terms in~(\ref{eq:ibd})
count edges from the set $E_{s,B}$.
\end{proof}

Note that Lemma~\ref{lem:ib-cutwidth} directly implies the relation
$\cw(G_s) = \min_{\sigma} \max_{i \in [r]} \ibd(\pi_s, \sigma, k)$. 

In the rest of the present section we investigate the 
\termdef{permutation multigraph cutwidth} problem, or \mrgbcw\ for short.
An instance of \mrgbcw\ consists of a permutation multigraph $G$ 
and an integer $k$, and we have to decide whether $\cw(G) \leq k$. 
We show that the \mrgbcw\ problem is NP-complete. We reduce from 
the minimum bisection width problem, or \mcess\ for short. The \mcess\ problem 
consists of deciding whether, given a graph $G$ and an integer $k$, 
there is a partition of the nodes of $G$ into two equal size subsets $V_1$ and $V_2$, 
such that the number of edges with one endpoint in $V_1$ and one endpoint in $V_2$ is not greater than $k$.  It is known that the \mcess\ problem is NP-complete 
even when restricted to cubic graphs, that is, 3-regular graphs, 
with no multi-edges and no self-loops~\cite{Bui:1987}.  
We use this variant of the \mcess\ problem in our reduction. 
Our proof that the \mrgbcw\ problem is NP-complete is a modification of the proof reported in~\cite[Theorem 4.1, p. 434]{Makedon:1985p2188}, showing that the problem of deciding whether an undirected graph has (modified) cutwidth not greater than a given integer is NP-complete for graphs with maximum vertex degree of three.

\subsection{Construction of Permutation Multigraph $G'$}
\label{ssec:gprime}

Throughout the rest of this section, we let 
$G=(V,E)$ be a cubic graph where $V=\{v_1, \ldots, v_n\}$ is the set of its vertices.
Note that $n > 1$ must be even. 
We also let $k$ be an arbitrary positive integer. We construct a permutation multigraph $G'$ and an integer $k'$ such that $\langle G,k\rangle$ is a positive instance of \mcess\ if and only if $\langle G',k'\rangle$ is a positive instance of \mrgbcw\ (this statement will be shown in Sections~\ref{ssec:g2gprime} and~\ref{ssec:gprime2g}).

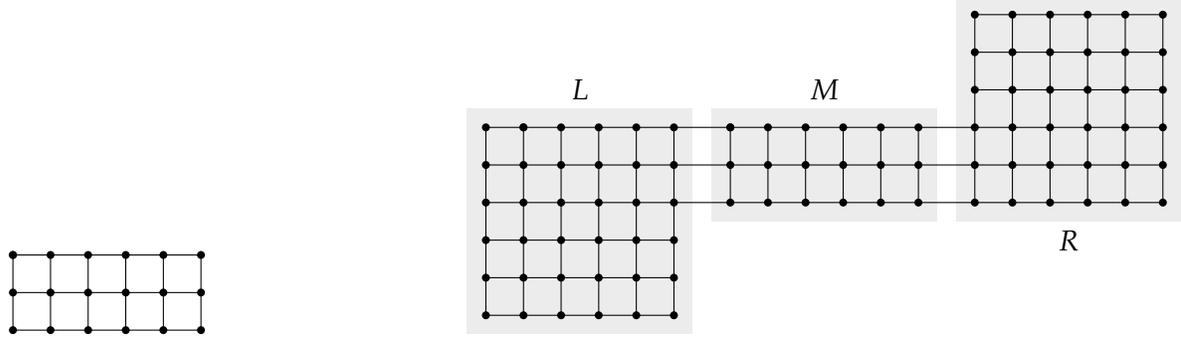
\begin{figure}[t]
\begin{center}
\qquad \begin{tikzpicture}[scale=1]
\draw (1.25,2.5)-- (1.25,3.5);
\draw (1.75,3.5)-- (1.75,2.5);
\draw (2.25,3.5)-- (2.25,2.5);
\draw (2.75,3.5)-- (2.75,2.5);
\draw (3.25,3.5)-- (3.25,2.5);
\draw (3.75,3.5)-- (3.75,2.5);
\draw (1.25,3.5)-- (3.75,3.5);
\draw (3.75,2.5)-- (1.25,2.5);
\draw (1.25,3)-- (3.75,3);
\fill (1.25,3.5) circle (1.5pt);
\fill (1.25,3) circle (1.5pt);
\fill (1.75,3) circle (1.5pt);
\fill (1.75,3.5) circle (1.5pt);
\fill (2.25,3.5) circle (1.5pt);
\fill (2.25,3) circle (1.5pt);
\fill (2.75,3.5) circle (1.5pt);
\fill (2.75,3) circle (1.5pt);
\fill (3.25,3.5) circle (1.5pt);
\fill (3.25,3) circle (1.5pt);
\fill (3.75,3.5) circle (1.5pt);
\fill (3.75,3) circle (1.5pt);
\fill (1.25,2.5) circle (1.5pt);
\fill (1.75,2.5) circle (1.5pt);
\fill (2.25,2.5) circle (1.5pt);
\fill (2.75,2.5) circle (1.5pt);
\fill (3.25,2.5) circle (1.5pt);
\fill (3.75,2.5) circle (1.5pt);
\fill (1.25,3.5) circle (1.5pt);
\end{tikzpicture} \qquad \hfill \begin{tiny}
\begin{tikzpicture}[scale=1]
\definecolor{zzttqq}{rgb}{0.25,0.25,0.25}
\fill[color=zzttqq,fill=zzttqq,fill opacity=0.1] (-2.25,3.75) -- (0.75,3.75) -- (0.75,0.75) -- (-2.25,0.75) -- cycle;
\fill[color=zzttqq,fill=zzttqq,fill opacity=0.1] (4.25,5.25) -- (4.25,2.25) -- (7.25,2.25) -- (7.25,5.25) -- cycle;
\fill[color=zzttqq,fill=zzttqq,fill opacity=0.1] (1,3.75) -- (4,3.75) -- (4,2.25) -- (1,2.25) -- cycle;

\draw (-2,3.5)-- (7,3.5);

\draw (-2,3)-- (7,3);
\draw (-2,2.5)-- (7,2.5);
\draw (-2,2)-- (0.5,2);
\draw (-2,1.5)-- (0.5,1.5);
\draw (-2,1)-- (0.5,1);
\draw (4.5,5)-- (7,5);
\draw (7,4.5)-- (4.5,4.5);
\draw (4.5,4)-- (7,4);
\draw (-2,3.5)-- (-2,1);
\draw (-1.5,3.5)-- (-1.5,1);
\draw (-1,1)-- (-1,3.5);
\draw (-0.5,3.5)-- (-0.5,1);
\draw (0,1)-- (0,3.5);
\draw (0.5,1)-- (0.5,3.5);
\draw (1.25,2.5)-- (1.25,3.5);
\draw (1.75,3.5)-- (1.75,2.5);
\draw (2.25,3.5)-- (2.25,2.5);
\draw (2.75,3.5)-- (2.75,2.5);
\draw (3.25,3.5)-- (3.25,2.5);
\draw (3.75,3.5)-- (3.75,2.5);
\draw (4.5,5)-- (4.5,2.5);
\draw (5,2.5)-- (5,5);
\draw (5.5,5)-- (5.5,2.5);
\draw (6,2.5)-- (6,5);
\draw (6.5,5)-- (6.5,2.5);
\draw (7,2.5)-- (7,5);
\fill (-2,3.5) circle (1.5pt);
\fill (-2,3) circle (1.5pt);
\fill (-2,2.5) circle (1.5pt);
\fill (-2,2) circle (1.5pt);
\fill (-1.5,3.5) circle (1.5pt);
\fill (-1,3.5) circle (1.5pt);
\fill (-0.5,3.5) circle (1.5pt);
\fill (0,3.5) circle (1.5pt);
\fill (0.5,3.5) circle (1.5pt);
\fill (-1.5,3) circle (1.5pt);
\fill (-1,3) circle (1.5pt);
\fill (-0.5,3) circle (1.5pt);
\fill (0,3) circle (1.5pt);
\fill (0.5,3) circle (1.5pt);
\fill (-1.5,2.5) circle (1.5pt);
\fill (-1,2.5) circle (1.5pt);
\fill (-0.5,2.5) circle (1.5pt);
\fill (0,2.5) circle (1.5pt);
\fill (0.5,2.5) circle (1.5pt);
\fill (-1.5,2) circle (1.5pt);
\fill (-1,2) circle (1.5pt);
\fill (-0.5,2) circle (1.5pt);
\fill (0,2) circle (1.5pt);
\fill (0.5,2) circle (1.5pt);
\fill (1.25,3.5) circle (1.5pt);
\fill (1.25,3) circle (1.5pt);
\fill (1.75,3) circle (1.5pt);
\fill (1.75,3.5) circle (1.5pt);
\fill (2.25,3.5) circle (1.5pt);
\fill (2.25,3) circle (1.5pt);
\fill (2.75,3.5) circle (1.5pt);
\fill (2.75,3) circle (1.5pt);
\fill (3.25,3.5) circle (1.5pt);
\fill (3.25,3) circle (1.5pt);
\fill (3.75,3.5) circle (1.5pt);
\fill (3.75,3) circle (1.5pt);
\fill (4.5,3) circle (1.5pt);
\fill (4.5,3.5) circle (1.5pt);
\fill (4.5,4) circle (1.5pt);
\fill (4.5,4.5) circle (1.5pt);
\fill (5,4.5) circle (1.5pt);
\fill (5.5,4.5) circle (1.5pt);
\fill (6,4.5) circle (1.5pt);
\fill (6.5,4.5) circle (1.5pt);
\fill (7,4.5) circle (1.5pt);
\fill (7,4) circle (1.5pt);
\fill (7,3.5) circle (1.5pt);
\fill (7,3) circle (1.5pt);
\fill (6.5,3) circle (1.5pt);
\fill (6,3) circle (1.5pt);
\fill (5.5,3) circle (1.5pt);
\fill (5,3) circle (1.5pt);
\fill (5,3.5) circle (1.5pt);
\fill (5.5,3.5) circle (1.5pt);
\fill (6,3.5) circle (1.5pt);
\fill (6.5,3.5) circle (1.5pt);
\fill (6.5,4) circle (1.5pt);
\fill (6,4) circle (1.5pt);
\fill (5.5,4) circle (1.5pt);
\fill (5,4) circle (1.5pt);
\fill (1.25,2.5) circle (1.5pt);
\fill (1.75,2.5) circle (1.5pt);
\fill (2.25,2.5) circle (1.5pt);
\fill (2.75,2.5) circle (1.5pt);
\fill (3.25,2.5) circle (1.5pt);
\fill (3.75,2.5) circle (1.5pt);
\fill (4.5,2.5) circle (1.5pt);
\fill (5,2.5) circle (1.5pt);
\fill (5.5,2.5) circle (1.5pt);
\fill (6,2.5) circle (1.5pt);
\fill (6.5,2.5) circle (1.5pt);
\fill (7,2.5) circle (1.5pt);
\fill (-2,1.5) circle (1.5pt);
\fill (-1.5,1.5) circle (1.5pt);
\fill (-1,1.5) circle (1.5pt);
\fill (-0.5,1.5) circle (1.5pt);
\fill (0,1.5) circle (1.5pt);
\fill (0.5,1.5) circle (1.5pt);
\fill (-2,1) circle (1.5pt);
\fill (-1.5,1) circle (1.5pt);
\fill (-1,1) circle (1.5pt);
\fill (-0.5,1) circle (1.5pt);
\fill (0,1) circle (1.5pt);
\fill (0.5,1) circle (1.5pt);
\fill (4.5,5) circle (1.5pt);
\fill (5,5) circle (1.5pt);
\fill (5.5,5) circle (1.5pt);
\fill (6,5) circle (1.5pt);
\fill (6.5,5) circle (1.5pt);
\fill (7,5) circle (1.5pt);
\fill (-1.5,3.5) circle (1.5pt);
\fill (1.25,3.5) circle (1.5pt);
\fill (-0.75,4) node {\normalsize $L$};
\fill (5.75,2) node {\normalsize $R$};
\fill (2.5,4) node {\normalsize $M$};
\end{tikzpicture}
\end{tiny} \qquad
\end{center}
\caption{The $\grid[3,6]$ grid (left), whose cutwidth is 4, and the composed grid $\cgrid[6,6,3,6]$ (right) with grids $L$, $M$ and $R$ shaded.}
\label{fig:nodegadget}
\end{figure}

Let $\ggh$ and $\ggw$ be positive integers.  We will make use of a grid gadget $X = \grid[\ggh,\ggw]$ with $\ggh$ rows and $\ggw$ columns; for an example, see the left part of Figure~\ref{fig:nodegadget}. More precisely, for any $h \in [\ggh]$ and for any $w \in [\ggw]$, the grid $X$ includes a node $x^{h,w}$. Moreover, for any $h \in [\ggh]$ and for any $w \in [\ggw-1]$, there is an edge $\left(x^{h,w},x^{h,w+1}\right)$, and, for any $h \in [\ggh-1]$ and for any $w \in [\ggw]$, there is an edge $\left(x^{h,w},x^{h+1,w}\right)$. It is known that, for any $\ggh$ and $\ggw$ greater than 2, $\cw(X) = \min\{\ggh+1, \ggw+1\}$~\cite{DBLP:conf/wg/RolimSV95}. 

For positive integers $\lggh$, $\lggw$, $\mggh$, and $\mggw$ with $\lggh > \mggh$, we will also exploit a \termdef{composed grid} $\cgrid[\lggh,\lggw,\mggh,\mggw]$, which is formed by combining two grid gadgets $L = \grid[\lggh,\lggw]$ and $R = \grid[\lggh,\lggw]$ with a grid gadget $M = \grid[\mggh,\mggw]$, as shown in the right part of Figure~\ref{fig:nodegadget}. The nodes of $L$, $R$, and $M$ will be denoted as $l^{h,w}$, $r^{h,w}$, and $m^{h,w}$, respectively. Besides the edges of the three grids $L$, $R$, and $M$, for any $h \in [\mggh]$, the composed grid $\cgrid[\lggh,\lggw,\mggh,\mggw]$ also includes the edges $\left(l^{h,\lggw}, m^{h,1}\right)$ and $\left(m^{h,\mggw}, r^{\lggh-\mggh+h,1}\right)$.

The target graph $G'$ consists of several grid gadgets. More specifically, it has one grid $G_i=\grid[\ngh,\ngw]$, $i \in [n]$, for each of the $n$ nodes of the source cubic graph $G$. The nodes of $G_i$ will be denoted as $g_i^{h,w}$. In addition, $G'$ has a composed grid $S = \cgrid[\lgh,\lgw,\mgh,\mgw]$.  
For each grid $G_i$, $i \in [n]$, we add to $G'$ a sheaf of $\nee$ edges
connecting distinct nodes in the first or in the last row of
$G_i$ to $\nee$ distinct nodes of the first row of $M$, as 
will be explained in detail below. In addition, for each edge
$(v_i,v_j) \in E$ with $i<j$, we add to $G'$ two edges, each edge connecting a node in
the first row of $G_i$ with a node in the last row of $G_j$. The
choice of all of the above connections will be done in a way that
guarantees that $G'$ is a permutation multigraph.

Before providing a mathematical specification of $G'$, we informally summarize the organization of the edges of $G'$ connecting the $M$ and $G_i$ grids.
When scanning the columns of $M$ from left to right, we will have a first column that has no connection to any of the $G_i$ components, followed by a first block of $\nee$ 
columns with connections to the $G_1$ component, 
followed by a second block of $\nee$ columns with connections to $G_2$, and so on
up to the $n$-th block of $\nee$ columns with connections to $G_n$.
The remaining columns of $M$ do not have any
connection with the $G_i$ components.

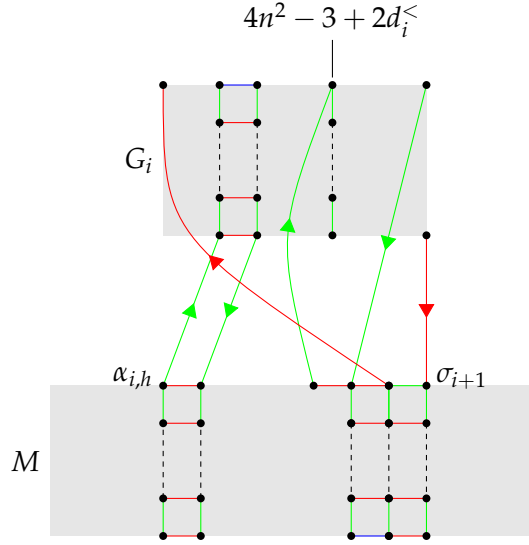
\begin{figure}[t]
\begin{center}
\begin{tikzpicture}[scale=1]
\fill[dotted,fill=black,fill opacity=0.1] (-2,2.25) -- (1.5,2.25) -- (1.5,0.25) -- (-2,0.25) -- cycle;
\fill[dotted,fill=black,fill opacity=0.1] (-3.5,-1.75) -- (3,-1.75) -- (3,-3.75) -- (-3.5,-3.75) -- cycle;
\draw (-2,1.25) node[left] {$G_i$};
\draw (-3.5,-2.75) node[left] {$M$};
\draw[color=green] (-2,-1.75)-- (-2,-2.25);
\draw[color=green] (-2,-3.25)-- (-2,-3.75);
\draw[color=green] (-1.5,-1.75)-- (-1.5,-2.25);
\draw[color=green] (-1.5,-3.25)-- (-1.5,-3.75);
\draw[color=green] (-0.75,2.25)-- (-0.75,1.75);	
\draw[color=green] (-0.75,0.75)-- (-0.75,0.25);
\draw[color=green] (-1.25,2.25)-- (-1.25,1.75);
\draw[color=green] (-1.25,0.75)-- (-1.25,0.25);
\draw[color=blue] (-1.25,2.25)-- (-0.75,2.25);	
\draw[color=red] (-1.25,1.75)-- (-0.75,1.75);
\draw[color=red] (-1.25,0.75)-- (-0.75,0.75);
\draw[color=red] (-1.25,0.25)-- (-0.75,0.25);
\draw[color=red] (-2,-1.75)-- (-1.5,-1.75);	
\draw[color=red] (-2,-2.25)-- (-1.5,-2.25);
\draw[color=red] (-2,-3.25)-- (-1.5,-3.25);
\draw[color=red] (-1.5,-3.75)-- (-2,-3.75);
\draw [dash pattern=on 2pt off 2pt] (-2,-3.25)-- (-2,-2.25);
\draw [dash pattern=on 2pt off 2pt] (-1.5,-3.25)-- (-1.5,-2.25);
\draw [dash pattern=on 2pt off 2pt] (-0.75,1.75)-- (-0.75,0.75);
\draw [dash pattern=on 2pt off 2pt] (-1.25,1.75)-- (-1.25,0.75);
\tikzset{
red/.style={draw=red, postaction={decorate},
decoration={markings,mark=at position .55 with {\arrow[draw=red,fill=red]{triangle 60}}}},
green/.style={draw=green, postaction={decorate}, decoration={markings,mark=at position .55 with {\arrow[draw=green,fill=green]{triangle 60}}}},
blue/.style={draw=blue, postaction={decorate},
decoration={markings,mark=at position .75 with {\arrow[draw=green,fill=green]{triangle 60}},mark=at position .25 with {\arrow[draw=red,fill=red]{triangle 60 reversed}}}}
}
\draw[green] (-2,-1.75)-- (-1.25,0.25);		
\draw[color=green] (0.5,-2.25)-- (0.5,-1.75);	
\draw[color=green] (1,-2.25)-- (1,-1.75);	
\draw[color=red] (0.5,-2.25)-- (1,-2.25);	
\draw[color=green] (1,-3.75)-- (1,-3.25);	
\draw[color=green] (0.5,-3.75)-- (0.5,-3.25);	
\draw[color=red] (0.5,-3.25)-- (1,-3.25);	
\draw[green] (-0.75,0.25)-- (-1.5,-1.75);	
\draw[color=green] (0.25,2.25)-- (0.25,1.75);
\draw[color=green] (0.25,0.75)-- (0.25,0.25);
\draw [dash pattern=on 2pt off 2pt] (0.5,-2.25)-- (0.5,-3.25);
\draw [dash pattern=on 2pt off 2pt] (1,-2.25)-- (1,-3.25);
\draw [dash pattern=on 2pt off 2pt] (1.5,-2.25)-- (1.5,-3.25);
\draw [dash pattern=on 2pt off 2pt] (0.25,1.75)-- (0.25,0.75);
\draw[green] (0.0,-1.75).. controls (-0.5,0.25) .. (0.25,2.25); 
\draw[color=red]  (0.0,-1.75)-- (0.5,-1.75);
\draw[color=red]  (0.5,-1.75)-- (1,-1.75);
\draw[color=green]  (1,-1.75)-- (1.5,-1.75);
\draw[green] (1.5,2.25)-- (0.5,-1.75); 
\draw[red] (1,-1.75) .. controls (-2,0.25) .. (-2,2.25); 
\draw[red] (1.5,0.25) -- (1.5,-1.75); 
\draw[color=green] (1,-1.75)-- (1,-2.25);
\draw[color=green] (1.5,-1.75)-- (1.5,-2.25);
\draw[color=red]  (1.5,-2.25)-- (1,-2.25);
\draw[color=red]  (1,-3.25)-- (1.5,-3.25);
\draw[color=red]  (1,-3.75)-- (1.5,-3.75);
\draw[color=blue]  (1,-3.75)-- (0.5,-3.75);
\draw[color=green] (1.5,-3.75)-- (1.5,-3.25);
\fill (-2,-1.75) circle (1.5pt);
\fill (-2,2.25) circle (1.5pt);
\fill (0.0,-1.75) circle (1.5pt);
\fill (-1.25,2.25) circle (1.5pt);
\fill (-1.25,1.75) circle (1.5pt);
\fill (-1.25,0.75) circle (1.5pt);
\fill (-1.25,0.25) circle (1.5pt);
\fill (-0.75,2.25) circle (1.5pt);
\fill (-1.5,-1.75) circle (1.5pt);
\fill (-0.75,0.75) circle (1.5pt);
\fill (-0.75,0.25) circle (1.5pt);
\fill (-0.75,1.75) circle (1.5pt);
\fill (-2,-2.25) circle (1.5pt);
\fill (-1.5,-2.25) circle (1.5pt);
\fill (-2,-3.25) circle (1.5pt);
\fill (-2,-3.75) circle (1.5pt);
\fill (-1.5,-3.25) circle (1.5pt);
\fill (-1.5,-3.75) circle (1.5pt);
\fill (1.5,2.25) circle (1.5pt);
\fill (1.5,0.25) circle (1.5pt);
\fill (0.5,-1.75) circle (1.5pt);
\fill (0.5,-2.25) circle (1.5pt);
\fill (1,-1.75) circle (1.5pt);
\fill (1,-2.25) circle (1.5pt);
\fill (0.5,-3.25) circle (1.5pt);
\fill (0.5,-3.75) circle (1.5pt);
\fill (1,-3.75) circle (1.5pt);
\fill (0.5,-2.25) circle (1.5pt);
\fill (1,-3.25) circle (1.5pt);
\fill (0.25,0.25) circle (1.5pt);
\fill (0.25,0.75) circle (1.5pt);
\fill (0.25,1.75) circle (1.5pt);
\fill (0.25,2.25) circle (1.5pt);
\fill (1,-1.75) circle (1.5pt);
\fill (1.5,-1.75) circle (1.5pt);
\fill (1,-2.25) circle (1.5pt);
\fill (1.5,-2.25) circle (1.5pt);
\fill (1,-3.25) circle (1.5pt);
\fill (1.5,-3.25) circle (1.5pt);
\fill (1.5,-3.75) circle (1.5pt);
\fill (1,-3.75) circle (1.5pt);

\draw (0.25,2.35)-- (0.25,2.85);	
\draw (0.25,2.75) node[above] {$\nee-3+2d^<_i$};
\draw (1.5,-1.65) node[right] {$\sigma_{i+1}$};
\draw (-2,-1.65) node[left] {$\alpha_{i,h}$};
\end{tikzpicture}
\end{center}
\caption{Block 1 is composed by all columns of $G_i$ connected with $M$ as in the  pattern displayed by the two green columns at the left of position $\nee-3+2d^<_i$.  Block 3 is composed by all columns of $G_i$ starting with the column at position $\nee-3+2d^<_i$ and extending to the right.}
\label{fig:extraedgeconnections1}
\end{figure}

Looking at one of the grids $G_i$, 
the columns are organized into three blocks, when scanning from left to right. 
\begin{itemize}

\item 
{\bf Block 1} (left portion of $G_i$ in Figure \ref{fig:extraedgeconnections1}): 
The first column has two edges connecting to the $M$ component, one 
from its top vertex and one from its bottom vertex, and the 
remaining columns in the block each have a single edge connecting to $M$ 
from the column's bottom vertex.  
This block extends from column with index 1 to column  
$\nee -4 -2d_i^{>}$, where $d_i^{>}$ denotes the number of edges of $G$
of the form $(v_i, v_j)$ with $j>i$, that is, the number of ``forward'' 
neighbors of $i$.  The block therefore contains a total of 
$\nee -3 -2d_i^{>}$ edges connecting $G_i$ to $M$. 

\item 
{\bf Block 2} (Figure \ref{fig:extraedgeconnections2}): 
This block represents the edges from the source graph $G$.
For each edge $(v_i, v_j)$ of $G$ such that $i<j$,
we have two columns each having a single edge connecting to $M$
from the column's bottom vertex, and a single edge connecting to 
the grid $G_j$ from the column's top vertex. 
This is followed by two columns for each edge $(v_i, v_j)$ in $G$ such that $j<i$,
with each column having its bottom vertex connected to the
grid $G_j$ and no connection to $M$.
Block 2 extends from column $\nee -3 -2d_i^{>}$ to column  
$\nee -4 +2d_i^{<}$, where $d_i^{<}$ denotes the number of edges of $G$
of the form $(v_i, v_j)$ with $j<i$, that is, the number of 
``backward'' neighbors of $i$. 
The number of edges connecting $G_i$ to $M$ is $2d_i^{>}$, and the number of edges connecting $G_i$ to grids $G_j$ with $j \neq i$ is $2d_i^{>} + 2d_i^{<} = 6$, where the equality follows from the fact that $G$ is a cubic graph. 

\item 
{\bf Block 3} (right portion of $G_i$ in Figure \ref{fig:extraedgeconnections1}):
The first column 
has the top vertex connected to $M$.  All remaining columns of $G_i$ do not have connections with $M$, with the exception of the rightmost column, which has two edges connecting its top and bottom vertices to $M$. 
This block extends from column with index $\nee -3 +2d_i^{<}$ to column with index 
$\ngw$, and the block contains $3$ edges connecting $G_i$ to $M$. 

\end{itemize}
Altogether, the above blocks provide a total number of edges connecting $G_i$ and $M$ equal to $(\nee -3 -2d_i^{>}) + 2d_i^{>} + 3 = \nee$, as anticipated.  

To define the edges in each of these three blocks, 
we need to introduce some auxiliary notation.   In what follows, for every $i \in [n]$, we denote by $N^<_{i}$ the set of ``backward'' neighbors of $v_i$, that is, $N^<_{i} = \{j : (v_i,v_j) \in E \wedge j \in [i-1] \}$. 
Similarly, the set of ``forward'' neighbors of $v_i$ is the set 
$N^>_{i} = \{j : (v_i,v_j) \in E \wedge j \in [n] \setminus [i] \}$. 
We then have $d^<_i = \size{N^<_{i}}$ and $d^>_i = \size{N^>_{i}}$.
As already observed, we must have $d^>_i+d^<_i=3$ since $G$ is a cubic graph. 
For every $i \in [n]$ and $j \in [i]$ we also define $d^{<,j}_{i} = \size{N^<_{i} \cap [j-1]}$; in words, $d^{<,j}_{i}$ is the number of backward neighbors of $v_i$ having index strictly smaller than $j$. Note that $d^{<,i}_{i} = d^<_{i}$ for every $i \in [n]$.

For every $i \in [n]$ and $h \in [d^>_i]$, we denote by $n^h_i$ the index of the $h$-th forward neighbor of $v_i$ from left-to-right: formally, $n^h_i$ is the value $j$ such that $j \in N^>_{i}$ and $\size{[j-1] \cap N^>_{i}} = h-1$. Finally, for every $i \in [n+1]$, we define $\sigma_i =  \nee(i-1)+1$.
This quantity will be used in the construction of $G'$ below as an offset when locating the index of the next available column, from left to right, in the $M$ component.  For instance, we have $\sigma_i = 1$ since in $M$ the first block with $\nee$ connections to $G_1$ starts at column 2, as already anticipated.  

We are now ready to specify precisely each of the three blocks 
of edges connecting each $G_i$ to the other grids.

\paragraph{Block 1} 
For any $h \in \left[\halfnee-2-d^{>}_i\right]$, $G'$ includes the edges (see Figure~\ref{fig:extraedgeconnections1})
\[
\left(m^{1,\alpha_{i,h}}, g_i^{\ngh,2h-1}\right),
\left(g_i^{\ngh,2h},m^{1,\alpha_{i,h}+1}\right),
\]
where $\alpha_{i,h} = \sigma_i+(2h-1)$.  
In addition to the above edges, there is one edge connecting node $g_i^{1,1}$ with $M$ that is associated with Block 1.  However, in order to simplify the presentation, it is more convenient to list such edge under Block 3 below.  

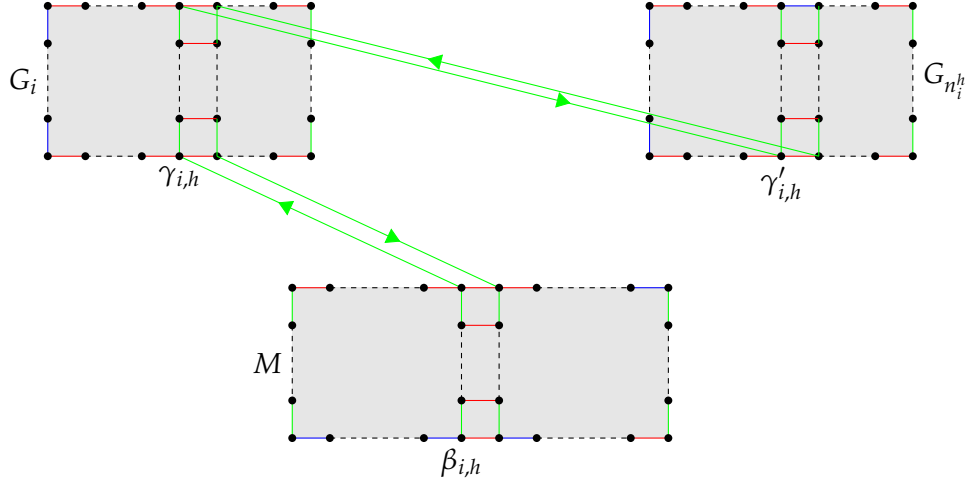
\begin{figure}[t]
\begin{center}
\definecolor{qqqqff}{rgb}{0,0,1}
\begin{tikzpicture}[scale=1]
\fill[dotted,fill=black,fill opacity=0.1] (-2,2.25) -- (1.5,2.25) -- (1.5,0.25) -- (-2,0.25) -- cycle;
\fill[dotted,fill=black,fill opacity=0.1] (6,2.25) -- (9.5,2.25) -- (9.5,0.25) -- (6,0.25) -- cycle;
\fill[dotted,fill=black,fill opacity=0.1] (1.25,-1.5) -- (6.25,-1.5) -- (6.25,-3.5) -- (1.25,-3.5) -- cycle;
\fill (-2,1.25) node[left] {$G_i$};
\fill (9.5,1.25) node[right] {$G_{n^h_i}$};
\fill (1.25,-2.5) node[left] {$M$};
\draw[color=green] (3.5,-1.5)-- (3.5,-2);
\draw[color=green] (3.5,-3)-- (3.5,-3.5);
\draw[color=green] (4,-1.5)-- (4,-2);
\draw[color=green] (4,-3)-- (4,-3.5);
\draw[color=green] (-0.25,2.25)-- (-0.25,1.75);
\draw[color=green] (-0.25,0.75)-- (-0.25,0.25);
\draw[color=green] (7.75,2.25)-- (7.75,1.75);
\draw[color=green] (7.75,0.75)-- (7.75,0.25);
\draw[color=red] (3.5,-1.5)-- (4,-1.5);
\draw[color=red] (3.5,-2)-- (4,-2);
\draw[color=red] (3.5,-3)-- (4,-3);
\draw[color=red] (4,-3.5)-- (3.5,-3.5);
\draw [dash pattern=on 2pt off 2pt] (3.5,-3)-- (3.5,-2);
\draw [dash pattern=on 2pt off 2pt] (4,-3)-- (4,-2);
\draw [dash pattern=on 2pt off 2pt] (-0.25,1.75)-- (-0.25,0.75);
\draw [dash pattern=on 2pt off 2pt] (7.75,1.75)-- (7.75,0.75);
\draw [dash pattern=on 2pt off 2pt] (1.25,-2)-- (1.25,-3);
\draw[color=green] (1.25,-1.5)-- (1.25,-2);
\fill (1.25,-2) circle (1.5pt);
\draw[color=green] (1.25,-3)-- (1.25,-3.5);
\fill (1.25,-3) circle (1.5pt);

\draw [dash pattern=on 2pt off 2pt] (6.25,-2)-- (6.25,-3);
\draw[color=green] (6.25,-1.5)-- (6.25,-2);
\fill (6.25,-2) circle (1.5pt);
\draw[color=green] (6.25,-3)-- (6.25,-3.5);
\fill (6.25,-3) circle (1.5pt);

\draw [dash pattern=on 2pt off 2pt] (1.75,-1.5)-- (3,-1.5);
\draw[color=red] (1.25,-1.5)-- (1.75,-1.5);
\fill (1.75,-1.5) circle (1.5pt);
\draw[color=red] (3,-1.5)-- (3.5,-1.5);
\fill (3,-1.5) circle (1.5pt);

\draw [dash pattern=on 2pt off 2pt] (4.5,-1.5)-- (5.75,-1.5);
\draw[color=red] (4,-1.5)-- (4.5,-1.5);
\fill (4.5,-1.5) circle (1.5pt);
\draw[color=blue] (5.75,-1.5)-- (6.25,-1.5);
\fill (5.75,-1.5) circle (1.5pt);

\draw [dash pattern=on 2pt off 2pt] (1.75,-3.5)-- (3,-3.5);
\draw[color=blue] (1.25,-3.5)-- (1.75,-3.5);
\fill (1.75,-3.5) circle (1.5pt);
\draw[color=blue] (3,-3.5)-- (3.5,-3.5);

\draw (3.5,-3.5) node[below] {$\beta_{i,h}$};

\fill (3,-3.5) circle (1.5pt);

\draw [dash pattern=on 2pt off 2pt] (4.5,-3.5)-- (5.75,-3.5);
\draw[color=blue] (4,-3.5)-- (4.5,-3.5);
\fill (4.5,-3.5) circle (1.5pt);
\draw[color=red] (5.75,-3.5)-- (6.25,-3.5);
\fill (5.75,-3.5) circle (1.5pt);
\draw [dash pattern=on 2pt off 2pt] (-2,1.75)-- (-2,0.75);
\draw[color=blue] (-2,2.25)-- (-2,1.75);
\fill (-2,1.75) circle (1.5pt);
\draw[color=blue] (-2,0.75)-- (-2,0.25);
\fill (-2,0.75) circle (1.5pt);

\draw [dash pattern=on 2pt off 2pt] (1.5,1.75)-- (1.5,0.75);
\draw[color=green] (1.5,2.25)-- (1.5,1.75);
\fill (1.5,1.75) circle (1.5pt);
\draw[color=green] (1.5,0.75)-- (1.5,0.25);
\fill (1.5,0.75) circle (1.5pt);

\draw [dash pattern=on 2pt off 2pt] (-1.5,2.25)-- (-0.75,2.25);
\draw[color=red] (-2,2.25)-- (-1.5,2.25);
\fill (-1.5,2.25) circle (1.5pt);
\draw[color=red] (-0.75,2.25)-- (-0.25,2.25);
\fill (-0.75,2.25) circle (1.5pt);

\draw [dash pattern=on 2pt off 2pt] (0.25,2.25)-- (1,2.25);
\draw[color=red] (-0.25,2.25)-- (0.25,2.25);
\fill (0.25,2.25) circle (1.5pt);
\draw[color=red] (1,2.25)-- (1.5,2.25);
\fill (1,2.25) circle (1.5pt);

\draw [dash pattern=on 2pt off 2pt] (-1.5,0.25)-- (-0.75,0.25);
\draw[color=red] (-2,0.25)-- (-1.5,0.25);
\fill (-1.5,0.25) circle (1.5pt);
\draw[color=red] (-0.75,0.25)-- (-0.25,0.25);
\fill (-0.75,0.25) circle (1.5pt);

\draw [dash pattern=on 2pt off 2pt] (0.25,0.25)-- (1,0.25);
\draw[color=red] (-0.25,0.25)-- (0.25,0.25);
\fill (0.25,0.25) circle (1.5pt);
\draw[color=red] (1,0.25)-- (1.5,0.25);
\fill (1,0.25) circle (1.5pt);
\draw [dash pattern=on 2pt off 2pt] (6,1.75)-- (6,0.75);
\draw[color=blue] (6,2.25)-- (6,1.75);
\fill (6,1.75) circle (1.5pt);
\draw[color=blue] (6,0.75)-- (6,0.25);
\fill (6,0.75) circle (1.5pt);

\draw [dash pattern=on 2pt off 2pt] (9.5,1.75)-- (9.5,0.75);
\draw[color=green] (9.5,2.25)-- (9.5,1.75);
\fill (9.5,1.75) circle (1.5pt);
\draw[color=green] (9.5,0.75)-- (9.5,0.25);
\fill (9.5,0.75) circle (1.5pt);

\draw [dash pattern=on 2pt off 2pt] (6.5,2.25)-- (7.25,2.25);
\draw[color=red] (6,2.25)-- (6.5,2.25);
\fill (6.5,2.25) circle (1.5pt);
\draw[color=red] (7.25,2.25)-- (7.75,2.25);
\fill (7.25,2.25) circle (1.5pt);

\draw [dash pattern=on 2pt off 2pt] (8.25,2.25)-- (9,2.25);
\draw[color=blue] (7.75,2.25)-- (8.25,2.25);
\fill (8.25,2.25) circle (1.5pt);
\draw[color=red] (9,2.25)-- (9.5,2.25);
\fill (9,2.25) circle (1.5pt);

\draw [dash pattern=on 2pt off 2pt] (6.5,0.25)-- (7.25,0.25);
\draw[color=red] (6,0.25)-- (6.5,0.25);
\fill (6.5,0.25) circle (1.5pt);
\draw[color=red] (7.25,0.25)-- (7.75,0.25);
\fill (7.25,0.25) circle (1.5pt);

\draw [dash pattern=on 2pt off 2pt] (8.25,0.25)-- (9,0.25);
\draw[color=red] (7.75,0.25)-- (8.25,0.25);
\fill (8.25,0.25) circle (1.5pt);
\draw[color=red] (9,0.25)-- (9.5,0.25);
\fill (9,0.25) circle (1.5pt);
\tikzset{
    green/.style={draw=green, postaction={decorate},
        decoration={markings,mark=at position .65 with {\arrow[draw=green,fill=green]{triangle 60}}}}
}
\draw[green] (3.5,-1.5) -- (-0.25,0.25);	
\draw (-0.25,0.25) node[below] {$\gamma_{i,h}$};
\draw[green] (-0.25,2.25)-- (7.75,0.25);	
\draw[green] (8.25,0.25)-- (0.25,2.25);		
\draw[green] (0.25,0.25)-- (4,-1.5);		
\draw (7.75,0.25) node[below] {$\gamma'_{i,h}$};
\fill (3.5,-1.5) circle (1.5pt);
\fill (-0.25,0.25) circle (1.5pt);
\fill (-0.25,2.25) circle (1.5pt);
\fill (0.25,0.75) circle (1.5pt);
\draw[color=red] (-0.25,0.75)-- (0.25,0.75);
\draw[color=green] (0.25,0.25)-- (0.25,0.75);
\fill (0.25,1.75) circle (1.5pt);
\draw[color=red] (-0.25,1.75)-- (0.25,1.75);
\draw[color=green] (0.25,2.25)-- (0.25,1.75);
\draw [dash pattern=on 2pt off 2pt] (0.25,1.75)-- (0.25,0.75);
\fill (8.25,0.75) circle (1.5pt);
\draw[color=red] (7.75,0.75)-- (8.25,0.75);
\draw[color=green] (8.25,0.25)-- (8.25,0.75);
\fill (8.25,1.75) circle (1.5pt);
\draw[color=red] (7.75,1.75)-- (8.25,1.75);
\draw[color=green] (8.25,2.25)-- (8.25,1.75);
\draw [dash pattern=on 2pt off 2pt] (8.25,1.75)-- (8.25,0.75);
\fill (7.75,0.25) circle (1.5pt);
\fill (4,-1.5) circle (1.5pt);
\fill (7.75,2.25) circle (1.5pt);
\fill (7.75,1.75) circle (1.5pt);
\fill (-0.25,0.75) circle (1.5pt);
\fill (-0.25,1.75) circle (1.5pt);
\fill (7.75,0.75) circle (1.5pt);
\fill (3.5,-2) circle (1.5pt);
\fill (4,-2) circle (1.5pt);
\fill (3.5,-3) circle (1.5pt);
\fill (3.5,-3.5) circle (1.5pt);
\fill (4,-3) circle (1.5pt);
\fill (4,-3.5) circle (1.5pt);
\fill (-2,2.25) circle (1.5pt);
\fill (1.5,2.25) circle (1.5pt);
\fill (1.5,0.25) circle (1.5pt);
\fill (-2,0.25) circle (1.5pt);
\fill (6,2.25) circle (1.5pt);
\fill (9.5,2.25) circle (1.5pt);
\fill (9.5,0.25) circle (1.5pt);
\fill (6,0.25) circle (1.5pt);
\fill (1.25,-1.5) circle (1.5pt);
\fill (6.25,-1.5) circle (1.5pt);
\fill (6.25,-3.5) circle (1.5pt);
\fill (1.25,-3.5) circle (1.5pt);
\end{tikzpicture}
\end{center}
\caption{Block 2 is composed by all columns of $G_i$ representing connections between grids of $G'$ that correspond to edges of the source graph $G$.}
\label{fig:extraedgeconnections2}
\end{figure}

\paragraph{Block 2}
We now add to $G'$ the edges that are derived from the original graph $G$.
For every $i \in [n]$ and for every $h \in \left[d^>_i\right]$, that is, for any forward edge $(v_i, v_{n^h_i})$ in $G$, $G'$ includes the four edges (see Figure~\ref{fig:extraedgeconnections2})
\[
\left(m^{1,\beta_{i,h}}, g_i^{\ngh,\gamma_{i,h}}\right),
\left(g_i^{1,\gamma_{i,h}}, g_{n^h_i}^{\ngh,\gamma'_{i,h}}\right),
\left(g_{n^h_i}^{\ngh,\gamma'_{i,h}+1}, g_i^{1,\gamma_{i,h}+1}\right),
\left(g_i^{\ngh,\gamma_{i,h}+1}, m^{1,\beta_{i,h}+1}\right),
\]
where $\beta_{i,h} = \sigma_i + \nee - 4 -2d^>_i + (2h-1)$,
$\gamma_{i,h} = \nee - 4 -2d^>_i + (2h-1)$,
and $\gamma'_{i,h} = \nee - 3 + 2d^{<,i}_{n^h_i}$. 
Observe that $\beta_{i,1} = \alpha_{i,\halfnee-2-d^{>}_i} + 1$, so that 
the two runs of edges defined by Block 1 and Block 2, connecting grid $G_i$ to grid $M$, 
are one next to the other. 
We have thus added to $G'$, for any edge $(v_i,v_j)$ of $G$, two edges 
connecting the two grids $G_i$ and $G_j$, and two edges connecting 
grid $G_i$ to grid $M$.   

Combining Block 1 and Block 2 together, we have added 
a total of $\nee-4$ edges from grid $G_i$ to grid $M$, for every $i \in [n]$.  

\paragraph{Block 3}
Finally, $G'$ includes the four edges (see Figure~\ref{fig:extraedgeconnections1})
\[
\left(m^{1,\sigma_{i+1} - 3}, g_i^{1,\nee-3+2d^<_i}\right),
\left(g_i^{1,\ngw},m^{1,\sigma_{i+1}-2}\right),
\left(m^{1,\sigma^{\phantom{4}}_{i+1}-1},g_i^{1,1}\right),
\left(g_i^{\ngh,\ngw},m^{1,\sigma_{i+1}}\right) \;.
\]

Combining all three blocks, we have a total of $\nee$ edges
connecting $G_i$ to $M$.  We further note that each vertex of
$G_i$ has at most one edge connecting to vertices outside $G_i$;
this property will play an important role later in our proofs.

So far we have specified $G'$ as if it were a (plain) graph;  however, $G'$ is a permutation multigraph.   The coloring of the edges of $G'$, that is, the definition of the
two edge sets $A$ and $B$, is specified below by describing the Hamiltonian path of red edges and the Hamiltonian path of green edges. Some of the edges specified above are included in both the red and green paths; these are double edges in the multigraph $G'$.

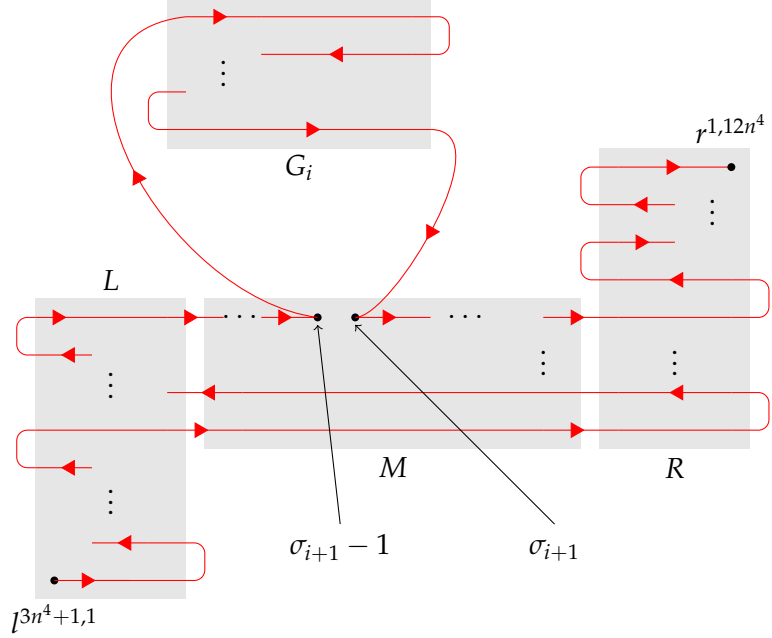
\begin{figure}[t]
\begin{center}
\definecolor{qqqqff}{rgb}{0,0,1}
\begin{tikzpicture}[scale=1.0]



\tikzset{
red/.style={draw=red, postaction={decorate},
decoration={markings,mark=at position .55 with {\arrow[draw=red,fill=red]{triangle 60}}}},
blue/.style={draw=blue, postaction={decorate},
decoration={markings,mark=at position .75 with {\arrow[draw=green,fill=green]{triangle 60}},mark=at position .25 with {\arrow[draw=red,fill=red]{triangle 60 reversed}}}}
}

\fill[dotted,fill=black,fill opacity=0.1] (2.50,2.00) -- (7.50,2.00) -- 
   (7.50,4.00) -- (2.50,4.00) -- cycle;
\draw (1.25,4.25) node{$L$};
\fill[dotted,fill=black,fill opacity=0.1] (2.25,4.00) -- (2.25,0.00) -- 
   (0.25,0.00) -- (0.25,4.00) -- cycle;
\draw (5.00,1.75) node{$M$};
\fill[dotted,fill=black,fill opacity=0.1] (7.75,2.00) -- (7.75,6.00) -- 
   (9.75,6.00) -- (9.75,2.00) -- cycle;
\draw (8.75,1.75) node{$R$};
\fill[dotted,fill=black,fill opacity=0.1] (2.00,8.00) -- (5.50,8.00) -- 
   (5.50,6.00) -- (2.00,6.00) -- cycle;
\draw (3.75,5.75) node{$G_i$};

\fill (0.50,0.25) circle (1.5pt);
\draw (0.50,0.25) node[below=5pt] {$l^{\lgh,1}$};
\draw[red] (0.50,0.25) -- (1.50,0.25); 
\draw[color=red,rounded corners=4pt] (1.00,0.25) -- (2.50,0.25) -- 
   (2.50,0.75) -- (1.75,0.75);
\draw[red] (1.75,0.75) -- (1.00,0.75); 
\draw[red] (1.00,1.75) -- (0.25,1.75); 
\draw[color=black] (1.25,1.40) node {$\vdots$};
\draw[color=red,rounded corners=4pt] (0.25,1.75) -- (0.00,1.75) -- 
   (0.00,2.25) -- (0.25,2.25);
\draw[color=red] (0.25,2.25) -- (2.00,2.25);
\draw[red] (2.00,2.25) -- (3.00,2.25);
\draw[color=red] (3.00,2.25) -- (7.00,2.25);
\draw[red] (7.00,2.25) -- (8.00,2.25);
\draw[color=red] (8.00,2.25) -- (9.50,2.25);
\draw[color=red,rounded corners=4pt] (9.50,2.25) -- (10.00,2.25) -- 
   (10.00,2.75) -- (9.50,2.75);
\draw[red] (9.50,2.75) -- (8.00,2.75);
\draw[color=red] (8.00,2.75) -- (3.00,2.75);
\draw[red] (3.00,2.75) -- (2.00,2.75);
\draw[color=black] (1.25,2.95) node {$\vdots$};
\draw[color=black] (7.00,3.25) node {$\vdots$};
\draw[color=black] (8.75,3.25) node {$\vdots$};
\draw[red] (1.00,3.25) -- (0.25,3.25); 
\draw[color=red,rounded corners=4pt] (0.25,3.25) -- (0.00,3.25) -- 
   (0.00,3.75) -- (0.25,3.75);
\draw[red] (0.25,3.75) -- (1.00,3.75); 
\draw[color=red] (1.00,3.75) -- (2.00,3.75);
\draw[red] (2.00,3.75) -- (2.75,3.75); 
\draw[color=black] (3.00,3.75) node{$\cdots$};
\draw[red] (4.50,3.75) -- (5.50,3.75); 
\draw[color=black] (6.00,3.75) node{$\cdots$};
\draw[red] (3.25,3.75) -- (4.00,3.75); 
\draw[red] (7.00,3.75) -- (8.00,3.75); 
\draw[color=red] (8.00,3.75) -- (9.50,3.75);
\draw[color=red,rounded corners=4pt] (9.50,3.75) -- (10.00,3.75) -- 
   (10.00,4.25) -- (9.50,4.25);
\draw[red] (9.50,4.25) -- (8.00,4.25); 
\draw[color=red,rounded corners=4pt] (8.00,4.25) -- (7.50,4.25) -- 
   (7.50,4.75) -- (8.00,4.75);
\draw[red] (8.00,4.75) -- (8.75,4.75); 
\draw[color=black] (9.25,5.25) node {$\vdots$};
\draw[red] (8.75,5.25) -- (7.75,5.25); 
\draw[color=red,rounded corners=4pt] (7.75,5.25) -- (7.50,5.25) -- 
   (7.50,5.75) -- (8.00,5.75);
\draw[red] (8.00,5.75) -- (9.50,5.75); 
\fill (9.50,5.75) circle (1.5pt);
\draw (9.50,5.75) node[above=5pt] {$r^{1,\lgw}$};

\fill (4.00,3.75) circle (1.5pt);
\draw (4.30,1.00) node[below] {$\sigma_{i+1}-1$};
\draw[->] (4.30,1.00) -- (4.00,3.65);
\fill (4.50,3.75) circle (1.5pt);
\draw (7.15,1.00) node[below=2pt] {$\sigma_{i+1}$};
\draw[->] (7.15,1.00) -- (4.50,3.65);
\draw[red] (4.00,3.75) .. controls (2.00,4.00) and 
    (0.00,7.50) .. (2.25,7.75); 
\draw[red] (5.50,6.25) .. controls (6.50,6.25) and 
    (5.00,3.75) .. (4.50,3.75); 

\draw[red] (2.25,7.75) -- (3.25,7.75); 
\draw[color=red] (3.25,7.75) -- (5.25,7.75); 
\draw[color=red,rounded corners=4pt] (5.25,7.75) -- (5.75,7.75) -- 
   (5.75,7.25) -- (5.25,7.25);
\draw[red] (5.25,7.25) -- (3.25,7.25); 
\draw[color=black] (2.75,7.10) node {$\vdots$};
\draw[color=red,rounded corners=4pt] (2.25,6.75) -- (1.75,6.75) -- 
   (1.75,6.25) -- (2.25,6.25);
\draw[red] (2.25,6.25) -- (5.50,6.25); 

\end{tikzpicture}
\end{center}
\caption{The red path across graph $G'$.  The displayed pattern that connects grid $M$ with grid $G_i$ is repeated for every $i \in [n]$.}
\label{fig:redpath}
\end{figure}

\paragraph{Red path} 
The red path is schematically represented in Figure~\ref{fig:redpath}.
The path starts from $l^{\lgh,1}$ (that is, the bottom left corner of $L$) and travels horizontally through the $\lminusmh$ bottom lines of $L$, alternating the left-to-right and the right-to-left directions and moving upward, until it reaches node $l^{\mgh,1}$ coming from previous node $l^{\mghplusone,1}$. Then the path continues traveling horizontally through the three components $L$, $M$ and $R$, once again alternating the horizontal directions.  The path eventually reaches the node $m^{1,1}$ coming from previous node $l^{1,\lgw}$, since $\mgh$ is always odd. At this point, the red path continues horizontally, from left to right, until it arrives at some node $m^{1,\sigma_{i+1}-1}$, $i \in [n]$; let us call $x_i$ such a node. Observe that $x_i$ is the penultimate node in the first row of $M$, from left to right, that is connected with a node in the leftmost column of $G_i$ (see Figure~\ref{fig:extraedgeconnections1}). 

Next, the path moves one step forward from $x_i$ to the leftmost
column of $G_i$, reaching node $g_i^{1,1}$. Now, the path
starts traveling horizontally from the first to the last row of
$G_i$, alternating the left-to-right direction 
with the right-to-left direction, until it arrives at
$g_i^{\ngh,\ngw}$. Afterward, the red path returns to $M$ by reaching
the node to the right of $x_i$. The path then continues
horizontally from left to right, repeating the process of
visiting the components $G_i$ for $i \in [n]$, as described above,
eventually reaching node $r^{\lminusmh+1,1}$ from previous node
$m^{1,\mgw}$. The path can now visit the remaining nodes of $R$
by traveling horizontally and alternating the left-to-right and
the right-to-left directions, until it reaches node $r^{1,\lgw}$,
where the path stops.

\begin{figure}[t]
\begin{center}
\definecolor{qqqqff}{rgb}{0,0,1}
\begin{tikzpicture}[scale=1.00]


\tikzset{
red/.style={draw=red, postaction={decorate},
decoration={markings,mark=at position .55 with {\arrow[draw=red,fill=red]{triangle 60}}}},
green/.style={draw=green, postaction={decorate}, decoration={markings,mark=at position .55 with {\arrow[draw=green,fill=green]{triangle 60}}}},
green-dash/.style={dashed, draw=green, postaction={decorate}, decoration={markings,mark=at position .55 with {\arrow[draw=green,fill=black,fill opacity=0.1]{triangle 60}}}},
blue/.style={draw=blue, postaction={decorate},
decoration={markings,mark=at position .75 with {\arrow[draw=green,fill=green]{triangle 60}},mark=at position .25 with {\arrow[draw=red,fill=red]{triangle 60 reversed}}}}
}

\fill[dotted,fill=black,fill opacity=0.1] (2.25,4.00) -- (2.25,0.00) -- 
   (0.25,0.00) -- (0.25,4.00) -- cycle;
\draw (1.75,0.00) node[below]{$L$};
\fill[dotted,fill=black,fill opacity=0.1] (2.50,2.00) -- (9.50,2.00) -- 
   (9.50,4.00) -- (2.50,4.00) -- cycle;
\draw (4.00,1.75) node{$M$};
\fill[dotted,fill=black,fill opacity=0.1] (9.75,2.00) -- (9.75,6.00) -- 
   (11.75,6.00) -- (11.75,2.00) -- cycle;
\draw (11.00,6.25) node{$R$};
\fill[dotted,fill=black,fill opacity=0.1] (2.00,8.00) -- (8.50,8.00) -- 
   (8.50,6.00) -- (2.00,6.00) -- cycle;
\draw (2.00,7.00) node[left]{$G_i$};

\fill (0.50,3.75) circle (1.5pt);
\draw (0.25,4.00) node[above=5pt] {$l^{1,1}$};
\draw[green] (0.50,3.75) -- (0.50,0.25); 
\draw[color=green,rounded corners=4pt] (0.50,0.25) -- (0.50,-0.25) -- 
   (1.00,-0.25) -- (1.00,0.25);
\draw[color=black] (1.00,1.75) node {$\vdots$};
\draw[green] (1.00,0.25) -- (1.00,1.25); 
\draw[color=black] (2.00,2.75) node {$\vdots$};
\draw[green] (2.00,3.00) -- (2.00,3.75); 

\draw[color=green,rounded corners=4pt] (2.00,3.75) -- (2.00,4.25) -- 
   (2.75,4.25) -- (2.75,3.75);
\draw[green] (2.75,3.75) -- (2.75,2.25); 
\draw[color=green,rounded corners=4pt] (2.75,2.25) -- (2.75,1.75) -- 
   (3.25,1.75) -- (3.25,2.25);
\draw[green] (3.25,2.25) -- (3.25,2.75); 
\draw[color=black] (3.25,3.25) node {$\vdots$};

\draw[green] (5.00,2.75) -- (5.00,4.00); 
\draw[green] (5.00,4.00) .. controls (5.00,5.00) and 
    (2.25,5.00) .. (2.25,6.00); 
\draw[green] (2.25,6.00) -- (2.25,7.75); 
\draw[color=green,rounded corners=4pt] (2.25,7.75) -- (2.25,8.25) -- 
   (2.75,8.25) -- (2.75,7.75);
\draw[green] (2.75,7.75) -- (2.75,6.00); 
\draw[green] (2.75,6.00) .. controls (2.75,5.00) and 
    (5.50,5.00) .. (5.50,4.00); 
\draw[green] (5.50,4.00) -- (5.50,2.75); 
\draw[color=black] (5.00,2.50) node {$\vdots$};
\draw[color=black] (5.50,2.50) node {$\vdots$};
\draw[color=black] (3.25,7.00) node {$\vdots$};

\draw[green] (6.50,2.75) -- (6.50,4.00); 
\draw[green] (6.50,4.00) .. controls (6.50,5.00) and 
    (3.75,5.00) .. (3.75,6.00); 
\draw[green] (3.75,6.00) -- (3.75,8.00); 
\draw[green] (3.75,8.00) .. controls (3.75,8.50) and 
    (3.75,9.00) .. (5.00,9.00); 
\draw[green] (5.00,8.50) .. controls (4.50,8.50) and 
    (4.25,8.50) .. (4.25,8.00); 
\draw[color=black] (5.00,9.00) node[right]{to $G_j$};
\draw[color=black] (5.00,8.50) node[right]{from $G_j$};
\draw[green] (4.25,8.00) -- (4.25,6.00); 
\draw[green] (4.25,6.00) .. controls (4.25,5.00) and 
    (7.00,5.00) .. (7.00,4.00); 
\draw[green] (7.00,4.00) -- (7.00,2.75); 
\draw[color=black] (6.50,2.50) node {$\vdots$};
\draw[color=black] (7.00,2.50) node {$\vdots$};
\draw[color=black] (4.75,7.00) node {$\vdots$};

\draw[green] (8.00,2.75) -- (8.00,4.00); 
\draw[green] (8.00,4.00) .. controls (8.00,5.00) 
    .. (5.75,5.50);  
\draw[green-dash] (5.75,5.50) .. controls (5.00,6.00) and 
    (5.00,9.00) .. (6.00,8.00);  
\draw[green] (6.00,8.00) -- (6.00,6.25); 
\draw[color=green,rounded corners=4pt] (6.00,6.25) -- (6.00,5.75) -- 
   (6.50,5.75) -- (6.50,6.25);
\draw[green] (6.50,6.25) -- (6.50,7.00); 
\draw[green] (7.75,7.00) -- (7.75,6.25); 
\draw[color=green,rounded corners=4pt] (7.75,6.25) -- (7.75,5.75) -- 
   (8.25,5.75) -- (8.25,6.25);
\draw[green] (8.25,6.25) -- (8.25,8.00); 
\draw[green] (8.25,8.00) .. controls (9.25,9.00) and 
    (9.50,7.00) .. (8.75,5.50);  
\draw[green] (8.75,5.50) .. controls (8.50,5.00) 
    .. (8.50,4.00);  
\draw[green] (8.50,4.00) -- (8.50,3.00); 
\draw[color=black] (8.00,2.50) node {$\vdots$};
\draw[color=black] (8.50,2.50) node {$\vdots$};
\draw[color=black] (6.50,7.50) node {$\vdots$};
\draw[color=black] (7.75,7.50) node {$\vdots$};

\draw[color=black] (9.25,3.50) node {$\vdots$};
\draw[green] (9.25,3.00) -- (9.25,2.25); 
\draw[color=green,rounded corners=4pt] (9.25,2.25) -- (9.25,1.75) -- 
   (10.00,1.75) -- (10.00,2.25);
\draw[green] (10.00,1.75) -- (10.00,5.75); 
\draw[color=green,rounded corners=4pt] (10.00,5.75) -- (10.00,6.25) -- 
   (10.50,6.25) -- (10.50,5.75);
\draw[green] (10.50,5.75) -- (10.50,4.25); 
\draw[color=black] (10.50,3.75) node {$\vdots$};
\draw[green] (11.50,4.75) -- (11.50,2.25); 
\draw[color=black] (11.50,5.50) node {$\vdots$};
\fill (11.50,2.25) circle (1.5pt);
\draw (11.50,2.25) node[below=5pt] {$r^{\rgh,\rgw}$};

\draw (4.30,0.00) node[below] {$\sigma_{i+1}-3$};
\draw[->] (4.30,0.00) -- (8.00,2.00);
\draw (6.70,0.00) node[below] {$\sigma_{i+1}-2$};
\draw[->] (6.70,0.00) -- (8.50,2.00);

\end{tikzpicture}
\end{center}
\caption{The green path across graph $G'$.  Grid $M$ is connected with grid $G_i$, $i \in [n]$, through three different patterns, displayed in the figure, one for each of the three blocks of $G_i$.}
\label{fig:greenpath}
\end{figure}

\paragraph{Green path} 
The green path is schematically represented in Figure~\ref{fig:greenpath}.
The path starts from $l^{1,1}$ (that is, the top left corner of $L$). It travels vertically, alternating the top-to-bottom direction with the bottom-to-top direction and moving rightward, until it reaches node $l^{1,\lgw}$ from previous node $l^{2,\lgw}$.  The path then moves to $m^{1,1}$, travels vertically down to $m^{\mgh,1}$, moves one step to the right to node $m^{\mgh,2}$ and again travels vertically up to $m^{1,2}$. From now on, as soon as there is an edge exiting $M$ and reaching some yet unvisited node in the last row of some grid $G_i$, the green path follows such an edge and travels vertically through the current column in $G_i$, until it reaches a node $x_i$ in the first row. We need to distinguish two cases for $x_i$.

\begin{itemize}
\item If $x_i$ has no edge exiting $G_i$, then the green path
      makes a step to the vertex to the right of $x_i$. This means that 
      we are in Block 1 of $G_i$. 
\item On the other hand, if $x_i$ has an edge exiting $G_i$, then the green path
      follows this edge thus reaching a
      node in the last row of a grid $G_j$, for some $j > i$ (as in
      Figure~\ref{fig:extraedgeconnections1}).  This means that we are visiting 
      the first part of Block 2 of $G_i$, where edges  
      $(v_i, v_j)$ in the source graph $G$ with $j > i$ are encoded by our 
      construction.  The path then travels vertically
      through two columns in $G_j$, until it reaches the node in
      the bottom row of the second column, and, from there, it
      returns to $G_i$ at the vertex to the right of $x_i$.
\end{itemize}

From the vertex to the right of $x_i$, the green path continues
vertically down in the current column of $G_i$.  Upon reaching the 
bottom vertex of this column, the path exits $G_i$ and comes back 
to some node in the top row of $M$.  Such node is placed in the column 
of $M$ adjacent at the right to the last column of $M$ that the green path 
had visited before its exit to $G_i$.  Then the green path proceeds 
downward, along the current column of $M$, it moves to the next column 
at the right, and alternates its direction to reach the node in the first 
row of $M$.  The above process is then iterated, until all the columns 
in Block~1 and Block 2 of the current grid $G_i$ have been visited. 

When the green path reaches node $m^{1,\sigma_{i+1} - 3}$, it exits 
$M$ and reaches the top node in the column of $G_i$ 
with index $\nee-3+2d^<_i$ (see Figure~\ref{fig:extraedgeconnections1}),
entering for the first time Block 3 of the current grid $G_i$.
We remark that this step represents a switch in the construction of the 
green path, in the following sense.  Block 1 and Block 2 of $G_i$ 
are visited by the green path in such a way that odd-indexed
columns are visited bottom-to-top and even-indexed
columns are visited top-to-bottom.  On the other hand, 
when Block 3 of $G_i$ is entered, we revert the previous pattern 
in such a way that odd-indexed columns are visited 
top-to-bottom and even-indexed columns are visited bottom-to-top.

Once Block 3 of $G_i$ is entered, the green path 
travels vertically through its columns, by alternating 
direction and moving rightward, never leaving $G_i$
at its intermediate nodes.  In this way the path eventually 
reaches the node $g_i^{1,\ngw}$, at which point it can return to $M$, 
reaching the topmost node in the column with index $\sigma_{i+1}-2$ 
(see again Figure~\ref{fig:extraedgeconnections1}).

At this point the columns with indices $\sigma_{i+1}-2$, $\sigma_{i+1}-1$, and $\sigma_{i+1}$ are visited vertically, alternating the top-to-bottom direction with the bottom-to-top direction and moving rightward.  After this step, the green path is located at the bottom of column $\sigma_{i+1}+1$, coming from the bottom of column $\sigma_{i+1}$, and it moves upward to the first line of $M$, where the path is ready to start visiting the next grid $G_{i+1}$.  This is done by iterating all of the process described above. 

When all of the grids $G_i$ have been visited, there are no more edges exiting $M$. The path then continues vertically, alternating the top-to-bottom direction with the bottom-to-top direction and moving rightward, until it reaches node $m^{\mgh,\mgw}$, 
since $\mgw-\sigma_{n+1}=\mgw-\nee \cdot n$ is always even. 
The path can now move one step to the right to node $r^{\rgh,1}$, and visit the remaining nodes of $R$ by traveling vertically and alternating the vertical direction.
In the end, the green path ends at node $r^{\rgh,\rgw}$.

\paragraph{Double Edges}
Some edges are included in both the red path and the green path.
These are double edges in $G'$, and count double when computing 
the width function.  Double edges occur 
on the first and the last columns 
of the various component grids $G_i$, $L$, $M$, and $R$, 
where the red path crosses from one row to the next, and on the 
upper and lower rows of these grids, where the green path crosses
from one column to the next.  There are no duplicate edges in
the interior of any grid.  There are also no duplicate
edges between any two grids, with the only exceptions of the points
where the green path connects $L$ to $M$, and the point where
the green path connects $M$ to $R$.

To be used later, we need to establish exact values for the cutwidth of the multigraph
grids $G_i$, $L$, $M$, and $R$.  In order to do so, 
we define a multigraph grid $X_m$, 
which consists of the (regular) grid $X=\Gamma[H,W]$ with the following double
edges
\begin{itemize}
\item from $x^{1,i}$ to $x^{1,i+1}$, for any $i$ odd in $[W-1]$
\item from $x^{H,i}$ to $x^{H,i+1}$, for any $i$ even in $[W-1]$
\item from $x^{i,1}$ to $x^{i+1,1}$, for any $i$ odd in $[H-1]$
\item from $x^{i,W}$ to $x^{i+1,W}$, for any $i$ even in $[H-1]$.
\end{itemize}
The set of edges of the grid $X$ is contained in
the set of edges of the multigraph grid $X_m$.  Hence the cutwidth of 
$X_m$ is at least $\min\{H+1,W+1\}$, which is the cutwidth of $X$
as reported at the beginning of Section~\ref{ssec:gprime}.
Without loss of generality, let us assume $H\leq W$, and let us 
consider the linear arrangement $\nu$ of $X_m$ such that, for any
$x^{i,j}$, $\nu(x^{i,j})=(j-1)H+i$ if $j$ is even, and $(j-1)H+H-i$
otherwise.  It is easy to verify that $\nu$ induces a maximum width equal to $H+1$.
This proves that the cutwidth of the multigraph grid $X_m$ 
is still $\min\{H+1,W+1\}$.

Observe that the multigraph grid $R$ which we use in $G'$ 
has a set of edges which is a proper subset of the set of edges 
of $X_m$, for appropriate values of $H$ and $W$.  
The subset relation follows from the fact that $R$ does not have
double edges at the portion of its leftmost column that connects with $M$.   
Similarly, the multigraph grid $L$ has a set of edges 
which is a proper subset of the set of edges of an upside-down instance
of $X_m$.  Thus both $L$ and $R$ have cutwidth $\min\{H+1,W+1\}$. 
Consider now grid $M$.  The set of its edges is a proper subset of the edges 
of an upside-down instance of $X_m$.  
This follows from the fact that $M$ does not have
double edges at its leftmost and rightmost columns, where it connects 
with $L$ and $R$, respectively.  Furthermore, the green path 
through $M$ sometimes leaves the first row of $M$ 
to connect to some grid $G_i$, as depicted in Figure~\ref{fig:greenpath}.
Thus we can claim a cutwidth of $\min\{H+1,W+1\}$ for this grid as well.
It is easy to verify that each of the green paths through grids $L$, $M$ and $R$
corresponds to a linear arrangement that realizes the maximum width 
of $\min\{H+1,W+1\}$ for these grids.

Finally, each grid $G_i$ can be split into two parts. 
The first part consists of what we have called blocks~1 and~2, 
and the second part consists of block~3. 
The first part is a grid with a proper subset of the edges 
of an upside-down instance of $X_m$, for appropriate values of $H$ and $W$.  
This is so because the green path at block~2 repeatedly leaves $G_i$
to connect to three other grids $G_j$, $j \neq i$, irrespective of
whether these connections are backward or forward in the source graph $G$.
The second part of $G_i$ is an instance of $X_m$, for appropriate values of $H$ and $W$.
The difference between these two parts is due to 
the fact that, when moving from block~2 to block~3, 
the green path switches to a ``reversed'' pattern, as already observed
in this section. 
An optimal linear arrangement for $G_i$ can be 
defined by following the green path within each column of this grid,
and moving from one column to the next in a left to right order. 
The maximum width in the first part of $G_i$, with the exception of the last column, 
is then $\min\{H+1,W+1\}$, and this is also the maximum width in the 
second part, with the exception of the first column. 
It is not difficult to verify that, even for the positions
corresponding to the two adjacent columns above, this arrangement
induces a maximum width of $\min\{H+1,W+1\}$.
We have thus found that all of the grid components in $G'$ 
have cutwidth of $\min\{H+1,W+1\}$. 

We conclude our construction by setting $k'=\kprime$ in the target 
instance $\langle G',k'\rangle$ of the \mrgbcw\ problem.

\subsection{\mcess\ to \mrgbcw}
\label{ssec:g2gprime}

We show here that if $G$ admits a partition of its nodes into two equal size subsets, inducing a cut of size at most $k$, then $G'$ admits a linear arrangement $\larr$ whose maximum width is at most $k'$.  

\begin{lemma}
\label{l:g2gprime}
If $\langle G, k\rangle$ is a positive instance of \mcess\ then $\langle G',k'\rangle$ is a positive instance of \mrgbcw.
\end{lemma}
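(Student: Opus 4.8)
The plan is to exhibit a single linear arrangement $\larr$ of $G'$ whose width never exceeds $k' = \kprime$, built directly from a balanced bisection $(V_1, V_2)$ of $G$ with $\size{E(V_1,V_2)} \le k$ and $|V_1| = |V_2| = n/2$. I would lay the component grids out in the left-to-right order: first the $n/2$ grids $\{G_i : v_i \in V_1\}$, then the composed grid $S$ in the order $L$, $M$, $R$, and finally the $n/2$ grids $\{G_i : v_i \in V_2\}$. Each individual grid is arranged internally by the boustrophedon (column-by-column snake) order induced by the green path, which Section~\ref{ssec:gprime} already shows realizes its cutwidth $\min\{H+1,W+1\}$; on $L \cup M \cup R$ the restriction of $\larr$ is exactly the green-path arrangement of $S$ analyzed there, and $M$ is oriented so that its columns appear in block order $1,\dots,n$. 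The purpose of placing every $G_i$ strictly outside the position-span of $L$, $M$, and $R$ is to prevent any $G_i$ from being nested inside a ``thick'' grid.

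To bound $\wg(G',\larr,\cdot)$ I would classify the edges crossing a given cut into four groups: the edges internal to the composed grid $S$; the $G_i$-to-$M$ sheaf edges ($\nee$ per grid); the $G_i$-to-$G_j$ edges derived from $E$ (two $G'$-edges per edge of $G$); and, when the cut lies inside a $G_i$, that grid's own internal edges. Because the grids are laid out in disjoint position-intervals, at a cut interior to one grid the internal edges of every other grid have both endpoints on the same side and do not cross. This is precisely what rules out the $\approx 4n^4$ superposition that a naive layout following the green path literally (and so nesting each $G_i$ inside the snake of $M$, where $M$'s $\approx 2n^4$ straddling edges would add to the $G_i$'s own $\approx 2n^4$ width) would produce; avoiding this superposition is the main obstacle the layout is engineered around.

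The critical cuts are those interior to the $L$- and $R$-portions of $S$. By the per-component analysis of Section~\ref{ssec:gprime}, the $S$-internal edges crossing such a cut number at most $\min\{\lgh+1,\lgw+1\} = 3n^4+2$, attained in these portions; the $M$-$R$ connection edges and the other components contribute nothing there since their endpoints lie on one side. Superimposed are the sheaf and graph edges. In the $L$-portion every $V_1$-sheaf edge crosses (its $G_i$-endpoint lies left of $S$, its $M$-endpoint inside $S$) while no $V_2$-sheaf edge does, giving exactly $|V_1|\cdot\nee = (n/2)(4n^2) = 2n^3$; and a graph edge crosses precisely when it joins a $V_1$-grid to a $V_2$-grid, so the graph group contributes exactly $2\,\size{E(V_1,V_2)} \le 2k$. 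Hence the worst $L$-cut has width at most $(3n^4+2) + 2n^3 + 2k = \kprime = k'$, and the $R$-portion is symmetric (with the $V_2$-sheaf edges crossing). The balance condition $|V_1|=|V_2|$ is exactly what pins the sheaf term to $2n^3$, and the bisection size is what caps the graph term at $2k$.

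It then remains to check that every other cut stays strictly below $k'$, which I expect to be routine given the interval structure. Inside a $V_1$- or $V_2$-grid block the active internal width is only $\ngh+1 = 2n^4+2$, the crossing sheaf edges are at most those of the already-placed grids on that side, i.e.\ at most $2n^3$, and the crossing graph edges number $\order{n}$, for a total well under $k'$. Inside $M$ the $S$-internal width is at most $\mgh+1 = 2n^4+2$ and the sheaf edges crossing are at most $\nee\cdot n = 4n^3$ in the worst alignment of blocks with the bisection; since $2n^4 + 4n^3 \le 3n^4 < k'$ for $n \ge 4$, and the graph edges add only $\order{n}$, this is also below $k'$. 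Finally the four interface cuts ($V_1$-to-$L$, $L$-to-$M$, $M$-to-$R$, $R$-to-$V_2$) are dominated by the relevant $2n^4+1$ connection edges together with at most $2n^3$ sheaf and $2k$ graph edges, again below $k'$. Consequently $\max_i \wg(G',\larr,i) \le k'$, so $\langle G',k'\rangle$ is a positive instance of \mrgbcw.
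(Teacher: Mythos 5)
Your proposal is correct and follows essentially the same approach as the paper's proof: the identical layout ($V_1$-grids, then $L$, $M$, $R$, then $V_2$-grids, each grid arranged contiguously by the green-path column order), the same classification of crossing edges (grid-internal, sheaf, and graph-derived), the same identification of the $L$- and $R$-interior cuts as critical with the bound $(3n^4+2) + 2n^3 + 2k = k'$, and the same routine verification that all other positions stay strictly below $k'$.
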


\begin{proof}
We specify a linear arrangement of 
$G'$ having width no greater than $k'$ at each position.
We arrange the vertices within
each grid component $G_i$, as well as within grid components $R$, $M$, and $L$,
to be contiguous to one another.  Within each grid component,
we arrange the vertices in column-major order proceeding
through the columns from left to right; within each column,
we place vertices in the order specified by
the green path of $G'$.  
This is the same linear arrangement for each multigraph grid that 
has been presented in the paragraph ``Double Edges'', at the end of 
Section~\ref{ssec:gprime}.  
Disregarding edges that are not internal to the grid itself,
this results in a maximal width of $H+1$ for each individual grid, as 
already discussed. 

We concatenate the linear arrangements for the grid components
in a manner corresponding to a solution of the \mcess\ 
problem given by $\langle G,k\rangle$.
To this end, let $V_1$ and $V_2$ be the sets in a partition of the vertices
in $G$ such that $\size{V_1} = \size{V_2}$ and at most $k$ edges in $G$
have one endpoint in $V_1$ and the other endpoint in $V_2$.  

Our linear arrangement begins with the grid components 
$G_i$ for all $i$ such that $v_i \in V_1$,
in any order, then concatenates components $L$, $M$, and $R$,
and finally adds $G_i$ for all $i$ such that $v_i \in V_2$, in any order.
Each position in the linear arrangement within component $L$ 
has at most $3n^4 + 2$ edges internal to $L$, since 
the height of $L$'s grid is $3n^4 + 1$.
In addition, each position within $L$ 
has $4n^2$ edges connecting $M$ to each of the $\frac{n}{2}$
components $G_i$ to the left of $L$, for a total of $2n^3$ 
edges.  Finally, each position within $L$ 
has at most $2k$ edges connecting components $G_i$ and $G_j$
for $i,j$ such that $v_i \in V_1$, $v_j \in V_2$ and $(v_i, v_j) \in E$.  
Thus, the total width at each position within $L$ is at most 
$3n^4 + 2 + 2n^3 + 2k = k'$.  The same analysis applies
to each position within $R$.  

At all other positions in the linear arrangement, we have smaller width.
This is because, for positions within each $G_i$,
we have at most $2n^4+2$ edges from the grid $G_i$ itself,
at most $2n^3$ edges from $M$ to any $G_j$ standing on the same 
side as $G_i$ with respect to $M$, and
no more than $3n$ edges from some $G_j$ to some $G_h$, 
since the source graph $G$ is cubic and each connection between 
two vertices in $G$ corresponds to two arcs connecting the associated grids in $G'$.
For positions within $M$, we have at most $2n^4+2$
edges internal to $M$, at most $4n^3$ edges from $M$ to some $G_i$,
and $2k < 3n$ edges from some $G_i$ to some $G_j$.
Finally, positions between grid components have at most 
$2n^3$ edges from $M$ to some $G_i$, at most
$3n$ edges from some $G_i$ to some $G_j$,
and, in the case of positions between $M$ and either $L$ or $R$,
$2n^4+1$ edges connecting $M$ to either $L$ or $R$.
Thus, all positions outside grids $L$ and $R$ have a width 
bound of $2n^4 + \order{n^3}$.

We conclude that the maximum width of the linear arrangement is
that of the $L$ and $R$ components, $3n^4 + 2 + 2n^3 + 2k = k'$.
Then $\langle G',k'\rangle$ is a positive instance of \mrgbcw.
\end{proof}

\subsection{\mrgbcw\ to \mcess}
\label{ssec:gprime2g}

In this section we shall prove that if 
$G'$ admits a linear arrangement 
$\larr$ whose maximum width is at most $k'$, then our source graph 
$G$ admits a partition into two equal size subsets of nodes inducing a cut of
size at most $k$.  To this aim, we need to develop several intermediate
results.   Informally, our strategy is to investigate 
the family of linear arrangements for $G'$ having maximum width 
bounded by $k' + n^2$.  We show that, in these arrangements, two 
important properties hold for the grid components $L, M, R$ and 
$G_i$, $i \in [n]$, of $G'$, described in what follows.  
\begin{itemize}
\item 
The first property states that, for each grid component, a subset of its nodes must appear all in a row in the linear arrangement.  We call such a subset the \textit{kernel} of the grid.  In other words, nodes from different kernels cannot be intermixed, and each linear arrangement induces a total order among the kernels.  In addition, the kernels of the grids $L, M, R$ must appear one after the other in the total order, and the kernels of the grids $G_i$ can only be placed to the left or to the right of the kernels of $L, M, R$.  We therefore call $L, M, R$ the middle grids.  
\item
We illustrate the second property by means of an example.
Consider one of the middle grids, say $L$.  Assume that, under
our liner arrangement, there is a grid $X$ with kernel to the
left of $L$'s kernel and a grid $Y$ with kernel to the right of
$L$'s kernel.  Assume also some edge $e$ of $G'$, connecting a
node $x$ from $X$ with a node $y$ from $Y$.  If $x$ and $y$ are
in the kernels of their respective grids, edge $e$ must cross
over $L$'s kernel, contributing one unit to the width of $G'$ at
each gap $i$ within $L$'s kernel.  If $x$ and $y$ are not in the
kernels of their respective grids, it is possible to ``misplace''
one of these two nodes, say $x$, moving it to the opposite side
with respect to $L$'s kernel, in such a way that $e$ no longer
contributes to the width at $i$.  The second property states
that, if we do this, we will bring new edges, internal to grid
$X$, into the count of width at $i$.  This means that, if our
goal is the one of optimizing the width at $i$, we will have no
gain in misplacing node $x$ or node $y$.
\end{itemize}
With the two properties above, we can then show that exactly $\frac{n}{2}$ of the $G_i$ grids must be placed to the left of the middle grids $L, M, R$, and all of the remaining $G_i$ grids must be placed to the right of the middle grids, which eventually leads to the fact that if $\langle G', k' \rangle$ is a positive instance of \mrgbcw\ then $\langle G, k \rangle$ is a positive instance of \mcess. 

We start with some preliminary results, needed to prove 
the first property above.   Let $V_1$ and $V_2$ be sets of nodes from some graph
with $V_1 \cap V_2 = \emp$, and let $E$ be the set of edges of
the graph.  We write $\delta\left(V_1,V_2\right) =
\size{\left\{(u,v) : (u,v) \in E \wedge u \in V_1 \wedge v\in
  V_2\right\}}$.

\begin{lemma}
\label{lem:partition}
For any grid $X=\grid[\ggh,\ggw]$ with $\ggw \geq 2\ggh+1$ and
for any partition of its nodes in two sets $V_1$ and $V_2$ with
$\size{V_1}\geq \ggh^2$ and $\size{V_2}\geq \ggh^2$, we have
$\delta\left(V_1,V_2\right) \geq \ggh$.
\end{lemma}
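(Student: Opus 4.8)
The plan is to bound the cut $\delta(V_1,V_2)$ by separately counting \emph{vertical} cut edges (those running inside a single column) and \emph{horizontal} cut edges (those running between two adjacent columns), and then to play these two counts against each other through a dichotomy on the number of ``mixed'' columns. Call a column of $X$ \emph{pure} if all $\ggh$ of its nodes lie in the same part ($V_1$ or $V_2$), and \emph{mixed} otherwise, and let $m$ denote the number of mixed columns.

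First I would record the easy inequality for vertical edges. Each mixed column, read from top to bottom, is a path on $\ggh$ nodes containing both a $V_1$-node and a $V_2$-node, so it must carry at least one vertical cut edge, and cut edges in different columns are distinct. Hence the vertical contribution to $\delta(V_1,V_2)$ is at least $m$. This already settles the case $m \geq \ggh$, where $\delta(V_1,V_2) \geq m \geq \ggh$.

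For the complementary case $m \leq \ggh - 1$, the key step is to show that there must exist at least one pure-$V_1$ column and at least one pure-$V_2$ column. Suppose for contradiction that no column is pure-$V_2$. Then every column contributes at most $\ggh - 1$ of its nodes to $V_2$ (pure-$V_1$ columns contribute none, mixed columns at most $\ggh - 1$), and only the $m$ mixed columns contribute at all, so $\size{V_2} \leq m(\ggh - 1) \leq (\ggh - 1)^2 < \ggh^2$, contradicting $\size{V_2} \geq \ggh^2$. The symmetric argument forces a pure-$V_1$ column as well.

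Finally, fixing a pure-$V_1$ column at index $p$ and a pure-$V_2$ column at index $q$, I would argue horizontally: for each row $h \in [\ggh]$, the path $x^{h,1}, \ldots, x^{h,\ggw}$ contains $x^{h,p} \in V_1$ and $x^{h,q} \in V_2$, hence at least one horizontal cut edge; since these edges lie in distinct rows they are all distinct, giving at least $\ggh$ horizontal cut edges and so $\delta(V_1,V_2) \geq \ggh$. Combining the two cases proves the lemma. I expect the only real subtlety to be the counting step that forces both pure column types to appear when $m$ is small, namely the inequality $\size{V_2} \leq m(\ggh-1)$: the hypotheses $\size{V_1}, \size{V_2} \geq \ggh^2$ are exactly what make that contradiction fire, while $\ggw \geq 2\ggh + 1$ ensures the part-size hypotheses are feasible at all, since the grid has $\ggh\ggw \geq 2\ggh^2$ nodes.
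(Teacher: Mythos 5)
Your proof is correct, but it organizes the argument differently from the paper's. The paper's case split is on \emph{rows}: either every row contains nodes of both parts (then each row contributes a distinct horizontal cut edge, giving $\delta(V_1,V_2)\geq \ggh$ at once), or some row lies entirely in, say, $V_1$, in which case every column containing a $V_2$-node carries a vertical cut edge, and the number of such columns is bounded below directly by $\ggw - \left\lfloor \size{V_1}/\ggh \right\rfloor \geq \size{V_2}/\ggh \geq \ggh$ --- a direct count with no contradiction. Your split is instead a threshold dichotomy on the number $m$ of mixed \emph{columns}: when $m \geq \ggh$ the vertical edges alone suffice, and when $m \leq \ggh-1$ your contradiction $\size{V_2} \leq m(\ggh-1) \leq (\ggh-1)^2 < \ggh^2$ forces pure columns of both types to exist, after which the row-by-row horizontal count finishes. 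So the two proofs deploy the same two counting mechanisms (one cut edge per mixed line, plus a size-based count exploiting $\size{V_1},\size{V_2}\geq \ggh^2$), but in transposed roles and with dual case structures. Yours has the minor advantages of being symmetric in $V_1$ and $V_2$ and of making explicit that the hypothesis $\ggw \geq 2\ggh+1$ enters only to make the part-size hypotheses satisfiable; the paper's version avoids argument by contradiction in favor of a direct lower bound on the number of cut-carrying columns. Both are elementary and complete.
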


\begin{proof}
We distinguish the following three cases.
\begin{enumerate}
\item 
For each $h$ with $1 \leq h \leq \ggh$ there exist $w_{h,1}$ and
$w_{h,2}$ with $1 \leq w_{h,1}, w_{h,2} \leq \ggw$ such that
$x^{h,w_{h,1}} \in V_1$ and $x^{h,w_{h,2}} \in V_2$.  This
implies that, for each row of the grid, there exists at least one
edge connecting one node in $V_1$ to one node in $V_2$. Hence,
$\delta\left(V_1,V_2\right) \geq \ggh$.

\item There exists $h$ with $1 \leq h \leq \ggh$ such that, for
  any $w$ with $1 \leq w \leq \ggw$, $x^{h,w} \in V_1$. In this
  case, for each $w$ with $1 \leq w \leq \ggw$, either there
  exists $h_w$ with $1 \leq h_w \leq \ggh$ such that $x^{h_w,w}
  \in V_2$ (and, hence, the $w$-th column of $X$ contributes to
  $\delta\left(V_1,V_2\right)$ by at least one unit) or else, for all
  $h$ with $1 \leq h \leq \ggh$, $x^{h,w} \in V_1$.  This latter
  case can happen at most $\left \lfloor
  \frac{\left|V_1\right|}{\ggh} \right \rfloor$ times: this
  implies that the former case happens at least $\ggw - \left
  \lfloor \frac{\left|V_1\right|}{\ggh} \right \rfloor$. Hence,
\[\delta\left(V_1,V_2\right) \geq \ggw - \left \lfloor \frac{\left|V_1\right|}{\ggh} \right \rfloor \geq \frac{\ggw\ggh-\left|V_1\right|}{\ggh} = \frac{\left|V_2\right|}{\ggh} \geq \ggh,\]where the last inequality is due to the fact that $\left|V_2\right|\geq \ggh^2$.

\item There exists $h$ with $1 \leq h \leq \ggh$ such that, for
  any $w$ with $1 \leq w \leq \ggw$, $x^{h,w} \in V_2$. We can
  deal with this case similarly to the previous one.
\end{enumerate}
The lemma thus follows.
\end{proof}

\begin{corollary}
\label{cor:kernel}
For any grid $X=\grid[\ggh,\ggw]$ with $\ggw \geq 2\ggh+1$, for any linear arrangement
$\nu$ of $X$, and for any $i$ with 
$\ggh^2 \leq i \leq \ggh\ggw-\ggh^2$, $\wg(X,\nu,i) \geq \ggh$.
\end{corollary}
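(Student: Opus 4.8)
The plan is to read the corollary as a direct specialization of Lemma~\ref{lem:partition}, exploiting the fact that the width of a linear arrangement at a gap is exactly the number of edges crossing the vertex bipartition induced by that gap. First I would fix an arbitrary linear arrangement $\nu$ of $X = \grid[\ggh,\ggw]$ together with an arbitrary position $i$ satisfying $\ggh^2 \leq i \leq \ggh\ggw - \ggh^2$, and define $V_1 = \{u : \nu(u) \leq i\}$ and $V_2 = \{u : \nu(u) > i\}$. These two sets form a partition of the $\ggh\ggw$ vertices of $X$. Since $X$ is a plain undirected graph, an edge $(u,v)$ is counted in $\wg(X,\nu,i)$ precisely when one endpoint lies in $V_1$ and the other in $V_2$; hence $\wg(X,\nu,i) = \delta(V_1,V_2)$.

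Next I would verify that the hypotheses of Lemma~\ref{lem:partition} hold for this partition. The assumption $\ggw \geq 2\ggh+1$ is inherited verbatim from the statement of the corollary. For the size constraints, the lower bound on $i$ gives $\size{V_1} = i \geq \ggh^2$, while the upper bound on $i$ gives $\size{V_2} = \ggh\ggw - i \geq \ggh\ggw - (\ggh\ggw - \ggh^2) = \ggh^2$. Thus both parts of the partition contain at least $\ggh^2$ vertices, exactly as required.

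Applying Lemma~\ref{lem:partition} then yields $\delta(V_1,V_2) \geq \ggh$, and combining this with the identity $\wg(X,\nu,i) = \delta(V_1,V_2)$ established above gives $\wg(X,\nu,i) \geq \ggh$. Since $\nu$ and $i$ were arbitrary subject to the stated constraints, the corollary follows.

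I do not expect any genuine obstacle here: the entire content is the observation that a gap in a linear arrangement is nothing more than a vertex bipartition whose crossing edges are counted by both $\wg$ and $\delta$, together with the elementary arithmetic that converts the range $\ggh^2 \leq i \leq \ggh\ggw - \ggh^2$ into the two lower bounds on $\size{V_1}$ and $\size{V_2}$. The only point deserving a moment's care is confirming that the width function and $\delta$ agree exactly, with no correction term; this holds because $X$ is an ordinary grid graph with no edge multiplicities, so the bound transfers directly from the lemma.
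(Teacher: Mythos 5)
Your proof is correct and is exactly the paper's argument: partition the vertices at the gap $i$ into $V_1 = \{u : \nu(u) \leq i\}$ and $V_2$ its complement, check that both sides have at least $\ggh^2$ vertices, and invoke Lemma~\ref{lem:partition}. The paper states this more tersely, but there is no difference in substance.
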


\begin{proof}
The result follows by observing that, for any $i$ with $\ggh^2
\leq i \leq \ggh\ggw-\ggh^2$, we can define a partition of the
nodes of the grid by including in $V_1$ all the nodes $x$ such
that $\nu(x)\leq i$ and by including in $V_2$ all the other
nodes. Since this partition satisfies the hypothesis of the
previous lemma, we have that $\wg(X,\nu,i) \geq \ggh$.
\end{proof}

Let $\nu$ be an arbitrary linear arrangement for the nodes of $G'$. 
We denote by $\nu_i$ (respectively, $\nu_L$, $\nu_M$, and $\nu_R$) 
the linear arrangement of $G_i$ (respectively, $L$, $M$, and $R$) induced by $\nu$. 
Moreover, for any node $x$ of $G_i$ (respectively, $L$, $M$, and $R$)
and the associated position $p = \nu(x)$ under $\nu$, 
we denote by $p_i = \nu_i(x)$ (respectively, $p_L = \nu_L(x)$, 
$p_M = \nu_M(x)$, and $p_R = \nu_R(x)$) 
the corresponding position of $x$ under $\nu_i$ 
(respectively, $\nu_L$, $\nu_M$, and $\nu_R$). 

We now introduce the notion of kernel, which plays a major role 
in the development of our proofs below.  Consider any linear arrangement 
$\nu$ of $G'$ and any of the grids $G_i$.  The \termdef{kernel} $K^{(\nu)}_i$
relative to $\nu$ and $G_i$ is a set of positions $p$ 
of the nodes of $G'$ under $\nu$ such that 
$\left(\ngh\right)^2 \leq p_i \leq \left(\ngw\right)\left(\ngh\right) - \left(\ngh\right)^2$.  Corollary~\ref{cor:kernel} implies that for any $p \in K^{(\nu)}_i$, 
$\wg(G_i,\nu_i,p_i) \geq \ngh$.   

Similarly, we define the kernel $K^{(\nu)}_L$ (respectively,
$K^{(\nu)}_R$) as the set of positions $p$ of the nodes of $G'$
under $\nu$ such that $\left(\lgh\right)^2 \leq p_L, p_R \leq
\left(\lgw\right)\left(\lgh\right) - \left(\lgh\right)^2$.
Again, Corollary~\ref{cor:kernel} implies that for any $p \in
K^{(\nu)}_L$ (respectively, $p \in K^{(\nu)}_R$), we have
$\wg(L,\nu_L,p_L) \geq \lgh$ (respectively, $\wg(R,\nu_R,p_R)
\geq \rgh$).
%
We define the
kernel $K^{(\nu)}_M$ as the set of positions $p$ of the nodes of
$G'$ under $\nu$ such that
$\left(\mgh\right)^2 \leq p_M \leq
\left(\mgh\right)\left(\mgw\right)-\left(\mgh\right)^2$.
Corollary~\ref{cor:kernel} implies that for any $p \in
K^{(\nu)}_M$ we have $\wg(M,\nu_M,p_M) \geq \mgh$.

Observe that, for any $i \in [n]$, we have $\size{K^{(\nu)}_i}
= (\ngh)(\ngw)-2((\ngh)^2) + 1 \geq \nklb$ for $n$ sufficiently large.
Furthermore, for $n$ sufficiently large, we have $\size{K^{(\nu)}_L} =
\size{K^{(\nu)}_R} = (\lgh)(\lgw) - 2((\lgh)^2) + 1 \geq \lrklb$, and
$\size{K^{(\nu)}_M} = (\mgh)(\mgw)-2((\ngh)^2+\mgw) + 1 \geq \mklb$.

Recall that in our construction in Section~\ref{ssec:gprime} 
we have set $k' = \kprime$.  From now on, we denote by $\larr$ 
any linear arrangement of $G'$ having maximum width at most $k' + n^2$.  
For any two sets of positive integers $A$ and $B$, we will write $A < B$ if
each element of $A$ is smaller than every element in $B$.

\begin{lemma}\label{l2}
Let $\larr$ be a linear arrangement of $G'$ having maximum width at most $k' + n^2$,
and let ${\cal K}^{(\larr)} = \{K^{(\larr)}_L, K^{(\larr)}_M,
K^{(\larr)}_R\} \cup \{K^{(\larr)}_i : i \in [n]\}$. 
For any pair of kernels $K', K'' \in {\cal K}^{(\larr)}$ with $K' \neq K''$, 
either $K' < K''$ or $K'' < K'$.
\end{lemma}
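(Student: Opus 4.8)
The plan is to argue by contradiction. Suppose two distinct kernels $K', K'' \in {\cal K}^{(\larr)}$ interleave, i.e.\ neither $K' < K''$ nor $K'' < K'$ holds, and let $h'$ and $h''$ be the heights of the two grid components (among the $G_i$ and $L,M,R$) that own $K'$ and $K''$. I will exhibit a single gap of $\larr$ that is crossed by at least $h' + h''$ edges of $G'$. Since the two smallest component heights are $\ngh = \mgh = 2n^4+1$, this count is at least $2\ngh = 4n^4+2$, which for large $n$ exceeds $k' + n^2 = \kprime + n^2$; this contradicts the hypothesis that $\larr$ has maximum width at most $k' + n^2$, so no two kernels can interleave.

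The technical heart, and the step I expect to be the main obstacle, is translating the kernel condition (phrased in terms of the \emph{induced local} arrangement of one grid) into a statement about \emph{global} gaps of $\larr$. For a grid component $X$ of height $h$ and width $w$, with induced arrangement $\nu_X$, set $f_X(p) = \size{\{x \in X : \larr(x) \le p\}}$, the number of nodes of $X$ at global position at most $p$. Because $\nu_X$ preserves the $\larr$-order restricted to $X$, the number of edges internal to $X$ that cross the global gap at $p$ is exactly $\wg(X, \nu_X, f_X(p))$. Now $f_X$ is nondecreasing, $f_X(\min K^{(\larr)}_X) = h^2$ exactly, and $f_X(p)$ stays strictly below $hw - h^2$ for $p < \max K^{(\larr)}_X$; hence for every global gap $p$ with $\min K^{(\larr)}_X \le p < \max K^{(\larr)}_X$ the value $f_X(p)$ lies in the range $[h^2, hw-h^2]$ demanded by Corollary~\ref{cor:kernel} (each of our grids satisfies $w \ge 2h+1$). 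Thus every such global gap is crossed by at least $h$ edges internal to $X$. This promotes Corollary~\ref{cor:kernel} from the discrete set of kernel positions to the whole interval of global gaps spanned by the kernel.

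With this in hand the combinatorial step is short. Writing $m_X = \min K^{(\larr)}_X$ and $M_X = \max K^{(\larr)}_X$, the interval of global gaps $[m_X, M_X)$ is entirely crossed by at least $h_X$ edges internal to $X$. Assuming without loss of generality $m' < m''$ (the two minima differ, as distinct components occupy distinct positions under the bijection $\larr$), the failure of $K' < K''$ gives $M' \ge m''$, and distinctness of positions upgrades this to $M' > m''$; moreover $m'' < M''$ since each kernel contains at least $\nklb$ positions. Hence the gap $p = m''$ lies in both $[m', M')$ and $[m'', M'')$, so it is crossed by at least $h'$ edges internal to the first grid and at least $h''$ edges internal to the second. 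These two edge sets are disjoint because the grids are vertex-disjoint, and therefore $\wg(G', \larr, m'') \ge h' + h'' \ge 2\ngh = 4n^4+2$.

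It remains only to compare magnitudes. As $G$ is cubic on $n$ vertices it has $3n/2$ edges, so in the relevant regime $k \le 3n/2$ and $k' + n^2 = \kprime + n^2 \le 3n^4 + 2n^3 + n^2 + \order{n}$, while the gap at $m''$ carries at least $4n^4 + 2$ edges. For $n$ sufficiently large we have $4n^4 + 2 > 3n^4 + 2n^3 + n^2 + \order{n}$, which contradicts $\wg(G', \larr, m'') \le k' + n^2$. We conclude that no two kernels interleave, so for every pair $K', K'' \in {\cal K}^{(\larr)}$ with $K' \neq K''$ either $K' < K''$ or $K'' < K'$.
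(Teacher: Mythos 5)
Your proof is correct and takes essentially the same route as the paper's: both argue by contradiction that two interleaving kernels would force a single gap of $\larr$ to be crossed by at least $2(2n^4+1)=4n^4+2$ edges internal to two edge-disjoint grid components (via Corollary~\ref{cor:kernel}), which exceeds $k'+n^2$ for large $n$. Your explicit transfer step via $f_X$ is just a more careful write-up of what the paper phrases as ``in between $p$ and $q$ there is no position associated with a node from $G_i$,'' and your interval-overlap packaging of interleaving matches the paper's choice of locally consecutive kernel positions with a foreign kernel position in between.
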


\begin{proof}
We first consider the kernels in $\{K^{(\larr)}_i : i \in [n]\}$.
Let $p, p' \in K^{(\larr)}_i$ be two positions such that $p_i =
p'_i-1$.  Assume that there exists a position $q \in
K^{(\larr)}_j$, $j \neq i$, such that $p < q < p'$.  We know 
(by Corollary~\ref{cor:kernel} and definition of kernel) that
$\wg(G_i,\nu'_i,p_i) \geq \ngh$ and $\wg(G_j,\nu'_j,q_j) \geq
\ngh$.  Since $G_i$ and $G_j$ have disjoint edge sets, and since
in between $p$ and $q$ there is no position associated with a
node from $G_i$, we conclude that $\wg(G',\larr,q) \geq 4n^4+2 >
k' + n^2$, for $n$ sufficiently large.  This is in contrast with our
assumption about the linear arrangement~$\larr$.

Essentially the same argument can be used when we consider all
of the kernels in~${\cal K}^{(\larr)}$.
\end{proof}

Intuitively, the above lemma states that in any linear arrangement 
$\larr$ of $G'$ with maximum width at most $k' + n^2$, the kernels 
of the grid components of $G'$ cannot overlap one with the other.  
As a consequence,
$\larr$ induces an ordering  of the nodes of the source graph $G$ 
which is determined by the positions of the corresponding kernels.

\begin{lemma}\label{l4}
Let $\larr$ be a linear arrangement of $G'$ having maximum width at
most $k' + n^2$.  Then either $K^{(\larr)}_L < K^{(\larr)}_M <
K^{(\larr)}_R$ or $K^{(\larr)}_R < K^{(\larr)}_M < K^{(\larr)}_L$.
\end{lemma}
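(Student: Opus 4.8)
The plan is to argue by contradiction. Suppose $M$ is not between $L$ and $R$; then the kernels $K^{(\larr)}_L$ and $K^{(\larr)}_R$ lie on the same side of $K^{(\larr)}_M$ in the total order guaranteed by Lemma~\ref{l2}. Reflecting the whole linear arrangement leaves every width $\wg(G',\larr,\cdot)$ unchanged, so I may assume both lie to the right of $K^{(\larr)}_M$, i.e.\ that $M$ is the leftmost of the three. The middle grid is then whichever of $L,R$ comes first; I treat the case $K^{(\larr)}_M < K^{(\larr)}_L < K^{(\larr)}_R$, in which $L$ is the middle grid, the remaining case being identical with the roles of $L$ and $R$ interchanged. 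The goal is to exhibit a single gap $i$ whose width exceeds $k'+n^2$, contradicting the hypothesis on $\larr$.

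First I would fix a gap $i$ inside $K^{(\larr)}_L$. A useful observation is that the arrangement $\larr$ restricted to $L$ is exactly $\nu_L$, and the set of $L$-nodes lying to the left of $i$ is therefore a \emph{prefix} of $\nu_L$; hence Corollary~\ref{cor:kernel} yields at least $\lgh$ internal $L$-edges across $i$. The same prefix observation applies to $M$ and $R$: writing $M_P,M_Q$ (resp.\ $R_P,R_Q$) for the $M$-nodes (resp.\ $R$-nodes) left and right of $i$, the internal contributions across $i$ are $\wg(M,\nu_M,|M_P|)$ and $\wg(R,\nu_R,|R_P|)$. If either grid were substantially split by $i$, namely $|M_Q|\geq(\mgh)^2$ or $|R_P|\geq(\rgh)^2$, then Lemma~\ref{lem:partition} would force a further $\mgh$ (resp.\ $\rgh$) edges across $i$; together with the $\lgh$ edges from $L$ this already gives width at least $\lgh+\mgh>k'+n^2$. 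So I may assume that $M$ sits almost entirely to the left of every gap of $K^{(\larr)}_L$ and $R$ almost entirely to the right.

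The heart of the matter is then the bundle of $\mgh$ bridge edges joining $M$ to $R$. Since $R$---and in fact everything whose kernel lies to the right of $K^{(\larr)}_L$---is attached to the rest of $G'$ only through $M$, whose kernel lies to the left of $K^{(\larr)}_L$, this bundle must traverse the region occupied by $K^{(\larr)}_L$. A bridge avoids contributing to the width at $i$ only if its $M$-endpoint is placed right of $i$ or its $R$-endpoint left of $i$; but such a misplaced endpoint is a node of $M$ (resp.\ $R$) severed from its own kernel by $K^{(\larr)}_L$, and so drags internal grid edges of $M$ (resp.\ $R$) across $i$. The quantitative claim I must establish is a conservation statement: at a suitable $i\in K^{(\larr)}_L$, the quantity (bridges crossing $i$)$\,+\,\wg(M,\nu_M,|M_P|)+\wg(R,\nu_R,|R_P|)$ is at least the size $\mgh$ of the bottleneck bundle separating $K^{(\larr)}_M$ from $K^{(\larr)}_R$. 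Granting this, the width at $i$ is at least $\lgh+\mgh>k'+n^2=\kprime+n^2$, which is the contradiction, and the reflected and $L\leftrightarrow R$ cases finish the proof.

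I expect this conservation claim to be the main obstacle. The coarse bound of Lemma~\ref{lem:partition} only detects imbalances of order $(\mgh)^2=\Theta(n^8)$, whereas the bridge bundle has only $\Theta(n^4)$ edges, so one cannot simply control the misplaced endpoints by $|M_Q|+|R_P|$. Making the rerouting cost precise---showing that each bridge endpoint moved to the wrong side of $K^{(\larr)}_L$ is paid for by a \emph{distinct} internal edge of its grid crossing the same gap, so that the combined total never falls below the $\mgh$ edges of the bottleneck---is exactly the ``second property'' flagged in the overview of this section. It is here that the detailed isoperimetric structure of the grid gadgets, together with the tight choice $k'=\kprime$ (whose slack above $\lgh$ is merely $2n^3+2k+2$, far below $\mgh$), must be exploited.
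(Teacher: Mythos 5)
Your case reduction and your first step are sound: by Lemma~\ref{l2} the kernels are totally ordered, reflection preserves all widths, and if for some gap $i \in K^{(\larr)}_L$ either $\size{M_Q} \geq (\mgh)^2$ or $\size{R_P} \geq (\rgh)^2$, then Lemma~\ref{lem:partition} (applied to $M$ or to $R$, whose kernel supplies the required $\Theta(n^8)$ nodes on the opposite side of $i$), together with the $\lgh$ internal $L$-edges guaranteed by Corollary~\ref{cor:kernel}, gives width at least $\lgh + \mgh = 5n^4+2 > k'+n^2$. But the proof then stops exactly at the step that matters: the ``conservation claim'' --- that every bridge edge failing to cross $i$ is paid for by a distinct internal $M$- or $R$-edge crossing $i$ --- is asserted, not proved, and your own diagnosis of why it is hard is accurate for the route you chose: Lemma~\ref{lem:partition} only reacts to sets of size $(\mgh)^2 = \Theta(n^8)$, while the bridge bundle has only $\mgh = \Theta(n^4)$ edges, so all of its endpoints could in principle be misplaced without either grid being ``substantially split.'' Note moreover that the tool the paper later develops for misplacement arguments, Lemma~\ref{lem:firstrow}, does not apply here as stated: the $M$-endpoints of the bridges form the last \emph{column} of $M$ and the $R$-endpoints lie in the first \emph{column} of $R$, whereas Lemma~\ref{lem:firstrow} concerns nodes of the first or last \emph{row}; you would need to state and prove a transposed variant and redo its counting. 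As written, the proposal has a genuine gap at its central step.

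The paper closes this gap with one observation that your proposal misses: the bridge edges should not be tracked as a separate bundle, because by construction of the composed grid $\cgrid$ they glue $M$ and the adjacent $\mgh$ rows of the outer grid into a \emph{single} grid, to which Lemma~\ref{lem:partition} applies directly. In the paper's case $K^{(\larr)}_L < K^{(\larr)}_R < K^{(\larr)}_M$, the top $\mgh$ rows of $L$ together with all of $M$ form the grid $\grid[\mgh,\lgw+\mgw]$; at least $\lrklb - 12n^8 = \frlklb$ nodes of this combined grid lie to the left of every gap of $K^{(\larr)}_R$ (nodes of $L$'s kernel, discounting $L$'s bottom $n^4$ rows) and at least $\mklb$ lie to its right (nodes of $M$'s kernel), both exceeding $(\mgh)^2$, so Lemma~\ref{lem:partition} yields $\mgh$ crossing edges that are disjoint from the $\rgh$ internal $R$-edges of Corollary~\ref{cor:kernel}, giving $5n^4+2$ in total. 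In your case $K^{(\larr)}_M < K^{(\larr)}_L < K^{(\larr)}_R$ the identical trick works with the combined grid $\grid[\mgh,\mgw+\rgw]$ formed by $M$ and the bottom $\mgh$ rows of $R$: the kernel of $M$ supplies at least $\mklb$ nodes to the left of each gap of $K^{(\larr)}_L$, the bottom rows of $R$ supply at least $\lrklb - 12n^8 = \frlklb$ nodes to the right, and the $\mgh$ edges produced by Lemma~\ref{lem:partition} are precisely your ``bridges plus rerouting cost,'' obtained without any analysis of misplaced endpoints.
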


\begin{proof}
Assuming $K^{(\larr)}_L < K^{(\larr)}_R$, we show below that, under $\larr$, kernel $K^{(\larr)}_R$ cannot be placed in between kernels $K^{(\larr)}_L$ and $K^{(\larr)}_M$.  Essentially the same argument can be used to show that kernel $K^{(\larr)}_L$ cannot be placed in between kernels $K^{(\larr)}_M$ and $K^{(\larr)}_R$.

Assume that we have $K^{(\larr)}_L < K^{(\larr)}_R < K^{(\larr)}_M$.  
Since the number of nodes of $L$ which lie to the
left of $K^{(\larr)}_R$ is at least equal to $\lrklb$, and since
at most $\lminusmh(\lgw)$ nodes of $L$ can belong to its last
$\lminusmh$ rows, we have that at least $\frlklb$ nodes of the
first $\mgh$ rows of $L$ lie to the left of $K^{(\larr)}_R$. On
the other hand, since at least $\mklb$ nodes of $M$ belong to
$K^{(\larr)}_M$, we have that at least $\mklb$ nodes of $M$ lie
to the right of $K^{(\larr)}_R$.

Let us now consider the grid $X=\grid[\mgh,\lgw+\mgw]$ composed 
by the $(\mgh)$ upper rows of $L$ and all of the rows of $M$.
We apply Lemma~\ref{lem:partition} to $X$.  
If we define $V_1$ (respectively, $V_2$) as the
set of nodes of $X$ contained in $K^{(\larr)}_L$
(respectively, $K^{(\larr)}_M$), we have that both $\size{V_1}$ and
$\size{V_2}$ are greater than $(\ngh)^2$.  
Then we have that at least $\ngh$ edges internal 
to $X$ cross over all positions (gaps) of $K^{(\larr)}_R$. 
From the definition of kernels, there are at least $3n^4+1$
edges internal to $K^{(\larr)}_R$ crossing over each position of $K^{(\larr)}_R$.
Adding these together, we have at least $5n^4 + 2$ edges
at each position of $K^{(\larr)}_R$, 
which is greater than $k' + n^2$ (for $n$ sufficiently large). 

The case of $K^{(\larr)}_R < K^{(\larr)}_L$ can be dealt
with in a very similar way and the lemma thus follows.
\end{proof}

In the following, without loss of generality, we will always assume that $K^{(\larr)}_L < K^{(\larr)}_M < K^{(\larr)}_R$.  By applying essentially the same argument from the proof of Lemma~\ref{l4}, we can show that $K^{(\larr)}_i$ cannot lie between $K^{(\larr)}_L$ and $K^{(\larr)}_M$ or between $K^{(\larr)}_M$ and $K^{(\larr)}_R$, which implies the following result.

\begin{lemma}
\label{l3}
Let $\larr$ be a linear arrangement of $G'$ having maximum width at most $k' + n^2$.  
For any $i \in [n]$, either $K^{(\larr)}_i < K^{(\larr)}_L$ or 
$K^{(\larr)}_i > K^{(\larr)}_R$.
\end{lemma}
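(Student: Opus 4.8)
The plan is to rule out the two ``interior'' placements of $K^{(\larr)}_i$ and then invoke the total order on kernels already available to us. By Lemma~\ref{l2} the kernels in ${\cal K}^{(\larr)}$ are pairwise comparable under $<$, and by Lemma~\ref{l4} together with the standing convention we have $K^{(\larr)}_L < K^{(\larr)}_M < K^{(\larr)}_R$. Hence, for a fixed $i \in [n]$, the kernel $K^{(\larr)}_i$ must fall into exactly one of four slots relative to these three: to the left of $K^{(\larr)}_L$, between $K^{(\larr)}_L$ and $K^{(\larr)}_M$, between $K^{(\larr)}_M$ and $K^{(\larr)}_R$, or to the right of $K^{(\larr)}_R$. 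The first and last slots are precisely the conclusion, so it suffices to derive a contradiction from each of the two interior slots, reusing the counting argument of Lemma~\ref{l4} almost verbatim.

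For the slot $K^{(\larr)}_L < K^{(\larr)}_i < K^{(\larr)}_M$, I would first bound the number of ``useful'' $L$- and $M$-nodes that straddle $K^{(\larr)}_i$. Since $K^{(\larr)}_L < K^{(\larr)}_i$, all of the at least $\lrklb$ nodes of $K^{(\larr)}_L$ lie to the left of $K^{(\larr)}_i$; discarding the at most $\lminusmh \cdot \lgw$ of them that sit in the last $\lminusmh$ rows of $L$ leaves at least $\frlklb$ nodes of the first $\mgh$ rows of $L$ to the left of $K^{(\larr)}_i$. Likewise, since $K^{(\larr)}_i < K^{(\larr)}_M$, at least $\mklb$ nodes of $M$ lie to the right of $K^{(\larr)}_i$. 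Taking $X = \grid[\mgh, \lgw + \mgw]$ to be the grid formed by the first $\mgh$ rows of $L$ and all of $M$, and letting $V_1$, $V_2$ be its nodes lying in $K^{(\larr)}_L$ and $K^{(\larr)}_M$ respectively, both sets exceed $(\ngh)^2$, so Lemma~\ref{lem:partition} yields at least $\mgh$ edges internal to $X$ crossing every gap inside $K^{(\larr)}_i$. These edges are disjoint from the edges of $G_i$, and by the definition of $K^{(\larr)}_i$ (via Corollary~\ref{cor:kernel}) at least $\ngh$ further edges internal to $G_i$ cross each such gap. Adding the two disjoint contributions gives a width of at least $\mgh + \ngh = 4n^4 + 2$ at any gap of $K^{(\larr)}_i$, which exceeds $k' + n^2$ for $n$ large, using $2k \le 3n$ because $G$ is cubic. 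This contradicts the assumption on $\larr$.

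The slot $K^{(\larr)}_M < K^{(\larr)}_i < K^{(\larr)}_R$ is handled by the mirror-image argument, now straddling $K^{(\larr)}_i$ with $M$ on the left and $R$ on the right. Here I would use the grid $X' = \grid[\mgh, \mgw + \rgw]$ obtained by gluing all of $M$ to the $\mgh$ rows of $R$ to which $M$ attaches in the composed grid $S$, take $V_1 \subseteq K^{(\larr)}_M$ and $V_2 \subseteq K^{(\larr)}_R$ (again of size exceeding $(\ngh)^2$ after discarding the $\lminusmh$ non-attaching rows of $R$), and obtain the same bound $\mgh + \ngh = 4n^4 + 2 > k' + n^2$. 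With both interior slots excluded, the total order from Lemma~\ref{l2} forces $K^{(\larr)}_i < K^{(\larr)}_L$ or $K^{(\larr)}_i > K^{(\larr)}_R$, which is the claim.

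I expect the only delicate points to be bookkeeping rather than conceptual: verifying that the relevant node counts indeed exceed $(\ngh)^2$ so that Lemma~\ref{lem:partition} applies, checking the width hypothesis $\lgw + \mgw \ge 2\mgh + 1$ (and the analogous bound for $X'$), and confirming that the $\mgh$ crossing edges of $X$ (or $X'$) are genuinely distinct from the $\ngh$ crossing edges internal to $G_i$, so that the two counts add. Identifying correctly which $\mgh$ rows of $R$ the grid $X'$ should use---namely the rows $\lgh-\mgh+1,\ldots,\lgh$ along which $M$'s last column connects to $R$ in $S$---is the one place where the explicit construction of $G'$ must be consulted.
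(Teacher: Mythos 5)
Your proposal is correct and takes essentially the same approach as the paper, which proves this lemma simply by remarking that the counting argument of Lemma~\ref{l4} (via Lemma~\ref{lem:partition} and Corollary~\ref{cor:kernel}) applies verbatim to exclude the two interior slots; your write-up just makes that argument explicit, including the mirror case where $M$ is glued to the rows $\lgh-\mgh+1,\ldots,\lgh$ of $R$, and the arithmetic ($\mgh+\ngh = 4n^4+2 > k'+n^2$) matches the paper's bounds.
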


So far we have seen that kernels always appear in some total
order in the linear arrangements we are interested in, and with
the kernels of grids $L, M$ and $R$ all in a row.  We move on now
with a second property of the family of linear arrangements we
are looking at.  As already described above, this property states
that, if our goal is the one of optimizing the width at certain
gaps, then misplacing nodes that are not in a kernel does not
result in any gain.  We first provide two results about general
grids, and then come back to $G'$ and our linear arrangements.

\begin{lemma}\label{l5}
Let $X = \grid[\ggh,\ggw]$ and let $S$ be a set of nodes of $X$
such that $\size{S} \leq \ggw(\ggh-e-2)$ with $e\geq 0$ and there
exists $w$ with $1 \leq w \leq \ggw$ such that, for any $h$ with
$1 \leq h \leq \ggh$, $x^{h,w}\in S$ (in other words $S$ contains
an entire column of the grid). Then, $\delta(S,\overline{S})$
contains at least $e+2$ edges in distinct rows, where
$\overline{S}$ denotes the set of nodes of the grid which do not
belong to $S$.
\end{lemma}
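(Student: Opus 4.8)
The plan is to argue by a direct counting of the rows of $X$ that meet the complement $\overline{S}$, using the full-column hypothesis only at the very last step. First I would bound the size of the complement. Since $\size{S} \leq \ggw(\ggh - e - 2)$ and the grid has exactly $\ggh\ggw$ nodes, we obtain immediately
\[
\size{\overline{S}} = \ggh\ggw - \size{S} \geq \ggh\ggw - \ggw(\ggh - e - 2) = \ggw(e+2) \,.
\]

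Next I would count the rows that contain a node of $\overline{S}$. Let $R$ be the set of row indices $h \in [\ggh]$ such that some node $x^{h,w'}$ lies in $\overline{S}$. Every row of the grid has exactly $\ggw$ nodes, so each row in $R$ contributes at most $\ggw$ nodes to $\overline{S}$, while rows outside $R$ contribute none. Hence $\size{\overline{S}} \leq \size{R}\,\ggw$, and combining this with the lower bound above yields $\size{R} \geq e+2$.

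The final step is where the full-column hypothesis enters. Fix a column index $w$ with the property that $x^{h,w} \in S$ for every $h \in [\ggh]$. For each row $h \in R$, the node $x^{h,w}$ lies in $S$, while by definition of $R$ some node $x^{h,w'}$ of the same row lies in $\overline{S}$. Scanning along the horizontal path $x^{h,1}, x^{h,2}, \ldots, x^{h,\ggw}$ from $x^{h,w}$ toward $x^{h,w'}$, the membership must switch from $S$ to $\overline{S}$ at some consecutive pair, so row $h$ contains at least one horizontal edge belonging to $\delta(S,\overline{S})$. These crossing edges sit in pairwise distinct rows, one for each $h \in R$, and there are at least $\size{R} \geq e+2$ of them, which is exactly the claim.

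I do not expect any genuine obstacle in carrying this out; the argument is an elementary pigeonhole count. The one point worth flagging is that the full-column assumption is precisely what forces every row to contain an $S$-node, so that any row meeting $\overline{S}$ necessarily produces an in-row crossing edge. Without this assumption, a row lying entirely inside $\overline{S}$ would enlarge $\size{\overline{S}}$ without contributing any horizontal crossing, and the counting of $\size{R}$ would no longer translate into crossing edges in distinct rows.
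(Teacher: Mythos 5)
Your proof is correct and follows essentially the same route as the paper's: both arguments work row by row, extracting a horizontal edge of $\delta(S,\overline{S})$ from every row that meets both $S$ and $\overline{S}$, and then counting such rows via the hypothesis $\size{S} \leq \ggw(\ggh-e-2)$. The only cosmetic difference is that you count rows meeting $\overline{S}$ (at least $\size{\overline{S}}/\ggw \geq e+2$ of them), whereas the paper counts rows entirely contained in $S$ (at most $\lfloor \size{S}/\ggw \rfloor$ of them); these are complementary forms of the same pigeonhole count, and your write-up has the minor virtue of making explicit that the full-column hypothesis is what rules out rows lying wholly in $\overline{S}$.
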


\begin{proof}
For each $h$ with $1 \leq h \leq \ggh$, either there exists $w$
with $1 \leq w < \ggw$ such that $(x^{h,w} \in S \wedge x^{h,w+1}
\in \overline{S}) \vee (x^{h,w} \in \overline{S} \wedge x^{h,w+1}
\in S)$ (in this case, the row contributes at least by one
horizontal edge to $\delta(S,\overline{S})$), or, for any $w$
with $1 \leq w \leq \ggw$, $x^{h,w} \in S$. This latter case,
however, can happen at most $\left \lfloor \frac{\size{S}}{\ggw}
\right \rfloor$ times. Since $\size{S} \leq \ggw(\ggh-e-2)$, we have
that the first case happens at least $e+2$ times, thus proving
the lemma.
\end {proof}

\begin{lemma}\label{lem:firstrow}
Let $X = \grid[\ggh,\ggw]$ and let $S$ be a set of nodes of $X$
such that $\size{S} \leq \ggw(\ggh - \size{F}-2)$, where $F$ is a subset of
the set of nodes of the first row or of the last row which belong
to $S$. Then, $\delta(S,\overline{S})$ contains at least $\size{F}$
edges not included in the first row or in the last row.
\end{lemma}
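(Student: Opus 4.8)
The plan is to reduce the statement to Lemma~\ref{l5}, which already counts crossing edges in distinct rows whenever $S$ swallows an entire column. Throughout I would fix the convention that ``an edge included in the first (last) row'' means a \emph{horizontal} grid edge both of whose endpoints lie in that row; accordingly, the edges I must produce are the vertical edges together with the horizontal edges of the interior rows $2,\ldots,\ggh-1$. Without loss of generality I would take $F$ to lie in the first row (the last-row case being symmetric) and assume $\size{F}\geq 1$, since otherwise the claim is vacuous. Note that the nodes of $F$ occupy $\size{F}$ distinct columns, and that the hypothesis $\size{S}\leq \ggw(\ggh-\size{F}-2)$ together with $F\subseteq S$ is only satisfiable when $\ggh\geq\size{F}+2$, so there is room for the argument to make sense.

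The first move is to split on whether some column containing a node of $F$ lies entirely in $S$. Suppose there is a $w$ with $x^{1,w}\in F$ and $x^{h,w}\in S$ for every $h\in[\ggh]$. Then $S$ contains a full column, so I can invoke Lemma~\ref{l5} with $e=\size{F}$: the size hypothesis is exactly the one that lemma demands, and it yields at least $\size{F}+2$ edges of $\delta(S,\overline{S})$ lying in pairwise distinct rows, each being a horizontal edge of its row. Since at most one of these rows can be the first row and at most one the last row, at least $\size{F}$ of them lie in interior rows, and these horizontal edges are not included in the first or last row.

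In the complementary case, for every $w$ with $x^{1,w}\in F$ the column $w$ is not entirely contained in $S$. Fixing such a $w$, and using $x^{1,w}\in F\subseteq S$ against the existence of a node of column $w$ in $\overline{S}$, I would locate an index $r\in[\ggh-1]$ with $x^{r,w}\in S$ and $x^{r+1,w}\in\overline{S}$; this produces a vertical crossing edge in column $w$, which is never a horizontal edge of the first or last row and hence qualifies. As the columns determined by $F$ are distinct, these vertical edges are distinct, giving $\size{F}$ good edges directly. Either way $\delta(S,\overline{S})$ contains at least $\size{F}$ edges not included in the first or last row.

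The only delicate points are bookkeeping ones, and I expect no genuine obstacle once the right instantiation is spotted. The key realizations are that the whole lemma is simply Lemma~\ref{l5} applied at $e=\size{F}$ as soon as a full column is available, and that the no-full-column case is handled for free by the vertical descents forced by the first-row nodes of $F$. The two careful observations are that passing from ``distinct rows'' to ``interior rows'' costs at most the two extremal rows (which is precisely why the conclusion is $\size{F}$ rather than $\size{F}+2$), and that the stated convention on what counts as a first/last-row edge is what lets the Case~II vertical edges count even when they happen to touch row $1$.
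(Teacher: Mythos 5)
Your overall strategy coincides with the paper's own proof: the same dichotomy (either some column meeting $F$ lies entirely in $S$, in which case Lemma~\ref{l5} with $e=\size{F}$ yields $\size{F}+2$ horizontal crossing edges on distinct rows, of which at most two are discarded as lying in the extremal rows; or else every column meeting $F$ supplies a vertical crossing edge), and the same convention that vertical edges qualify even when incident to row $1$ or row $\ggh$. The gap is your ``without loss of generality'' reduction to the case where $F$ lies entirely in the first row. The lemma, as the paper states and later uses it, must allow $F$ to contain nodes of the first row \emph{and} of the last row simultaneously: in Lemma~\ref{l7} (and again in Lemmas~\ref{l:ecutwidth2bisec} and~\ref{l:bisec2emcutwidth}) the set playing the role of $F$ consists of the misplaced endpoints of edges exiting a grid $G_i$, and by construction those exit edges attach to nodes in both the first and the last row of $G_i$. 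The mixed case does not follow from the two pure cases by symmetry, nor by splitting $F$ into its first-row part $F_1$ and last-row part $F_2$ and adding the two bounds, because the edge sets produced by the two separate applications need not be disjoint (for instance, if full columns exist, both applications of Lemma~\ref{l5} may return the same horizontal edges).

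The missing step, which is exactly what the paper's proof supplies, concerns a column $w$ carrying two nodes of $F$, one in each extremal row, in your Case~II: such a column must account for \emph{two} units of $\size{F}$. Since then both $x^{1,w}$ and $x^{\ggh,w}$ belong to $S$ while the column is not entirely contained in $S$, the column must both exit and re-enter $S$, so it contains at least two distinct vertical crossing edges: your descent-from-top edge joins rows $r,r+1$ where $r$ is maximal with $x^{1,w},\ldots,x^{r,w}\in S$, the symmetric ascent-from-bottom edge joins rows $s-1,s$ where $s$ is minimal with $x^{s,w},\ldots,x^{\ggh,w}\in S$, and $x^{r+1,w}\notin S$ forces $s\geq r+2$, so the two edges are different. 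With this observation, summing the per-column counts gives at least $\size{F}$ vertical edges in the mixed case as well, and your argument goes through; your Case~I needs no change, since Lemma~\ref{l5} only requires one full column and is invoked with $e=\size{F}$, which counts all of $F$ at once.
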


\begin{proof}
For each $w$ with $1 \leq w \leq \ggw$ such that $x^{1,w} \in F
\vee x^{\ggh,w} \in F$, either there exists $h$ with $1 \leq h <
\ggh$ such that $(x^{h,w}\in S \wedge x^{h+1,w}\in \overline{S})
\vee (x^{h,w}\in \overline{S} \wedge x^{h+1,w}\in S)$ (in this
case, the column contributes at least by one vertical edge to
$\delta(S,\overline{S})$), or, for any $h$ with $1 \leq h \leq
\ggh$, $x^{h,w}\in S$ (that is, $S$ includes the entire $w$-th
column). If this latter case happens at least once, then we can
apply the previous lemma with $e=\size{F}$, thus obtaining that
$\delta(S,\overline{S})$ contains at least $\size{F}+2$ horizontal
edges on distinct rows, which implies that
$\delta(S,\overline{S})$ contains at least $\size{F}$ edges not
included in the first row or in the last row. Otherwise,
$\delta(S,\overline{S})$ contains at least $\size{F}$ vertical edges:
indeed, if $x^{1,w} \in S$ and $x^{\ggh,w} \not\in S$ or vice
versa, then at least one vertical edge of the $w$-th column is in
$\delta(S,\overline{S})$, otherwise at least two vertical edges
of this column are in $\delta(S,\overline{S})$ (since, in this
case, we have both to exit from $S$ and to enter again in $S$).
\end{proof}

We need to introduce some additional notation.  From now on, we denote by $l^*$ the first gap from left to right occurring between two vertices of $K^{(\larr)}_L$, and we denote by $r^*$ the first gap from left to right occurring between two vertices of $K^{(\larr)}_R$.
For any $i \in [n]$, we define the value $\alpha_i$ as follows. If $K^{(\larr)}_i > K^{(\larr)}_L$, $\alpha_i$ is the number of nodes of the first or of the last row of $G_i$ whose position under $\larr$ is smaller than $l^*$ and which are endpoints of an edge exiting $G_i$.  Otherwise, $\alpha_i$ is the number of nodes of the first or of the last row of $G_i$ whose position is greater than $l^*$ and which are endpoints of an edge exiting $G_i$. Similarly, we denote by $\alpha_M$ the number of nodes of the first row of $M$ whose position is smaller than $l^*$ and which are endpoints of an edge exiting $M$.

\begin{lemma}
\label{l7}
For any linear arrangement $\larr$ of $G'$ having maximum width at
most $k' + n^2$ and for any $i \in [n]$, there exist at least $\alpha_i$
distinct edges within $G_i$ which cross over $l^*$.
\end{lemma}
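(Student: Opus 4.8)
The plan is to derive this as a fairly direct consequence of Lemma~\ref{lem:firstrow}, viewing the nodes of $G_i$ that sit on the ``wrong'' side of the gap $l^*$ as a set $S$ of the grid $G_i=\grid[\ngh,\ngw]$, and reading off the $\alpha_i$ misplaced boundary nodes as the distinguished subset $F$ of the first or last row. By Lemma~\ref{l3} every kernel $K^{(\larr)}_i$ satisfies either $K^{(\larr)}_i < K^{(\larr)}_L$ or $K^{(\larr)}_i > K^{(\larr)}_R > K^{(\larr)}_L$, so the two branches in the definition of $\alpha_i$ are exhaustive. I would treat the case $K^{(\larr)}_i > K^{(\larr)}_L$ explicitly; the case $K^{(\larr)}_i < K^{(\larr)}_L$ is symmetric, obtained by reflecting left and right about $l^*$.

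First I would fix $S$ to be the set of nodes of $G_i$ whose position under $\larr$ lies to the left of $l^*$, and let $\overline{S}=G_i\setminus S$ be the nodes to the right. An edge of $G_i$ crosses $l^*$ exactly when it belongs to $\delta(S,\overline{S})$, so it suffices to bound this quantity from below by $\alpha_i$. Because $l^*$ lies between two vertices of $K^{(\larr)}_L$, there is a vertex of $K^{(\larr)}_L$ to the right of $l^*$; since $K^{(\larr)}_i > K^{(\larr)}_L$, every node of $K^{(\larr)}_i$ then lies to the right of $l^*$ as well, so $S$ contains no kernel node of $G_i$. Consequently $\size{S}$ is at most the number of non-kernel nodes of $G_i$, namely $(\ngh)(\ngw)-\size{K^{(\larr)}_i}=2(\ngh)^2-1$.

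Next I would set $F$ to be the subset of nodes of the first or last row of $G_i$ that lie in $S$ and are endpoints of an edge exiting $G_i$; by the very definition of $\alpha_i$ in this case, $\size{F}=\alpha_i$. Since each vertex of $G_i$ carries at most one edge leaving $G_i$, and $G_i$ has $\nee+6$ such edges in total (the $\nee$ connections to $M$ together with the $6$ connections to the other grids coming from the three incidences of $v_i$ in the cubic graph $G$), we have $\size{F}\le\nee+6$. It then remains to verify the hypothesis $\size{S}\le\ngw(\ngh-\size{F}-2)$ of Lemma~\ref{lem:firstrow}: the left side is $8n^8+\order{n^4}$, while the right side is at least $6n^4(2n^4-1-(\nee+6))=12n^8-\order{n^6}$, so the inequality holds for $n$ sufficiently large.

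With the hypothesis in hand, Lemma~\ref{lem:firstrow} yields at least $\size{F}=\alpha_i$ edges of $\delta(S,\overline{S})$ that lie neither in the first nor in the last row of $G_i$. Each such edge is internal to $G_i$ and, having one endpoint in $S$ and one in $\overline{S}$, crosses $l^*$; moreover these are grid edges, hence automatically disjoint from the exiting edges counted by $\alpha_i$. This gives the required $\alpha_i$ distinct within-$G_i$ edges across $l^*$. I expect the only real friction to be bookkeeping: pinning down that $S$ meets no kernel node (which relies on the ordering of kernels from Lemmas~\ref{l2}--\ref{l4} and on $l^*$ lying inside $K^{(\larr)}_L$) and checking the size inequality with the correct kernel-size and exiting-edge counts; the invocation of Lemma~\ref{lem:firstrow} itself is then immediate.
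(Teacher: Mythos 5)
Your proposal is correct and follows essentially the same route as the paper's own proof: fix the case $K^{(\larr)}_i > K^{(\larr)}_L$, take $S$ to be the nodes of $G_i$ on the wrong side of $l^*$, use the fact that the kernel $K^{(\larr)}_i$ lies entirely beyond $l^*$ to bound $\size{S}$, bound $\alpha_i \leq \nee + 6$, verify the precondition $\size{S} \leq \ngw(\ngh - \alpha_i - 2)$, and invoke Lemma~\ref{lem:firstrow}. The only differences are cosmetic: you compute the non-kernel count exactly ($2(\ngh)^2 - 1$) where the paper uses the looser bounds $13n^8 - 3n^8 = 10n^8$, and you spell out why $S$ avoids the kernel, which the paper leaves implicit.
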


\begin{proof}
First observe that $\alpha_i\leq \nee+6$ because there 
are $\nee$ edges connecting $G_i$ to $M$ and 6 edges connecting 
$G_i$ to other grids $G_j$. 
We only study the case in which $K^{(\larr)}_i > K^{(\larr)}_L$, since 
the other case can be proved in the same way.

Let $V_i$ be the vertex set of $G_i$.  
Let ${\cal P}(G_i) = \{p : (\larr)^{-1}(p) \in V_i\}$ and let 
$S_i = \{p : (\larr)^{-1}(p) \in V_i \wedge p < l^*\}$ 
(clearly, $\size{S_i} \geq \alpha_i$).
Since $\size{{\cal P}(G_i)} = 12n^8+6n^4 \le 13n^8$ and $\size{K^{(\larr)}_i} \ge 3n^8$, and since $S_i$ is a subset of ${\cal P}(G_i) - K^{(\larr)}_i$, we are guaranteed that
$\size{S_i} \le 10n^8$.  Because $10n^8 \le (\ngw)(\ngh - \alpha_i - 2)$
(assuming $n \ge 4$)
we have the precondition $\size{S_i} \le (\ngw)(\ngh - \alpha_i - 2)$
that we need in order to apply Lemma~\ref{lem:firstrow}.
Lemma~\ref{lem:firstrow} implies that
there exist at least $\size{\alpha_i}$ distinct edges connecting $S_i$
to $\overline{S_i}$ (that is, edges within $G_i$): these edges
clearly cross over $l^*$.
\end{proof}

Similarly, we can prove the following result.

\begin{lemma}\label{l8}
For any linear arrangement $\larr$ of $G'$ having maximum width at most $k' + n^2$, 
there exist at least $\alpha_M$ distinct edges within $M$ which cross over $l^*$.
\end{lemma}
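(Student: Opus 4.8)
The plan is to reproduce, almost verbatim, the argument of Lemma~\ref{l7}, with the node grid $G_i$ replaced throughout by the middle grid $M$ and with a single application of Lemma~\ref{lem:firstrow} to $M$. Everything is carried out under the standing convention $K^{(\larr)}_L < K^{(\larr)}_M < K^{(\larr)}_R$ established after Lemma~\ref{l4}, so that the kernel of $M$ lies entirely to the right of the gap $l^*$, which by definition sits inside $K^{(\larr)}_L$. This one-sided placement is what removes the case split present in Lemma~\ref{l7}: whereas a grid $G_i$ may have its kernel on either side of $l^*$, the kernel of $M$ is forced to the right.

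First I would bound $\alpha_M$. The first-row nodes of $M$ carrying an edge that exits $M$ are the endpoints of the $\nee$ edges running to each of the $n$ grids $G_i$ (all of the form $(m^{1,\cdot},\cdot)$ in the Block~1--3 specifications), together with the two corner nodes attaching $M$ to $L$ and $R$; hence $\alpha_M \le \nee \cdot n + 2 = \order{n^3}$. The essential feature is that this is of strictly smaller order than $\mgh = 2n^4+1$.

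Next I would set ${\cal P}(M) = \{p : (\larr)^{-1}(p) \in V_M\}$, with $V_M$ the vertex set of $M$, and $S_M = \{p : (\larr)^{-1}(p) \in V_M \wedge p < l^*\}$. Because $l^*$ lies inside $K^{(\larr)}_L$ and $K^{(\larr)}_L < K^{(\larr)}_M$, no position of $K^{(\larr)}_M$ can precede $l^*$, so $S_M \subseteq {\cal P}(M) \setminus K^{(\larr)}_M$. Using $\size{{\cal P}(M)} = (\mgh)(\mgw) = 16n^8 + \order{n^4}$ and the lower bound $\size{K^{(\larr)}_M} \ge \mklb$ already recorded, this gives $\size{S_M} \le 9n^8 + \order{n^4} \le 10n^8$ for $n$ large. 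Together with $\alpha_M = \order{n^3}$, a routine estimate yields $\size{S_M} \le 10n^8 \le (\mgw)(\mgh - \alpha_M - 2) = (8n^4+1)(2n^4 - 1 - \alpha_M)$ for $n$ sufficiently large, which is exactly the hypothesis of Lemma~\ref{lem:firstrow} applied to $X = M$, with $F$ the set of first-row nodes of $M$ lying to the left of $l^*$ that are endpoints of edges exiting $M$ (so that $F \subseteq S_M$ and $\size{F} = \alpha_M$). The lemma then produces at least $\alpha_M$ edges in $\delta(S_M, \overline{S_M})$ avoiding the first and last rows of $M$; being internal edges of $M$ that join a node left of $l^*$ to a node not left of $l^*$, they cross over $l^*$, which is the claim.

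The one step I expect to require care is the inclusion $S_M \subseteq {\cal P}(M) \setminus K^{(\larr)}_M$, since it is precisely this fact that pushes $\size{S_M}$ below the threshold of Lemma~\ref{lem:firstrow}. It relies entirely on the kernel ordering $K^{(\larr)}_L < K^{(\larr)}_M$ from Lemma~\ref{l4} and on $l^*$ being a gap interior to $K^{(\larr)}_L$. The remaining inequalities are coarse order-of-magnitude comparisons ($10n^8$ against roughly $16n^8$) and hold comfortably for large $n$, so they amount to bookkeeping rather than a genuine obstacle.
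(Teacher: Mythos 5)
Your proof is correct and is precisely the argument the paper intends: the paper states this lemma with only the remark that it follows similarly to Lemma~\ref{l7}, and your adaptation — the kernel of $M$ forced to the right of $l^*$ by the convention $K^{(\larr)}_L < K^{(\larr)}_M < K^{(\larr)}_R$ (so no case split is needed), hence $S_M \subseteq {\cal P}(M) \setminus K^{(\larr)}_M$, followed by the size bound and a single application of Lemma~\ref{lem:firstrow} — is exactly the intended instantiation of that argument. Your bookkeeping also checks out: $\size{S_M} \le 16n^8 + \order{n^4} - 7n^8 \le 10n^8 \le (\mgw)(\mgh - \alpha_M - 2)$ for $n$ large, since $\alpha_M = \order{n^3}$.
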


From now on, let $\kappa^{(\larr)}_l$ be the number of kernels $K^{(\larr)}_i$ such that $K^{(\larr)}_i < K^{(\larr)}_L$ and let $\kappa^{(\larr)}_r$ be the number of kernels $K^{(\larr)}_i$ such that $K^{(\larr)}_i > K^{(\larr)}_R$.  Let also $\tau^{(\larr)}$ denote the number of edges $(v_i,v_j)$ in $G$ such that $K^{(\larr)}_i < K^{(\larr)}_L$ and $K^{(\larr)}_j > K^{(\larr)}_L$. 

\begin{lemma}\label{l9}
Let $\larr$ be a linear arrangement of $G'$ having maximum width at most $k' + n^2$.
There exist at least $\kappa^{(\larr)}_l \cdot (\nee) + 2\tau^{(\larr)}$ distinct edges which cross over $l^*$, not including edges internal to $L$ or $R$.
\end{lemma}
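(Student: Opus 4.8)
The plan is to account separately for the two kinds of edges that we \emph{expect} to cross $l^*$, and then show that whenever such an edge fails to cross, the failure is paid for, one-for-one, by an internal edge guaranteed by Lemmas~\ref{l7} and~\ref{l8}. Call an edge of $G'$ \emph{expected} if it is either (type A) one of the $\nee$ edges joining some $G_i$ with $K^{(\larr)}_i < K^{(\larr)}_L$ to the grid $M$, or (type B) one of the two copies in $G'$ of a source edge $(v_i,v_j)$ of $G$ with $K^{(\larr)}_i < K^{(\larr)}_L$ and $K^{(\larr)}_j > K^{(\larr)}_L$. There are exactly $\kappa^{(\larr)}_l \cdot (\nee)$ type-A expected edges and $2\tau^{(\larr)}$ type-B expected edges; the two families are disjoint, they join pairwise distinct grids, and none of them is internal to $L$ or $R$. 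If every expected edge crossed $l^*$ we would be done at once, so the work is to control the ones that do not.

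First I would pin down the global orientation using Lemma~\ref{l3}: we may assume $K^{(\larr)}_L < K^{(\larr)}_M < K^{(\larr)}_R$, each $K^{(\larr)}_i$ lies entirely to the left of $K^{(\larr)}_L$ or entirely to the right of $K^{(\larr)}_R$, and $l^*$ is a gap interior to $K^{(\larr)}_L$. Relative to $l^*$ every expected edge has a designated ``left'' endpoint and ``right'' endpoint, and by construction these endpoints are first-/last-row nodes of the relevant $G_i$ and first-row nodes of $M$. An expected edge fails to cross $l^*$ exactly when both its endpoints land on the same side; this forces precisely one of them to be a boundary node sitting on the wrong side of $l^*$ while carrying an edge exiting its grid (if both endpoints were misplaced, they would lie on opposite sides and the edge would in fact cross). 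Such a wrong-side boundary node of $G_i$ is counted by $\alpha_i$, and such a wrong-side first-row node of $M$ is counted by $\alpha_M$; the sign conventions in the definitions of $\alpha_i$ (wrong side $=$ right of $l^*$ for left grids, left of $l^*$ for right grids) match the two types exactly. Since the construction guarantees that each node of a $G_i$ has at most one edge leaving $G_i$, and each first-row node of $M$ carries at most one exiting edge, the charge from failed expected edge to misplaced node is injective, so the number of failed expected edges is at most $\sum_{i\in[n]} \alpha_i + \alpha_M$. Hence at least $\kappa^{(\larr)}_l \cdot (\nee) + 2\tau^{(\larr)} - \sum_{i\in[n]}\alpha_i - \alpha_M$ expected edges do cross $l^*$.

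To close the deficit I would invoke the compensation lemmas: Lemma~\ref{l7} supplies, for each $i$, at least $\alpha_i$ distinct edges internal to $G_i$ that cross $l^*$, and Lemma~\ref{l8} supplies at least $\alpha_M$ distinct edges internal to $M$ that cross $l^*$. These internal edges are pairwise distinct across different grids, are distinct from all expected edges (which join two different grids), and none is internal to $L$ or $R$. Adding the surviving expected edges to the edges furnished by Lemmas~\ref{l7} and~\ref{l8} gives at least
\[
\Bigl(\kappa^{(\larr)}_l \cdot (\nee) + 2\tau^{(\larr)} - \sum_{i\in[n]}\alpha_i - \alpha_M\Bigr) + \sum_{i\in[n]}\alpha_i + \alpha_M = \kappa^{(\larr)}_l \cdot (\nee) + 2\tau^{(\larr)}
\]
distinct edges crossing $l^*$, none internal to $L$ or $R$, as claimed.

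The main obstacle, and the step I would write most carefully, is the bookkeeping of the second paragraph: making the charging injective so that $\sum_{i}\alpha_i + \alpha_M$ is a genuine upper bound on the number of failures rather than an over-subtraction. The at-most-one-exiting-edge property of the construction is exactly what guarantees each misplaced node kills at most one expected edge, and the observation that a doubly misplaced edge actually crosses $l^*$ is what guarantees no failure is double-counted; these two facts together make the final cancellation exact.
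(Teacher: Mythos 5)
Your proposal is correct and follows essentially the same argument as the paper's proof: you build the same set of $\kappa^{(\larr)}_l \cdot (\nee) + 2\tau^{(\larr)}$ ``expected'' edges, observe that an expected edge fails to cross $l^*$ exactly when one endpoint is misplaced, bound the failures by $\alpha_M + \sum_i \alpha_i$ via the at-most-one-exiting-edge property, and recover the deficit with the internal edges supplied by Lemmas~\ref{l7} and~\ref{l8}. The only cosmetic difference is that the paper sums $\alpha_i$ only over the grids actually incident to expected edges, whereas you sum over all $i \in [n]$; both yield the same final count.
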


\begin{proof}
We define $I^{(\larr)}_L$ as the set of integers $i \in [n]$ such
that $K^{(\larr)}_i < K^{(\larr)}_L$.  Thus we have 
$\size{I^{(\larr)}_L} = \kappa^{(\larr)}_l$.   
We also define $J^{(\larr)}_L$ as the set of integers $j \in [n]$ such
that $K^{(\larr)}_j > K^{(\larr)}_L$ and there exists $i \in I^{(\larr)}_L$ 
with $(v_i,v_j) \in E$, where $E$ is the set of edges of $G$.  

For each $i \in I^{(\larr)}_L$, 
let us consider the $\nee$ distinct edges connecting $G_i$ to $M$, 
along with each pair of edges connecting $G_i$ to each grid $G_j$ such that 
$K^{(\larr)}_j > K^{(\larr)}_L$ and $(v_i,v_j) \in E$.   
Let also ${\cal E}$ be the set of all these edges, for every $i \in I^{(\larr)}_L$.  
Thus we have $\size{{\cal E}}=\kappa_l \cdot (\nee )+2\tau^{(\larr)}$. 

Consider now an arbitrary edge $e \in {\cal E}$.  Let $x$ be one of the two endpoints of $e$, and assume that $x$ belongs to some grid $X$ among the $n+3$ grid components of $G'$.  We say that $x$ is \termdef{misplaced} if, under $\larr$, the kernel of $X$ is placed at some side with respect to $l^*$ and $x$ is placed at the opposite side.   From the definition of ${\cal E}$, it is easy to see that if none of the endpoints of $e$ are misplaced, or else if both of the endpoints of $e$ are misplaced, then $e$ must cross over $l^*$.  On the other hand, if exactly one of the endpoints of $e$ is misplaced, then $e$ does not cross over $l^*$.

Consider then the set of all the misplaced endpoints of some edge in ${\cal E}$.  
By construction of $G'$, these endpoints are distinct and belong to the first row or to 
the last row of some grid component of $G'$.  
Furthermore, the edges in ${\cal E}$ are all single rather than multiple edges, 
as already observed in Section~\ref{ssec:gprime}.
By definition of $\alpha_M$ and $\alpha_i$, $i \in [n]$, 
we have that $\alpha_M+\sum_{i \in (I^{(\larr)}_L \cup J^{(\larr)}_L)} \alpha_i$ 
is greater than or equal to the number of all the misplaced endpoints 
of some edge in ${\cal E}$, and from the above observations we have that 
the latter number is in turn greater than or equal to
the number of edges in ${\cal E}$ which do not cross over $l^*$.  
By Lemmas~\ref{l7} and~\ref{l8}, it follows that,
among the edges within $M$ and among the edges within the
components $G_i$, $i \in (I^{(\larr)}_L \cup J^{(\larr)}_L)$, 
there exist $\alpha_M+\sum_{i \in (I^{(\larr)}_L \cup J^{(\larr)}_L)} \alpha_i$ distinct
edges which cross over $l^*$.  This quantity plus the number of
edges in ${\cal E}$ which cross over $l^*$ gives us the desired result.
\end{proof}

We are now ready to show the inverse relation of the statement 
in Lemma~\ref{l:g2gprime}.  
In what follows we focus our attention on linear 
arrangements of $G'$ having maximum width at most $k'$.
The reason why all of the previous lemmas in this section 
have been stated for linear arrangements with 
maximum width at most $k' + n^2$ is because in 
Section~\ref{sec:fan-out} we need to refer to this extended class.

\begin{lemma}
\label{l:gprime2g}
If $\langle G', k' \rangle$ is a positive instance of \mrgbcw\
then $\langle G, k \rangle$ is a positive instance of \mcess.
\end{lemma}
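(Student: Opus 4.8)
The plan is to take a linear arrangement $\larr$ of $G'$ witnessing $\cw(G') \le k'$ (so in particular of maximum width at most $k' + n^2$, and all of Lemmas~\ref{l2}--\ref{l9} apply), and to read off a balanced, low-cut bisection of $G$ from the left/right placement of the kernels. By Lemma~\ref{l4} I may assume $K^{(\larr)}_L < K^{(\larr)}_M < K^{(\larr)}_R$, and by Lemma~\ref{l3} every $K^{(\larr)}_i$ lies either entirely to the left of $K^{(\larr)}_L$ or entirely to the right of $K^{(\larr)}_R$. I would then set $V_1 = \{v_i : K^{(\larr)}_i < K^{(\larr)}_L\}$ and $V_2 = \{v_i : K^{(\larr)}_i > K^{(\larr)}_R\}$, which partition the vertices of $G$ with $\size{V_1} = \kappa^{(\larr)}_l$, $\size{V_2} = \kappa^{(\larr)}_r$, and $\kappa^{(\larr)}_l + \kappa^{(\larr)}_r = n$. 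Since Lemma~\ref{l3} makes $K_j > K_L$ equivalent to $K_j > K_R$, the quantity $\tau^{(\larr)}$ counts exactly the edges of $G$ with one endpoint in $V_1$ and one in $V_2$, i.e. the cut $\delta(V_1,V_2)$. The two targets are thus $\size{V_1} = \size{V_2} = n/2$ and $\tau^{(\larr)} \le k$.

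The core step is a width count at the single gap $l^*$. First I would lower-bound the edges crossing $l^*$ by two \emph{disjoint} families. Since $l^*$ sits between two vertices of $K^{(\larr)}_L$, Corollary~\ref{cor:kernel} yields at least $\lgh = 3n^4+1$ edges internal to $L$ crossing $l^*$. Lemma~\ref{l9} yields at least $\kappa^{(\larr)}_l \cdot \nee + 2\tau^{(\larr)}$ further crossing edges that are explicitly not internal to $L$ or $R$, hence disjoint from the first family. Combining these with the hypothesis that the width at $l^*$ is at most $k' = \kprime$ gives $(3n^4+1) + \kappa^{(\larr)}_l \cdot 4n^2 + 2\tau^{(\larr)} \le 3n^4 + 2n^3 + 2k + 2$, that is, $\kappa^{(\larr)}_l \cdot 4n^2 + 2\tau^{(\larr)} \le 2n^3 + 2k + 1$.

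From here I would balance the partition. If $k \ge 3n/2$ the lemma is immediate, since a cubic graph on $n$ vertices has only $3n/2$ edges and hence every balanced partition already cuts at most $k$; so I may assume $2k < 3n$. Dropping the nonnegative term $2\tau^{(\larr)}$ then gives $\kappa^{(\larr)}_l \cdot 4n^2 \le 2n^3 + 3n$, so $\kappa^{(\larr)}_l \le n/2 + \frac{3}{4n} < \frac{n}{2}+1$, forcing $\kappa^{(\larr)}_l \le n/2$ because both sides are integers and $n$ is even. Repeating the identical argument at the gap $r^*$ inside $K^{(\larr)}_R$ (using the right-hand analogues of Lemmas~\ref{l7}--\ref{l9}, in which the \emph{same} cut $\tau^{(\larr)}$ reappears) gives $\kappa^{(\larr)}_r \le n/2$. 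Since $\kappa^{(\larr)}_l + \kappa^{(\larr)}_r = n$, both must equal $n/2$, so the partition is balanced. Substituting $\kappa^{(\larr)}_l = n/2$ back into the inequality of the previous paragraph collapses the $n^3$ terms and leaves $2\tau^{(\larr)} \le 2k+1$, hence $\tau^{(\larr)} \le k$. Thus $(V_1,V_2)$ is a balanced cut of $G$ of size at most $k$, so $\langle G,k\rangle$ is a positive instance of \mcess.

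The genuinely hard work is already packaged in Lemmas~\ref{l7}--\ref{l9}; the delicate points in stitching them together are (i) to be certain the two lower bounds at $l^*$ count disjoint edges --- which is exactly why Lemma~\ref{l9} was stated so as to exclude edges internal to $L$ and $R$ --- and (ii) to verify that $\tau^{(\larr)}$ coincides with $\delta(V_1,V_2)$, which rests on the trichotomy of Lemma~\ref{l3}. I expect the main obstacle to be the integrality and asymptotic step that pins $\kappa^{(\larr)}_l = \kappa^{(\larr)}_r = n/2$: it works only because the gadget constants $\nee = 4n^2$ and $k' = \kprime$ are chosen so that the lower-order $\order{n^3}$ and $\order{n}$ contributions can never pay for an entire extra $G_i$ grid on one side, and because the same cut value $\tau^{(\larr)}$ is shared by the $l^*$ and $r^*$ accountings, which is what lets the two one-sided bounds be combined.
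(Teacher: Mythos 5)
Your proof is correct and follows essentially the same route as the paper's: a width count at the gap $l^*$ combining the kernel-internal edges of $L$ (Corollary~\ref{cor:kernel}) with the disjoint family from Lemma~\ref{l9}, forcing $\kappa^{(\larr)}_l = \kappa^{(\larr)}_r = \frac{n}{2}$ and then $\tau^{(\larr)} \le k$ by integrality, with the partition read off from kernel positions. The only (welcome) refinement is that you dispose of the case $k \ge \frac{3n}{2}$ explicitly, where the paper implicitly assumes $k$ is bounded by the number of edges of $G$.
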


\begin{proof}
Let $\larr$ be a linear arrangement of $G'$ having maximum width bounded by $k' = \kprime$, and consider quantity $\wg(G',\larr,l^*)$.  From Lemma~\ref{l9} there are at least $\kappa^{(\larr)}_l \cdot (\nee) + 2\tau^{(\larr)}$ distinct edges which cross over $l^*$, not including edges internal to the grids $L$ or $R$.  In addition, recall that there are at least $3n^4 + 1$ edges internal to $L$ that are crossing over $l^*$.  This is because of Corollary~\ref{cor:kernel} and because of the way we have defined kernels. 
If $\kappa^{(\larr)}_l > \frac{n}{2}$, the number of edges contributing to $\wg(G',\larr,l^*)$ would be at least $3n^4 + 1 + (\frac{n}{2}+1) \cdot \nee = 3n^4 + 2n^3 + \nee + 1 > \kprime = k'$, for sufficiently large values of $n$, where the inequality follows from the fact that $k$ is bounded by the number of edges in $G$, which is $\frac{3n}{2}$.  This is against our assumptions on $\larr$.  Thus we must conclude that $\kappa^{(\larr)}_l \leq \frac{n}{2}$.
Similarly, we can prove that $\kappa^{(\larr)}_r$ cannot be
greater than $\frac{n}{2}$. Hence, we have that $\kappa^{(\larr)}_l =
\kappa^{(\larr)}_r = \frac{n}{2}$. 

Using the above fact in Lemma~\ref{l9}, we have that
the number of edges external to $L$ and $R$ crossing over $l^*$ is at least
$2n^3+2\tau^{(\larr)}$.  Including the edges internal to $L$ gives at least 
$3n^4+ 2n^3 +2\tau^{(\larr)}+1$ edges crossing over $l^*$.
Since the width of $l^*$ is at most $\kprime$, 
it also follows that $\tau^{(\larr)} \le k + \frac{1}{2}$.  
This means that the number of edges $(v_i,v_j)$ in $G$ 
such that $K^{(\larr)}_i < K^{(\larr)}_L$ and $K^{(\larr)}_j > K^{(\larr)}_L$
is at most $k$.  Hence, by partitioning the nodes of $G$ according
to the position of their corresponding kernels under $\larr$, we have an equal
size subset partition whose cut is at most $k$.
\end{proof}

\subsection{Cutwidth and Internal Boundaries} 
\label{ssec:internal-boundaries}

We can now present the main results of Section~\ref{sec:internal_boundaries}. 

\begin{theorem}
\label{thm:mrgbcw}
The problem \mrgbcw\ is NP-complete.
\end{theorem}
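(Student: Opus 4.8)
The plan is to obtain the theorem as a corollary of the reduction developed throughout Section~\ref{sec:internal_boundaries}, after first dispatching membership in NP. For membership, I would note that a linear arrangement $\nu$ of a permutation multigraph $G=(V,E)$ is simply a bijection $V \to [\size{V}]$, hence a certificate of size $\order{\size{V}\log\size{V}}$; given $\nu$, one computes $\wg(G,\nu,i)$ for every gap $i$ (counting crossing edges with multiplicity) and checks $\max_i \wg(G,\nu,i) \le k$ in polynomial time. Thus \mrgbcw{} lies in NP.

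For NP-hardness I would reduce from \mcess{} restricted to cubic graphs, which is NP-complete by~\cite{Bui:1987}. Given an instance $\langle G,k\rangle$ with $G$ cubic on $n$ vertices, the construction of Section~\ref{ssec:gprime} produces the pair $\langle G',k'\rangle$ with $k'=\kprime$. I would first verify that this map runs in polynomial time: each node grid $G_i = \grid[\ngh,\ngw]$ contributes $\ngh \cdot \ngw = \order{n^8}$ vertices, there are $n$ of them, and the composed grid $S$ contributes a further $\order{n^8}$ vertices, so $\size{V(G')} = \order{n^9}$; the edge multiset is given by explicit, polynomially many families (the grid edges together with Blocks~1--3), and $k'$ is a polynomially bounded integer. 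Hence $G'$ and $k'$ are computable from $G$ and $k$ in time polynomial in $n$.

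Next I would confirm that $\langle G',k'\rangle$ is a legitimate \mrgbcw{} instance, i.e.\ that $G'$ is a permutation multigraph. This is exactly what the explicit red and green Hamiltonian paths of Section~\ref{ssec:gprime} (Figures~\ref{fig:redpath} and~\ref{fig:greenpath}) provide: both are Hamiltonian paths on $V(G')$, and their superposition, with the shared edges counted as double edges, is $G'$, so $G' = (V(G'), A \uplus B)$ with $A$ the red and $B$ the green edges.

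Finally, correctness of the reduction is precisely the biconditional $\langle G,k\rangle \in \mcess \iff \langle G',k'\rangle \in \mrgbcw$, whose two directions are established in Lemma~\ref{l:g2gprime} and Lemma~\ref{l:gprime2g} respectively. Combining membership in NP with this polynomial-time many-one reduction yields NP-completeness. The genuinely hard work has already been carried out in those two lemmas --- in particular the reverse direction, which uses the kernel machinery of Lemmas~\ref{l2}--\ref{l9} to force an optimal arrangement to split the node grids evenly across the middle grids $L,M,R$; relative to that, the theorem itself is bookkeeping, the only points needing care being the polynomiality of $\size{V(G')}$ and the verification that the two exhibited paths are Hamiltonian.
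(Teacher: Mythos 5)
Your proposal is correct and follows essentially the same route as the paper: membership in NP via guessing a linear arrangement and checking its maximum width in polynomial time, and NP-hardness via the reduction from \mcess\ on cubic graphs given by the construction of Section~\ref{ssec:gprime}, with correctness supplied by the biconditional of Lemma~\ref{l:g2gprime} and Lemma~\ref{l:gprime2g}. The extra details you supply (polynomial size of $G'$, and that the red and green Hamiltonian paths witness that $G'$ is a permutation multigraph) are points the paper settles within the construction itself rather than in the proof of the theorem, so your writeup is if anything slightly more self-contained.
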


\begin{proof} 
Let $G$ be a cubic graph with $n>1$ vertices, 
and let $k>0$ be some integer.  From Lemma~\ref{l:g2gprime} and from Lemma~\ref{l:gprime2g}, 
we have that $\langle G, k \rangle$ is a positive instance 
of \mcess\ if and only if $\langle G', k' \rangle$ is a positive instance of \mrgbcw.
This relation shows that an algorithm for \mrgbcw\ could be used 
to solve \mcess, and thus \mrgbcw\ is NP-hard.  

To conclude the proof, we observe that the problem 
\mrgbcw\ is in NP because a linear arrangement of 
a graph can be guessed in polynomial time and 
its maximum width can be computed in polynomial time as well.  
\end{proof}

We can now deal with a decision problem associated with the 
problem of finding a linear parsing strategy for a synchronous 
rule that minimizes the number of internal boundaries, defined in~(\ref{eq:ibd}).

\begin{theorem}
Let $s$ be a synchronous rule with $r$ nonterminals and with 
associated permutation $\pi_s$, 
and let $k$ be some positive integer.  
The problem of deciding whether 
\begin{align*}
 \min_{\sigma} \; \max_{i \in [r]} \; \ibd(\pi_s, \sigma, i) \leq k
\end{align*}
is NP-complete.
\end{theorem}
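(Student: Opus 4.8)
The plan is to reduce from the \mrgbcw\ problem, which Theorem~\ref{thm:mrgbcw} shows to be NP-complete, by running in reverse the correspondence between synchronous rules and permutation multigraphs set up in Section~\ref{ssec:pmg}. Membership in NP is immediate and I would dispose of it first: a linear parsing strategy $\sigma$ for a rule with $r$ nonterminals is just a permutation of $[r]$, so it can be guessed in polynomial time, and for each $i \in [r]$ the quantity $\ibd(\pi_s, \sigma, i)$ is computable directly from its definition in~(\ref{eq:ibd}) in polynomial time; taking the maximum over $i$ and comparing with $k$ then checks the guess.

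For NP-hardness I would start from an arbitrary instance $\langle G, k\rangle$ of \mrgbcw, with $G = (V, A \uplus B)$ a permutation multigraph on $r = \size{V}$ nodes; the cases $r \le 2$ are trivial and can be decided directly, so I assume $r > 2$. As observed in Section~\ref{ssec:pmg}, $G$ encodes a permutation: traversing the red Hamiltonian path $(V,A)$ from one endpoint identifies the nodes of $V$ with $[r]$, and the order in which these labels are met along the green Hamiltonian path $(V,B)$ is a permutation $\pi$ of $[r]$. Both traversals run in linear time. I then set $\pi_s = \pi$ and output the terminal-free synchronous rule
\[
s: \; [\, A_1 \de A_{1,1}^\boxnum{$1$} \cdots A_{1,r}^\boxnum{$r$}, \;\;
A_2 \de A_{2,1}^\boxnum{$\pi_s(1)$} \cdots A_{2,r}^\boxnum{$\pi_s(r)$} \,] \;,
\]
which is of the form~(\ref{e:rule_with_pi}) with every terminal block $u_{1,i}, u_{2,i}$ empty. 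The whole reduction is clearly polynomial.

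Correctness would then follow from the machinery already available. By construction the permutation multigraph $G_s$ associated with $s$ coincides with $G$, so Lemma~\ref{lem:ib-cutwidth} gives $\wg(G_s, \nu_s, i) = \ibd(\pi_s, \sigma, i)$ for every strategy $\sigma$ and every $i$, and the relation noted just after that lemma yields $\cw(G) = \cw(G_s) = \min_{\sigma} \max_{i \in [r]} \ibd(\pi_s, \sigma, i)$. Consequently $\min_{\sigma} \max_{i} \ibd(\pi_s, \sigma, i) \le k$ holds if and only if $\cw(G) \le k$, i.e.\ if and only if $\langle G, k\rangle$ is a positive instance of \mrgbcw. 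Combined with Theorem~\ref{thm:mrgbcw} this establishes NP-hardness, and together with the NP membership above, NP-completeness.

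The step I would treat as the main obstacle, though it is essentially definitional, is verifying that the rule $s$ constructed above really satisfies $G_s = G$, that is, that the encoding of Section~\ref{ssec:pmg} is onto the class of permutation multigraphs. This reduces to checking that the red and green Hamiltonian paths listed there for $G_s$ --- the first ordering the vertices by the index of the first component, the second ordering them by $\pi_s$ on the second component --- reproduce exactly the paths $A$ and $B$ of $G$ under the chosen labelling. The only freedom is in the orientation of each Hamiltonian path when reading off $\pi$, and this is harmless: reversing a Hamiltonian path preserves the edge multiset and hence the cutwidth, so every orientation yields a valid reduction.
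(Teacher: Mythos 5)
Your proposal is correct and takes essentially the same route as the paper: the paper's own (very terse) proof also combines the identity $\cw(G_s) = \min_{\sigma} \max_{i \in [r]} \ibd(\pi_s, \sigma, i)$, which follows from Lemma~\ref{lem:ib-cutwidth}, with the NP-completeness of \mrgbcw\ (Theorem~\ref{thm:mrgbcw}). What you spell out explicitly---the direct NP membership check, and the fact that every permutation multigraph arises as $G_s$ for a terminal-free rule so that the reduction from \mrgbcw\ can be run in reverse, with the orientation of the Hamiltonian paths being immaterial---is exactly the content the paper leaves implicit.
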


\begin{proof}
We have already observed that the relation 
$\cw(G_s) = \min_{\sigma} \max_{i \in [r]} \ibd(\pi_s, \sigma, k)$
directly follows from Lemma~\ref{lem:ib-cutwidth}.  The statement then 
follows from Theorem~\ref{thm:mrgbcw}.
\end{proof}

\section{Relating Permutation Multigraphs to SCFGs}
\label{sec:fan-out}

As discussed in Section~\ref{sec:prel}, our main goal is
finding efficient ways of parsing synchronous context-free rules.
In this section, we use our results on permutation multigraphs
to prove NP-hardness for optimizing both space complexity and
time complexity of linear parsing strategies for SCFG rules.
We begin by examining space complexity, and then generalize the
argument to prove our result on time complexity.

\subsection{Space Complexity}
\label{ssec:space}

Optimizing the space complexity of a parsing strategy is equivalent to 
minimizing the maximum that the fan-out function achieves across the steps
of the parsing strategy~(\ref{eq:minfanout}).  
According to the definition of the fan-out function
(\ref{eq:fo}), fan-out consists of
the two terms $\ibd$ and $\ebd$, accounting for the internal and
the external boundaries, respectively, realized at a given step
by a linear parsing strategy.  Let $s$ be a synchronous rule with
$G_s$ the associated permutation multigraph.  We have already
seen in Lemma~\ref{lem:ib-cutwidth} a relation between the $\ibd$
term and the width function for $G_s$.  In order to make precise
the equivalence between the fan-out problem and the cutwidth
problem for $G_s$, we must now account for the $\ebd$ term.

Let $b_R$ and $e_R$ be the first and the last vertices of the red path in $G_s$, and let $b_G$ and $e_G$ be the first and the last vertices of the green path.  We collectively refer to these vertices as the \termdef{endpoints} of $G_s$, and we define $V_e = \{ b_R, e_R, b_G, e_G \}$.   Let $\sigma_s$ be some linear parsing strategy for rule $s$, and let $\nu_s$ be the corresponding linear arrangement for $G_s$, as defined in Section~\ref{ssec:pmg}.  Informally, we observe that under $\sigma_s$ the number of external boundaries at a given step $i$ is the number of vertices from the set $V_e$ that have been seen to the left of the current position $i$ under $\nu_s$ (including $i$ itself).   Using definition~(\ref{eq:fo}) and Lemma~\ref{lem:ib-cutwidth}, this suggests that we can represent the fan-out at $i$ as the width of an \textit{augmented} permutation multigraph containing special edges from the vertices in $V_e$, where the special edges always extend past the right end of any linear arrangement.   We introduce below some mathematical definitions that formalize this idea.  

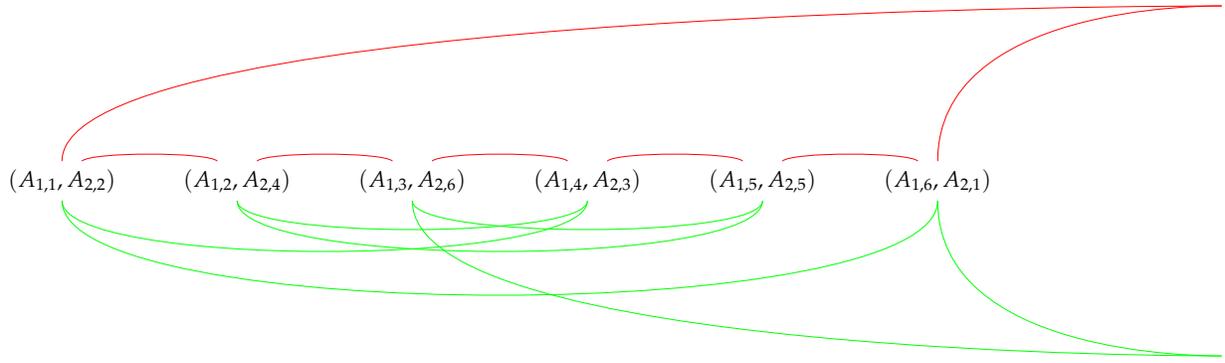
\begin{figure}
\begin{center}
\resizebox{\linewidth}{!}{
\begin{tikzpicture}[scale=1]
\draw (0,0) node(1) {$(A_{1,1},A_{2,2})$};
\draw (3,0) node(2) {$(A_{1,2},A_{2,4})$};
\draw (6,0) node(3) {$(A_{1,3},A_{2,6})$};
\draw (9,0) node(4) {$(A_{1,4},A_{2,3})$};
\draw (12,0) node(5) {$(A_{1,5},A_{2,5})$};
\draw (15,0) node(6) {$(A_{1,6},A_{2,1})$};
\draw[color=red] (1) .. controls +(0.5,0.5) and +(-0.5,0.5) .. (2);
\draw[color=red] (2) .. controls +(0.5,0.5) and +(-0.5,0.5) .. (3);
\draw[color=red] (3) .. controls +(0.5,0.5) and +(-0.5,0.5) .. (4);
\draw[color=red] (4) .. controls +(0.5,0.5) and +(-0.5,0.5) .. (5);
\draw[color=red] (5) .. controls +(0.5,0.5) and +(-0.5,0.5) .. (6);
\draw[color=green] (6) .. controls +(0,-2.5) and +(0,-2.5) .. (1);
\draw[color=green] (1) .. controls +(0,-1.5) and +(0,-1.5) .. (4);
\draw[color=green] (4) .. controls +(0,-1) and +(0,-1) .. (2);
\draw[color=green] (2) .. controls +(0,-1.5) and +(0,-1.5) .. (5);
\draw[color=green] (5) .. controls +(0,-1) and +(0,-1) .. (3);
\draw (20,3) node(eng) {};
\draw (20,-3) node(chi) {};
\draw[color=red] (1) .. controls +(0,3) and +(-0.5,0) .. (eng);
\draw[color=red] (6) .. controls +(0,3) and +(-0.5,0) .. (eng);
\draw[color=green] (3) .. controls +(0,-3) and +(-0.5,0) .. (chi);
\draw[color=green] (6) .. controls +(0,-3) and +(-0.5,0) .. (chi);
\end{tikzpicture}}
\end{center}
\caption{Extended 
edges for the permutation multigraph corresponding to the SCFG rule 
$s$ of eqn.~(\ref{eq:rule-s}).
}\label{fig:extended}
\end{figure}

Assume $G_s$ has $n$ nodes and set of edges $E$.
We define the \termdef{extended width} at position $i \in [n]-1$ to be
\begin{eqnarray} 
\ewg(G_s, \nu_s, i) & = & \wg(G_s, \nu_s, i) + \sum_{v \in V_e} I(\nu_s(v) \le i) \; .
\label{eq:ewidth}
\end{eqnarray}
The contribution of the endpoints to the extended width can
be visualized as counting, at each position in the linear 
arrangement, an additional set of edges running from the
endpoint of the red and green paths all the way to the right
end of the linear arrangement, as shown in 
Figure~\ref{fig:extended}.  We will refer to these additional
edges as \termdef{extended edges}.   
To simplify the notation below, we also let $\ewg(G_s,\nu_s,n) = 4$.  

Let $\pi_s$ be the permutation associated with synchronous rule $s$.  Observe that the first term in~(\ref{eq:ewidth}) corresponds to the number of internal boundaries $\ibd(\pi_s, \sigma_s, i)$, by Lemma~\ref{lem:ib-cutwidth}, and the second term counts the number of external boundaries $\ebd(\pi_s, \sigma_s, i)$.   
We can then write, for each $i \in [n]$
\begin{eqnarray}
\fo(\pi_s, \sigma_s, i) & = & \frac{1}{2} \ewg(G_s, \nu_s, i) \; ,
\label{eq:fo_ewd}
\end{eqnarray}
which will be used below to assess the complexity of the fan-out problem.  Finally, we define the \termdef{extended cutwidth} of $G_s$ as 
\begin{eqnarray}
\ecw(G_s) & = & \min_{\nu} \; \max_{i \in [n]} \; \ewg(G_s,\nu,i) \; .
\label{eq:ecwidth}
\end{eqnarray}
From~(\ref{eq:fo_ewd}) and~(\ref{eq:ecwidth}) we see that the extended cutwidth of $G_s$ is related to the optimal computational complexity that we can achieve when parsing synchronous rule $s$ with the techniques described in Section~\ref{ssec:membership}.  With such motivation, we investigate below a decision problem related to the computation of the extended cutwidth of a permutation multigraph.  

From now on, we assume that $\langle G, k \rangle$ is an instance of \mcess, where $G$ is a cubic graph with $n$ vertices.  We also assume that $G'$ and $k'$ are constructed from $G$ and $k$ as in Section~\ref{ssec:gprime}.

\begin{lemma}
If $\langle G, k \rangle$ is a positive instance of \mcess, then $\ecw(G') \leq k' + 2$.
\label{l:bisec2ecutwidth}
\end{lemma}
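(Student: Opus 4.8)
The plan is to reuse the explicit linear arrangement of $G'$ constructed in the proof of Lemma~\ref{l:g2gprime}, which I will call $\larr$, and to show that taking the four extended edges into account raises its maximum width by at most $2$. Recall that $\larr$ lays out the grid components in the order: the grids $G_i$ with $v_i \in V_1$ (in any order), then $L$, then $M$, then $R$, then the grids $G_i$ with $v_i \in V_2$; and that within each grid the vertices are placed in column-major order following the green path. We already know from that proof that $\max_i \wg(G',\larr,i) \le k'$, that the value $k'$ is attained only inside the kernels of $L$ and $R$, and that at every position lying outside $L$ and $R$ the ordinary width is bounded by $2n^4 + \order{n^3}$.

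First I would locate the four endpoints in $V_e = \{b_R, e_R, b_G, e_G\}$ using the descriptions of the two Hamiltonian paths. From the red path we read off $b_R = l^{\lgh,1}$ and $e_R = r^{1,\lgw}$, and from the green path $b_G = l^{1,1}$ and $e_G = r^{\rgh,\rgw}$. The structural fact that drives the whole argument is that $b_R$ and $b_G$ both lie in the \emph{first} column of $L$, whereas $e_R$ and $e_G$ both lie in the \emph{last} column of $R$. Since $\larr$ orders each grid column by column and places all of $L$ before all of $R$, both $L$-endpoints are introduced very early (already within the first column of $L$), while both $R$-endpoints are introduced only in the last column of $R$, that is, strictly after the kernel of $R$.

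The bound then follows from a case split on the position $i$, combining the identity $\ewg(G',\larr,i) = \wg(G',\larr,i) + \sum_{v \in V_e} I(\larr(v) \le i)$ with the width bounds for $\larr$. If $i$ lies before the last column of $R$, then neither $e_R$ nor $e_G$ has been placed, so the extended contribution $\sum_{v \in V_e} I(\larr(v) \le i)$ is at most $2$; together with $\wg(G',\larr,i) \le k'$ this gives $\ewg(G',\larr,i) \le k'+2$. If instead $i$ lies in the last column of $R$ or to its right, then the ordinary width has already dropped to $\order{n^3}$, being dominated by the $2n^3$ edges running from $M$ to the $\frac{n}{2}$ grids on the $V_2$ side, while the extended contribution is at most $4$; hence $\ewg(G',\larr,i) = \order{n^3} + 4 < k'+2 = \kprime + 2$ for $n$ sufficiently large. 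Taking the maximum over all $i$ yields $\ecw(G') \le \max_i \ewg(G',\larr,i) \le k' + 2$. The one point that needs care is checking that the two $R$-endpoints genuinely contribute only in this low-width tail, that is, that $e_R, e_G$ are placed strictly after $R$'s kernel and that the ordinary width there is $\order{n^3}$ rather than $\order{n^4}$; this is immediate from the column-major placement and from the edge count already performed for positions within $R$ in the proof of Lemma~\ref{l:g2gprime}.
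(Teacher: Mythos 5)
Your setup --- reusing the arrangement $\larr$ from Lemma~\ref{l:g2gprime}, locating the four endpoints ($b_R, b_G$ in the first column of $L$, $e_R, e_G$ in the last column of $R$), and splitting on whether the gap precedes the last column of $R$ --- is the same as the paper's, and your Case~A is correct. The genuine gap is in Case~B: it is false that the ordinary width has ``already dropped to $\order{n^3}$'' at gaps inside the last column of $R$. Under the column-major order, that column is placed top-to-bottom, starting with $e_R = r^{1,\rgw}$. At the gap immediately after $e_R$, the $\rgh - 1 = 3n^4$ horizontal edges joining rows $2,\ldots,\rgh$ of columns $\rgw-1$ and $\rgw$ still cross (their right endpoints are not yet placed), as do the vertical edge $\left(r^{1,\rgw},r^{2,\rgw}\right)$ and the $2n^3 + 2\tau^{(\larr)}$ edges external to $R$. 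So the ordinary width there is $3n^4 + 1 + 2n^3 + 2\tau^{(\larr)}$, roughly $k'-1$, and with three endpoints already to the left the extended width equals exactly $k'+2$ when $\tau^{(\larr)} = k$. The bound you are proving is \emph{tight} precisely in this region, not slack by an $n^4$ margin; the width only falls to $\order{n^3}$ after the entire last column of $R$ has been placed.

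Why this matters: with only the facts you invoke, the best you can conclude in Case~B is ``ordinary width $\le k'$ plus extended contribution $\le 4$,'' which gives $k'+4$, not $k'+2$. The paper closes exactly this hole with a local compensation argument at the critical step: when $e_R$ is placed, the double edge $\left(r^{1,\rgw-1}, r^{1,\rgw}\right)$ (counted once as red and once as green) stops crossing, and these two lost units exactly pay for the newly crossing internal edge $\left(r^{1,\rgw},r^{2,\rgw}\right)$ and the new extended edge impinging on $e_R$; hence the extended width at the position of $e_R$ equals that at the preceding position, which is at most $k'+2$. Descending the column, the contribution of edges internal to $R$ then strictly decreases, which absorbs the fourth extended edge when $e_G$ is finally placed. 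Your proof needs this bookkeeping (or an equivalent argument showing the ordinary width inside the last column stays at most $k'-1$ before $e_G$ and at most $k'-2$ from $e_G$ onward) to be complete.
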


\begin{proof}
Under the assumption that $\langle G, k \rangle$ is a positive instance of \mcess, consider the linear arrangement $\nu$ used in the proof of Lemma~\ref{l:g2gprime} to show that $\langle G', k' \rangle$ is a positive instance of \mrgbcw.  We already know that the maximum (regular) width of $\nu$ is at most $k'$.  With the exception of the first column of grid $L$ and the last column of grid $R$, the extended width under $\nu$ at positions within $L$ and $R$ is two greater than the (regular) width, because the vertices $b_R$ and $b_G$ are both to the left, while $e_R$ and $e_G$ are to the right.  For positions in the first column of $L$, the extended width is one greater than the width, because only $b_G$ is to the left, as depicted in Figure~\ref{fig:greenpath}.

The critical point is the last column of $R$.  We observe that the edges connecting vertices in the grid components $G_i$ and $M$ contribute to the extended width at positions within $R$ always in the same amount.  We thus focus our analysis on the only edges that are internal to $R$.  Recall that $e_R$ is the topmost vertex in the last column of $R$.  We let $i$ be the position of $e_R$ under $\nu$.  At position $i-1$ the contribution to the extended width of the edges internal to $R$ consists of $\rgh$ red edges and one green edge; see again Figure~\ref{fig:greenpath}.  At the next position $i$, one red edge and one green edge internal to $R$ are lost.  However, these two edges are replaced by one new red edge from $R$ and one new extended edge impinging on vertex $e_R$.  Thus the extended width at positions $i-1$ and $i$ must be the same.  For all of the next positions corresponding to vertices in the last column of $R$,  the contribution to the extended width of the edges internal to $R$ always decreases.  

From the above observations, we conclude that the extended width at positions within grid components $L$ and $R$ is bounded by $k' + 2 = \ekprime$. 
As already observed in the proof of Lemma~\ref{l:g2gprime}, the width at all of the remaining positions for $\nu$ is lower by $n^4 + \order{n^3}$, and this must also be the case for the extended width, since this quantity exceeds the (regular) width by at most four.  The existence of linear arrangement $\nu$ thus implies $\ecw(G') \le k' + 2$.
\end{proof}

Let $\larr$ be a linear arrangement of $G'$ having maximum width
at most $k' + n^2$. Then $\larr$ satisfies the hypotheses of all
of the lemmas in Section~\ref{ssec:gprime2g} constraining the linear
arrangement of the kernels of the grid components of $G'$.
However, contrary to the case of the regular cutwidth, the
extended cutwidth is not invariant to a reversal of a linear
arrangement.  This is so because the extended edges always end up
at a position to the right of the right end of any linear
arrangement.  For this reason we can no longer assume that in
$\larr$ we have $L<R$.  Let then $X_L$ and $X_R$ be the leftmost
and the rightmost, respectively, of $L$ and $R$ under $\larr$.
Let also $e^*$ be the rightmost of $l^*$ and $r^*$.

We already know from the first part of the proof of Lemma~\ref{l:gprime2g} that the number of kernels to the left of $K^{(\larr)}_{X_L}$ is $\frac{n}{2}$, and this is also the number of kernels to the right of $K^{(\larr)}_{X_R}$.  Using this fact, the following result can easily be shown using the same argument presented in the proof of Lemma~\ref{l9}. As in Section~\ref{ssec:gprime2g}, let $\tau^{(\larr)}$ denote the number of edges $(v_i,v_j)$ in $G$ such that $K^{(\larr)}_i < K^{(\larr)}_{X_L}$ and $K^{(\larr)}_j > K^{(\larr)}_{X_R}$.  

\begin{lemma}
Let $\larr$ be a linear arrangement of $G'$ having maximum width at most $k' + n^2$.
There are at least $\frac{n}{2} \cdot (\nee) + 2\tau^{(\larr)}$ distinct edges which cross over $e^*$, not including edges internal to the grids $L$ or $R$.
\label{l:e-star}
\end{lemma}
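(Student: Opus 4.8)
The plan is to mirror the proof of Lemma~\ref{l9}, replacing the gap $l^*$ by $e^*$ and the left-hand grids counted by $\kappa^{(\larr)}_l$ by the grids whose kernels lie to the right of $K^{(\larr)}_{X_R}$. First I would recall, from the first part of the proof of Lemma~\ref{l:gprime2g}, that under $\larr$ exactly $\frac{n}{2}$ of the kernels $K^{(\larr)}_i$ lie to the right of $K^{(\larr)}_{X_R}$ (and $\frac{n}{2}$ to the left of $K^{(\larr)}_{X_L}$). Since $M$ always lies between $X_L$ and $X_R$ (Lemma~\ref{l4}), and since $e^*$ is by definition the first gap internal to the kernel of $X_R$, the kernel of $M$ lies entirely to the left of $e^*$. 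Consequently, both the $\nee$ edges connecting each right-hand grid $G_j$ to $M$ and the $2\tau^{(\larr)}$ edges of $G'$ encoding the edges of $G$ counted by $\tau^{(\larr)}$ are guaranteed to span $e^*$, provided no endpoint is moved off the side of its own kernel.

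Concretely, I would let ${\cal E}$ consist of the $\nee$ edges connecting $G_j$ to $M$ for each of the $\frac{n}{2}$ grids $G_j$ with $K^{(\larr)}_j > K^{(\larr)}_{X_R}$, together with the two $G'$-edges associated with each edge $(v_i,v_j) \in E$ such that $K^{(\larr)}_i < K^{(\larr)}_{X_L}$ and $K^{(\larr)}_j > K^{(\larr)}_{X_R}$; this gives $\size{{\cal E}} = \frac{n}{2}\cdot(\nee) + 2\tau^{(\larr)}$. Following Lemma~\ref{l9}, I would call an endpoint of an edge in ${\cal E}$ \emph{misplaced} if it lies on the opposite side of $e^*$ from its own kernel, and observe that an edge of ${\cal E}$ fails to cross $e^*$ exactly when precisely one of its two endpoints is misplaced. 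Hence the number of edges of ${\cal E}$ crossing $e^*$, plus the number of misplaced endpoints, is at least $\size{{\cal E}}$. To close the count I would invoke the $e^*$-analogues of Lemmas~\ref{l7} and~\ref{l8}: since those arguments rely only on the kernel sizes and on Lemma~\ref{lem:firstrow}, and not on the identity of the gap, each misplaced endpoint (lying in the first or last row of some $G_j$ or of $M$ and being the endpoint of an edge exiting that grid) is charged to a distinct internal edge of that grid crossing $e^*$. These internal edges are disjoint from the edges of ${\cal E}$ and from all edges internal to $L$ and $R$, so adding them to the edges of ${\cal E}$ that cross $e^*$ yields at least $\size{{\cal E}} = \frac{n}{2}\cdot(\nee) + 2\tau^{(\larr)}$ distinct crossing edges, none of them internal to $L$ or $R$.

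The main obstacle is purely bookkeeping, caused by the loss of the $L < R$ convention: because the extended cutwidth is not invariant under reversal, I cannot fix which of $L,R$ is leftmost, so every quantity must be phrased through $X_L$, $X_R$, and $e^*$. The one point that genuinely needs verification is that $e^*$ really does sit inside the kernel of $X_R$ with $M$ to its left in both cases (when $X_R = R$ we have $e^* = r^*$, and when $X_R = L$ we have $e^* = l^*$, since $e^*$ is the rightmost of $l^*$ and $r^*$). Once this placement is confirmed, the chosen edges of ${\cal E}$ are guaranteed to span $e^*$ up to misplacement, and the remainder of the counting is identical to that of Lemma~\ref{l9}.
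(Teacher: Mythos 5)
Your proposal is correct and takes essentially the same route as the paper: the paper's own proof simply notes that $\frac{n}{2}$ kernels lie to the left of $K^{(\larr)}_{X_L}$ and $\frac{n}{2}$ to the right of $K^{(\larr)}_{X_R}$, and then says the claim follows by the argument of Lemma~\ref{l9} applied at $e^*$, which is precisely what you spell out. Your instantiation is the right one — building ${\cal E}$ from the edges joining the right-hand grids to $M$ (since $K^{(\larr)}_M$ lies to the left of $e^*$) together with the $2\tau^{(\larr)}$ cross edges, and charging misplaced endpoints to distinct internal edges via the $e^*$-analogues of Lemmas~\ref{l7} and~\ref{l8}, whose proofs indeed do not depend on the identity of the gap.
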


The proof of the next lemma uses arguments very similar to those already exploited in the proof of Lemma~\ref{l7} and in the proof of Lemma~\ref{l:gprime2g}. 

\begin{lemma}
If $\ecw(G') \leq k' + 2$, then $\langle G, k \rangle$ is a positive instance of \mcess.
\label{l:ecutwidth2bisec}
\end{lemma}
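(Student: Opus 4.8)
The plan is to mirror the proof of Lemma~\ref{l:gprime2g}, but to evaluate everything at the gap $e^*$ and to carry the extra bookkeeping forced by the four extended edges. First I would fix a linear arrangement $\larr$ of $G'$ witnessing $\ecw(G') \le k'+2$. Since the extended width dominates the ordinary width at every position, $\larr$ has ordinary maximum width at most $k'+2 \le k'+n^2$ for $n$ large, so all the structural lemmas of Section~\ref{ssec:gprime2g} apply: the kernels are totally ordered and the kernels of $L, M, R$ occur consecutively. Using the notation introduced just before Lemma~\ref{l:e-star}, let $X_L$ and $X_R$ be the leftmost and rightmost of $L, R$ under $\larr$, and let $e^*$ be the rightmost of $l^*, r^*$; a short check shows $e^* \in K^{(\larr)}_{X_R}$. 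As recorded before Lemma~\ref{l:e-star}, the first part of the argument of Lemma~\ref{l:gprime2g} already gives that exactly $\frac{n}{2}$ of the kernels $K^{(\larr)}_i$ lie to the left of $K^{(\larr)}_{X_L}$ and exactly $\frac{n}{2}$ lie to the right of $K^{(\larr)}_{X_R}$.

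Next I would lower-bound $\ewg(G', \larr, e^*)$ as a sum of pairwise-disjoint contributions. First, because $e^* \in K^{(\larr)}_{X_R}$, Corollary~\ref{cor:kernel} yields at least $\rgh$ edges internal to $X_R$ crossing $e^*$. Second, by Lemma~\ref{l:e-star} there are at least $\frac{n}{2}\cdot(\nee) + 2\tau^{(\larr)} = 2n^3 + 2\tau^{(\larr)}$ edges crossing $e^*$ that are external to $L$ and $R$. Third, the four extended edges emanate from the corner vertices $b_R, b_G \in L$ and $e_R, e_G \in R$, each of which sits in the first or the last row of its grid; exactly two of them, namely the two endpoints belonging to $X_L$, are "meant" to lie to the left of $e^*$ and thus to contribute their extended edge to the width at $e^*$.

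The crux is the third contribution, when an $X_L$-endpoint is misplaced to the right of $e^*$. Here I would reuse the misplacement argument of Lemma~\ref{l7}, treating the extended edges exactly as the grid-exiting edges were treated there: set $S$ to be the set of nodes of $X_L$ lying to the right of $e^*$. Since $K^{(\larr)}_{X_L}$ lies entirely to the left of $e^*$ and $\size{K^{(\larr)}_{X_L}} \ge \lrklb$, we get $\size{S} \le (\lgh)(\lgw) - \lrklb \le 20n^8$, which for $n$ large satisfies the precondition of Lemma~\ref{lem:firstrow} with $F$ the set of misplaced endpoints. Writing $\beta \in \{0,1,2\}$ for the number of misplaced $X_L$-endpoints, Lemma~\ref{lem:firstrow} produces at least $\beta$ edges internal to $X_L$ crossing $e^*$, while the $2-\beta$ non-misplaced endpoints each contribute their extended edge; these internal $X_L$ edges are counted in neither of the first two contributions, since they are internal to $X_L \ne X_R$ and the second contribution excludes edges internal to $L$ and $R$. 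Hence the third contribution, together with this compensation, is at least $(2-\beta) + \beta = 2$ in every case.

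Adding the disjoint contributions gives
\[
\ewg(G', \larr, e^*) \;\ge\; (3n^4+1) + (2n^3 + 2\tau^{(\larr)}) + 2 \;=\; 3n^4 + 2n^3 + 2\tau^{(\larr)} + 3 \,.
\]
Since this is at most $k'+2 = \ekprime$, I obtain $2\tau^{(\larr)} \le 2k+1$, hence $\tau^{(\larr)} \le k$. Partitioning $V(G)$ into $V_1 = \{v_i : K^{(\larr)}_i < K^{(\larr)}_{X_L}\}$ and $V_2 = \{v_i : K^{(\larr)}_i > K^{(\larr)}_{X_R}\}$ then gives two sets of size $\frac{n}{2}$ (by the kernel counts above and the analogue of Lemma~\ref{l3}) whose cut equals $\tau^{(\larr)} \le k$, so $\langle G, k\rangle$ is a positive instance of \mcess. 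I expect the main obstacle to be making the third contribution rigorous: verifying that the extended edges may be treated as grid-exiting edges so that Lemma~\ref{lem:firstrow}/Lemma~\ref{l7} applies verbatim, and confirming that the $\beta$ internal $X_L$ edges it produces are genuinely disjoint from those already counted, so that the bound "(endpoints to the left)\,$+$\,(compensating internal edges)\,$\ge 2$" holds for every value of $\beta$.
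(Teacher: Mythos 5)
Your proposal is correct and follows essentially the same route as the paper's own proof: the same three pairwise-disjoint contributions at $e^*$ (the $3n^4+1$ edges internal to $X_R$ from the kernel bound, the $2n^3 + 2\tau^{(\larr)}$ edges from Lemma~\ref{l:e-star}, and the ``$+2$'' obtained by trading misplaced $X_L$-endpoints for internal $X_L$ edges via Lemma~\ref{lem:firstrow}), followed by the same comparison with $k'+2$ to get $\tau^{(\larr)} \le k + \frac{1}{2}$, hence $\tau^{(\larr)} \le k$. The step you flagged as the main obstacle is handled in the paper exactly as you propose, with $S_{X_L}$ the positions of $X_L$-nodes to the right of $e^*$ and $F$ the misplaced endpoints, so no additional argument is needed.
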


\begin{proof}
Let $\larr$ be a linear arrangement of $G'$ having maximum extended width bounded by $k' + 2 = \ekprime$. Since the maximum (regular) width of $G'$ is at most its maximum extended width, $\larr$ satisfies the hypotheses of all of the lemmas in Section~\ref{ssec:gprime2g} constraining the arrangement of the kernels and the misplaced nodes from the grid components of $G'$.  

From Lemma~\ref{l:e-star} there are at least $2n^3 + 2\tau^{(\larr)}$ distinct edges which cross over $e^*$, not including edges internal to the grids $L$ or $R$.  In addition, there are at least $3n^4 + 1$ edges internal to $X_R$ that are crossing over $e^*$.  
Finally, consider the two endpoints of either the red or green path appearing in the first and in the last lines of $X_L$, and let $\alpha_e$ be the number of such endpoints that have been misplaced to the right of $e^*$ under $\larr$.  Note that we have $0 \leq \alpha_e \leq 2$. 

Let ${\cal P}(X_L) = \{ p : (\larr)^{-1}(p) \in X_L \}$ 
and let $S_{X_L} = \{ p : (\larr)^{-1}(p) \in X_L \land p > e^* \}$.
Because $\size{{\cal P}(X_L)} = 36n^8 + 12n^4$ and $\size{K_{X_L}} \ge 17n^8$,
we know that $\size{S_{X_L}} \le \size{{\cal P}(X_L) - K_{L'}} \le 20n^8$.
Therefore $\size{S_{X_L}} \le (12n^4)(3n^4 +1 - \alpha_e -2)$ and
we can apply Lemma~\ref{lem:firstrow} with $X = X_L$ and $S = S_{X_L}$.
From Lemma~\ref{lem:firstrow}, there are at least $\alpha_e$ edges internal to
$X_L$ that cross over $e^*$.  Along with the $2-\alpha_e$ extended edges departing 
from $X_L$, this accounts for two additional edges that cross over $e^*$.

Combining all of the above contributions, the total number of edges crossing over $e^*$ is at least $3n^4 + 2n^3 + 2\tau^{(\larr)} + 3$.  Since the extended width of $e^*$ is at most $\ekprime$, it also follows that $\tau^{(\larr)} \le k + \frac{1}{2}$.  This means that the number of edges $(v_i,v_j)$ in $G$ such that $K^{(\larr)}_i < K^{(\larr)}_L$ and $K^{(\larr)}_j > K^{(\larr)}_L$ is at most $k$.  Hence, by partitioning the nodes of $G$ according to the position of their corresponding kernels under $\larr$, we have an equal
size subset partition whose cut is at most $k$.
\end{proof}

\begin{theorem}
\label{thm:ecw_np}
Let $G$ be a permutation multigraph and let $k$ be a positive integer. 
The problem of deciding whether $\ecw(G) \leq k$ is NP-complete.
\end{theorem}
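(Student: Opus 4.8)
The plan is to mirror the proof of Theorem~\ref{thm:mrgbcw}, reducing from \mcess\ restricted to cubic graphs, which is known to be NP-complete. Given an instance $\langle G, k\rangle$ of \mcess\ with $G$ a cubic graph on $n$ vertices, I would apply the construction of Section~\ref{ssec:gprime} to obtain the permutation multigraph $G'$ together with the integer $k' = \kprime$. This construction runs in time polynomial in $n$, since $G'$ has polynomially many nodes and edges, and $G'$ was shown in Section~\ref{ssec:gprime} to be a genuine permutation multigraph; hence it is a legitimate instance of the extended cutwidth problem. The target instance is $\langle G', k'+2\rangle$, asking whether $\ecw(G') \leq k'+2 = \ekprime$.

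For correctness, the two directions of the equivalence have already been established. Lemma~\ref{l:bisec2ecutwidth} shows that if $\langle G, k\rangle$ is a positive instance of \mcess\ then $\ecw(G') \leq k'+2$, and Lemma~\ref{l:ecutwidth2bisec} shows the converse, that $\ecw(G') \leq k'+2$ implies $\langle G, k\rangle$ is a positive instance of \mcess. Combining these, $\langle G, k\rangle$ is a positive instance of \mcess\ if and only if $\ecw(G') \leq k'+2$, so an algorithm deciding the extended cutwidth bound could be used to decide \mcess. This establishes NP-hardness of the extended cutwidth problem for permutation multigraphs.

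To complete the proof I would verify membership in NP exactly as for Theorem~\ref{thm:mrgbcw}: a linear arrangement $\nu$ of $G$ is a polynomially sized certificate that can be guessed, and given $\nu$ the extended width $\ewg(G,\nu,i)$ at every position can be computed in polynomial time directly from definition~(\ref{eq:ewidth}), since the only addition over the ordinary width is the contribution of the four endpoint vertices in $V_e$, each detectable in polynomial time. Taking the maximum over all positions and comparing with $k$ then decides whether $\ecw(G) \leq k$.

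I expect no substantive obstacle at this final stage: the entire analytic burden has been discharged by the preceding lemmas of Sections~\ref{ssec:gprime2g} and~\ref{ssec:space}, in particular the delicate kernel and misplacement arguments that handle the asymmetry introduced by the extended edges (which, unlike the ordinary cutwidth, make the problem sensitive to reversal of the arrangement). The only points requiring care are bookkeeping ones already settled earlier: that $G'$ is a permutation multigraph, and that the threshold shift from $k'$ to $k'+2$ exactly matches the extra contribution of the four extended edges at the critical positions inside $L$ and $R$.
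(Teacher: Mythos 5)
Your proposal is correct and follows essentially the same route as the paper: NP-hardness via the construction of $G'$ from Section~\ref{ssec:gprime}, with the two directions of the equivalence supplied by Lemma~\ref{l:bisec2ecutwidth} and Lemma~\ref{l:ecutwidth2bisec} at threshold $k'+2$, and NP membership by guessing a linear arrangement and checking its maximum extended width in polynomial time. The only difference is that you spell out bookkeeping details (polynomiality of the construction, the explicit target instance $\langle G', k'+2\rangle$) that the paper leaves implicit.
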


\begin{proof}
By Lemma~\ref{l:bisec2ecutwidth} and Lemma~\ref{l:ecutwidth2bisec}, an algorithm that decides whether $\ecw(G) \leq k$ can be used to solve \mcess.  
Thus the problem in the statement of the theorem is NP-hard. 
The problem is in NP because the maximum extended width of a linear arrangement can 
be verified in polynomial time.
\end{proof}

\begin{theorem}
\label{thm:fo}
Let $s$ be a synchronous rule with $r$ nonterminals and with 
associated permutation $\pi_s$, 
and let $k$ be a positive integer.  The problem of deciding whether 
\begin{eqnarray*}
\min_{\sigma} \; \max_{i \in [n]} \; \fo(\pi_s, \sigma, i) & \leq & k
\end{eqnarray*}
is NP-complete, where $\sigma$ ranges over all linear parsing
strategies for $s$.
\end{theorem}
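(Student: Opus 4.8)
The plan is to reduce from the extended cutwidth problem for permutation multigraphs, shown NP-complete in Theorem~\ref{thm:ecw_np}, exploiting the identity~(\ref{eq:fo_ewd}) that relates fan-out to extended width. First I would dispatch membership in NP: a linear parsing strategy $\sigma$ for $s$ is a permutation of $[r]$, which can be guessed in polynomial time, and for a guessed $\sigma$ the quantity $\max_{i\in[r]}\fo(\pi_s,\sigma,i)$ can be evaluated in polynomial time directly from definitions~(\ref{eq:ibd}), (\ref{eq:ebd}) and~(\ref{eq:fo}).

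For the hardness direction, the key observation is that the correspondence of Section~\ref{ssec:pmg} between synchronous rules and permutation multigraphs can be run backwards. Given an arbitrary permutation multigraph $G=(V,A\uplus B)$, I would label the vertices $1,\ldots,r$ according to their order along the red Hamiltonian path $A$, read off the permutation $\rho$ that the green path $B$ induces on these labels, and take $s$ to be any synchronous rule (for instance one whose right-hand sides contain only nonterminals, as in~(\ref{eq:rule-s})) whose associated permutation $\pi_s$ satisfies $\pi_s^{-1}=\rho$. By construction $G_s$ is isomorphic to $G$, and $s$ is computable from $G$ in polynomial time. Moreover, the map $\sigma_s\mapsto\nu_s$ of Section~\ref{ssec:pmg} is a bijection between the linear parsing strategies for $s$ and the linear arrangements of $G_s$, so by~(\ref{eq:fo_ewd}) we have $\fo(\pi_s,\sigma_s,i)=\tfrac12\,\ewg(G_s,\nu_s,i)$ at every position. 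Taking the minimum over strategies and the maximum over positions yields
\begin{eqnarray*}
\min_{\sigma}\;\max_{i\in[r]}\;\fo(\pi_s,\sigma,i) & = & \frac{1}{2}\,\ecw(G_s) \; = \; \frac{1}{2}\,\ecw(G) \; ,
\end{eqnarray*}
where the first equality uses the bijection to identify $\min_\sigma$ with $\min_\nu$.

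The remaining step is the factor-of-two bookkeeping between the two thresholds. Because the fan-out is by definition the number of substrings spanned by a state, it is an integer, so $\ewg(G_s,\nu_s,i)=2\,\fo(\pi_s,\sigma_s,i)$ is even at every position and for every arrangement; since every permutation multigraph arises as some $G_s$, the quantity $\ecw(G)$ is even for every permutation multigraph $G$. Hence for any integer $\kappa$ we have $\ecw(G)\le\kappa$ if and only if $\ecw(G)\le 2\lfloor\kappa/2\rfloor$, that is, if and only if $\min_\sigma\max_i\fo(\pi_s,\sigma,i)\le\lfloor\kappa/2\rfloor$. Setting $k=\lfloor\kappa/2\rfloor$ therefore turns an instance $\langle G,\kappa\rangle$ of the extended cutwidth problem into an equivalent instance of the fan-out decision problem, completing the reduction and the NP-hardness proof.

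I expect no single deep obstacle here, since all of the combinatorial weight has already been carried by the construction of $G'$ and the kernel and misplacement lemmas feeding Theorem~\ref{thm:ecw_np}. The points that require care, rather than difficulty, are verifying that the backward reading of the Section~\ref{ssec:pmg} correspondence indeed produces a rule with $G_s$ isomorphic to $G$ and with $r>2$ linked nonterminals (which holds since the hard instances inherited from \mcess\ on cubic graphs are large), and keeping the parity argument tight so that the single threshold $k$ faithfully captures $\ecw(G)\le\kappa$ for both even and odd $\kappa$.
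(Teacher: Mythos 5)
Your proposal is correct and takes essentially the same route as the paper: its proof of Theorem~\ref{thm:fo} likewise obtains NP-hardness directly from Theorem~\ref{thm:ecw_np} together with the identity~(\ref{eq:fo_ewd}) relating fan-out to extended width (and the definition~(\ref{eq:ecwidth})), and establishes NP membership by guessing a strategy and evaluating its maximum fan-out in polynomial time. The details you spell out --- running the rule-to-multigraph correspondence of Section~\ref{ssec:pmg} backwards, the strategy/arrangement bijection, and the evenness of $\ecw$ that handles odd thresholds --- are exactly what the paper's terse ``directly follows'' leaves implicit, and they are all sound.
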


\begin{proof}
The NP-hardness part directly follows from Theorem \ref{thm:ecw_np}, along with~(\ref{eq:fo_ewd}) and~(\ref{eq:ecwidth}).  The problem is also in NP, because the maximum value of the fan-out for a guessed linear parsing strategy can be computed in polynomial time.
\end{proof}

We have already discussed how fan-out is directly related to the space complexity of the implementation of a linear parsing strategy.  From Theorem~\ref{thm:fo} we then conclude that optimization of the space complexity of linear parsing for SCFGs is NP-hard.

\subsection{Time Complexity}
\label{ssec:time}

We now turn to the optimization of the time complexity of linear parsing for SCFGs.  
It turns out that at each step of a linear parsing strategy, the time complexity is related to a variant of the notion of width, called modified width, computed for the corresponding position of a permutation graph.   With this motivation,  we investigate below the  modified width and some extensions of this notion, and we derive our main result in a way which parallels what we have already done in Section~\ref{ssec:space}. 

We start with some additional notation.  Let $G = (V,E)$ be an undirected (multi)graph such that $\size{V} = n >1$, and let $\nu$ be some linear arrangement of $G$.  For any $i \in [n]$, the \termdef{modified width} of $G$ at $i$ with respect to $\nu$, written $\mwg(G,\nu,i)$, is defined as $\size{\{(u,v) \in E : \nu(u) < i < \nu(v)\}}$.   Informally, $\mwg(G,\nu,i)$ is the number of distinct edges crossing over the vertex at position $i$-th in the linear arrangement $\nu$.  Again in case of multigraphs the size of the previous set should be computed taking into account multiple edge occurrences.  The following result is a corollary to Lemma~\ref{lem:partition}.  

\begin{corollary}
\label{cor:grid-mcw}
For any grid $X = \grid[\ggh,\ggw]$ with $\ggw \geq 2\ggh+1$, for any linear arrangement $\nu$ of $X$, and for any $i$ with $\ggh^2 < i \leq \ggh\ggw-\ggh^2$, $\mwg(X,\nu,i) \geq \ggh - 2$.
\end{corollary}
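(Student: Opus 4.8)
The plan is to reduce the statement to Lemma~\ref{lem:partition} by turning the position $i$ into a genuine bipartition of \emph{all} the nodes of the grid, and then to correct for the single vertex sitting exactly at position $i$. First I would introduce $x_i$ for the vertex with $\nu(x_i) = i$ and observe that the modified width counts precisely the edges straddling $x_i$: setting $L = \{x : \nu(x) < i\}$ and $R = \{x : \nu(x) > i\}$, we have $\mwg(X,\nu,i) = \delta(L,R)$, where $x_i$ belongs to neither $L$ nor $R$. This is not yet a partition of the full node set, so Lemma~\ref{lem:partition} cannot be applied as it stands; this off-by-one vertex is the only real obstacle in the argument.

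The key device is to assign $x_i$ to one of the two sides and apply Lemma~\ref{lem:partition} to the resulting honest bipartition. Placing $x_i$ on the left yields the partition $(L \cup \{x_i\},\, R)$, whose blocks have sizes $i$ and $\ggh\ggw - i$; the hypothesis $\ggh^2 < i \le \ggh\ggw - \ggh^2$ makes both at least $\ggh^2$, so Lemma~\ref{lem:partition} gives $\delta(L \cup \{x_i\},\, R) \ge \ggh$. Since $\delta(L \cup \{x_i\},\, R) = \mwg(X,\nu,i) + \rho$, where $\rho$ is the number of edges from $x_i$ into $R$, I obtain $\mwg(X,\nu,i) \ge \ggh - \rho$. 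Symmetrically, placing $x_i$ on the right gives the partition $(L,\, R \cup \{x_i\})$, whose blocks have sizes $i-1$ and $\ggh\ggw - i + 1$, again both at least $\ggh^2$ (here the \emph{strict} inequality $\ggh^2 < i$ is exactly what forces $|L| = i-1 \ge \ggh^2$), so $\mwg(X,\nu,i) \ge \ggh - \lambda$, where $\lambda$ counts the edges from $x_i$ into $L$.

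Finally I would combine the two bounds against the degree of $x_i$. Because $X$ is a grid, no vertex has degree exceeding four, so $\lambda + \rho = \deg(x_i) \le 4$ and hence $\min\{\lambda,\rho\} \le 2$. Retaining whichever of the two inequalities carries the smaller correction term gives $\mwg(X,\nu,i) \ge \ggh - \min\{\lambda,\rho\} \ge \ggh - 2$, as claimed. Apart from the placement trick for $x_i$, everything reduces to checking the two size inequalities, which follow immediately from the range assumed for $i$.
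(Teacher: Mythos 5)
Your proof is correct and is essentially the paper's own argument: your two applications of Lemma~\ref{lem:partition}, with $x_i$ assigned to either side, are exactly the bounds $\wg(X,\nu,i-1)\ge \ggh$ and $\wg(X,\nu,i)\ge \ggh$ of Corollary~\ref{cor:kernel}, which the paper combines through the identity $\mwg(X,\nu,i)=\frac{1}{2}\left(\wg(X,\nu,i-1)+\wg(X,\nu,i)-\Delta(\nu^{-1}(i))\right)$ together with the grid degree bound $\Delta(\nu^{-1}(i))\le 4$. The only cosmetic difference is that you retain the better of the two one-sided bounds, giving $\ggh-\min\{\lambda,\rho\}$, whereas the paper averages them to get $\ggh-\frac{1}{2}\Delta(\nu^{-1}(i))$; both conclude $\mwg(X,\nu,i)\ge \ggh-2$.
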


\begin{proof}
Modified width at the vertex in position $i$ can be related to the (regular) width of 
the gaps before and after position $i$, and the degree $\Delta(\nu^{-1}(i))$ of the vertex at position $i$
\begin{align*}
\mwg(X,\nu,i) = \frac{1}{2}(\wg(X,\nu,i-1) + \wg(X,\nu,i) - \Delta(\nu^{-1}(i))) \; .
\end{align*}
For each $i$ with $\ggh^2 < i \leq \ggh\ggw-\ggh^2$, we can use Corollary~\ref{cor:kernel} and write  
\begin{align*}
\mwg(X,\nu,i) & \ge H - \frac{1}{2}\Delta(\nu^{-1}(i))) \; .
\end{align*}
Because vertices in a grid have degree at most four, we have $\mwg(X,\nu,i) \ge H - 2$.
\end{proof}

Let $s$ be a synchronous rule and let $G_s$ be the associated permutation multigraph.  Let also $\nu_s$ be a linear arrangement for $G_s$.  We define the \termdef{extended modified width} at $i \in [n]$ as
\begin{eqnarray} 
\emwg(G_s, \nu_s, i) & = & \wg(G_s, \nu_s, i) + \sum_{v \in V_e} I(\nu_s(v) \le i) \; .
\label{eq:emwidth}
\end{eqnarray}
Again, the contribution of the endpoints of $G_s$ to the extended modified width can be visualized as counting, at each position in the linear arrangement, an additional set of edges running from the endpoint of the red and green paths all the way to the right end of the linear arrangement,
as shown in Figure~\ref{fig:extended}.   The \termdef{extended modified cutwidth} of $G_s$ is  
\begin{eqnarray*}
\emcw(G_s) & = & \min_{\nu} \; \max_{i \in [n]} \; \emwg(G_s,\nu,i) \; ,
\end{eqnarray*}
where $\nu$ ranges over all possible linear arrangements for $G_s$. 

From now on, we assume that $\langle G, k \rangle$ is an instance of \mcess, where $G$ is a cubic graph with $n$ vertices.  We also assume that $G'$ and $k'$ are constructed from $G$ and $k$ as in Section~\ref{ssec:gprime}.  

\begin{lemma}
\label{l:emcutwidth2bisec}
If $\langle G, k \rangle$ is a positive instance of \mcess, then $\emcw(G') \leq k' $.
\end{lemma}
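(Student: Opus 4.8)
The plan is to reuse the very same linear arrangement $\nu$ built in the proof of Lemma~\ref{l:g2gprime}, and to show that its extended modified width never exceeds $k'$. The starting point is a purely combinatorial identity, already established inside the proof of Corollary~\ref{cor:grid-mcw} but valid for any multigraph: for every internal position $i$,
\[
\mwg(G',\nu,i) = \tfrac{1}{2}\big(\wg(G',\nu,i-1) + \wg(G',\nu,i) - \Delta(\nu^{-1}(i))\big),
\]
where $\Delta(\nu^{-1}(i))$ denotes the degree, counted with edge multiplicity, of the vertex at position $i$. Writing $p(i) = \sum_{v \in V_e} I(\nu(v) \le i)$ for the number of endpoints lying weakly to the left of $i$, so that $\ewg(G',\nu,i) = \wg(G',\nu,i) + p(i)$ and (by the definition of extended modified width, the modified-width analogue of~(\ref{eq:ewidth})) $\emwg(G',\nu,i) = \mwg(G',\nu,i) + p(i)$, I would first derive
\[
\emwg(G',\nu,i) = \tfrac{1}{2}\big(\ewg(G',\nu,i-1) + \ewg(G',\nu,i)\big) + \tfrac{1}{2}\big(p(i)-p(i-1)\big) - \tfrac{1}{2}\Delta(\nu^{-1}(i)).
\]
Thus the extended modified width is governed by the extended (regular) width at the two neighbouring gaps, corrected by the degree of the current vertex and by the number of endpoints appearing exactly at position $i$.

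From here I would invoke Lemma~\ref{l:bisec2ecutwidth}, which guarantees $\ewg(G',\nu,j) \le k'+2$ at every gap $j$ under this arrangement, so the averaged term above is at most $k'+2$. It then suffices to check that the correction $\tfrac{1}{2}(p(i)-p(i-1)) - \tfrac{1}{2}\Delta(\nu^{-1}(i))$ is at most $-2$ wherever the averaged term is near its maximum. For every interior vertex of the grids $L$, $M$, $R$ and $G_i$ — precisely the positions where the extended width attains $k'+2$ — the vertex carries four single edges and no endpoint, so $p(i)-p(i-1)=0$ and $\Delta(\nu^{-1}(i))=4$, giving $\emwg(G',\nu,i) \le (k'+2)-2 = k'$. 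For vertices on a boundary row or column of a grid the double edges only raise $\Delta(\nu^{-1}(i))$ above $4$, making the correction more negative and improving the bound; and for every position outside $L$ and $R$, Lemma~\ref{l:bisec2ecutwidth} already leaves a slack of $n^4 + \order{n^3}$ in the extended width, which dwarfs the constant corrections and keeps the extended modified width far below $k'$.

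The genuinely delicate positions are the four endpoint vertices $b_G = l^{1,1}$, $b_R = l^{\lgh,1}$, $e_R = r^{1,\lgw}$ and $e_G = r^{\rgh,\rgw}$ of $V_e$, where $p(i)-p(i-1)\ge 1$ while the degree may fall below $4$; these lie on the first column of $L$ and the last column of $R$. For them I would reuse the column-by-column bookkeeping already carried out in Lemma~\ref{l:bisec2ecutwidth}: on the first column of $L$ only $b_G$ has been seen, so the neighbouring extended widths sit strictly below $k'+2$ and absorb the smaller degree; on the last column of $R$, the argument showing that each lost red and green edge is replaced by one new red edge and one new extended edge keeps the extended width at $k'+2$ exactly while the vertex still has degree four. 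Assembling all cases yields $\max_i \emwg(G',\nu,i) \le k'$, hence $\emcw(G') \le k'$. I expect the main obstacle to be exactly this last-column-of-$R$ analysis, where the extended edges ``switch on'' just as the vertex degree drops: this is the single place where the target bound $k'$ (rather than $k'+2$) is tight, and it must be verified edge by edge that the simultaneous gain of extended edges and loss of internal edges never pushes the extended modified width above $k'$.
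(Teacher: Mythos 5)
Your overall skeleton is the paper's: reuse the arrangement $\nu$ from Lemma~\ref{l:g2gprime}, import the bound $\ewg(G',\nu,\cdot)\le k'+2$ from Lemma~\ref{l:bisec2ecutwidth}, and convert it into a bound on $\emwg$ by subtracting vertex degrees. Your averaging identity does this correctly at every vertex of degree four, and the slack of $n^4+\order{n^3}$ disposes of all positions outside $L$ and $R$, exactly as in the paper's (much shorter) proof. The genuine gap is at the place you yourself flag as delicate, the endpoint $e_R=r^{1,\rgw}$, the first vertex of the last column of $R$, and your proposed resolution of it is wrong on both counts. First, $e_R$ does \emph{not} ``still have degree four'': it is the terminus of the red path, so its degree in $G'$ counted with multiplicity is three (one double horizontal edge to $r^{1,\rgw-1}$ and one single green edge to $r^{2,\rgw}$). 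In your identity $\Delta$ must be the non-extended degree, so the correction term at $e_R$ is $\tfrac{1}{2}\bigl(p(i)-p(i-1)\bigr)-\tfrac{1}{2}\Delta=\tfrac{1}{2}-\tfrac{3}{2}=-1$, not $\le -2$; since both gaps adjacent to $e_R$ carry extended width exactly $k'+2$ when the bisection cut $\tau$ equals $k$, you obtain only $\emwg\le k'+1$. Second, this is not slack you can recover by finer bookkeeping, because under your convention $p(i)=\sum_{v\in V_e} I(\nu(v)\le i)$ the value at $e_R$ really is $k'+1$: the edges crossing over $e_R$ are the $3n^4$ horizontal edges internal to $R$ in rows $2,\dots,\rgh$, the $2n^3$ edges from $M$ to the right-hand grids $G_i$, the $2\tau$ edges between grids on opposite sides, plus the extended edges of $b_R$, $b_G$, and (by your ``$\le i$'') of $e_R$ itself, totalling $3n^4+2n^3+2\tau+3=k'+1$ for $\tau=k$. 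So with your reading of the definition, the inequality fails at the single position where it must be tight.

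The missing idea is that an endpoint's own extended edge has to be counted as an edge \emph{departing} from that vertex (one of its four incident edge-slots), not as an edge crossing over it; equivalently, the indicator in the definition of $\emwg$ must be strict, $I(\nu(v)<i)$. This convention is forced by the derivation in Section~\ref{ssec:time}, where $t_{i+1}=\emwg(G_s,\nu_s,i+1)+\Delta(\nu_s^{-1}(i+1))$ with $\Delta=4$ places the vertex's own extended edge among the boundaries added at step $i+1$, not among those passing over it. Once extended edges are treated as genuine edges running from their endpoint to a virtual vertex past the right end of the arrangement, every vertex of $L$ and $R$ --- the four endpoints included --- has exactly two augmented edges on each side, the endpoint-indicator jump disappears from your correction term, and $\emwg(G',\nu,i)\le\ewg(G',\nu,i)-2\le k'$ holds at every position of $L$ and $R$ with no case analysis at $e_R$ at all; this one-line subtraction is precisely the paper's argument. (A minor further slip: double edges never push the degree above four --- every non-endpoint vertex of a permutation multigraph has exactly two red and two green edge-slots --- but you only use $\Delta\ge 4$ there, so that error is harmless.)
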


\begin{proof}
Consider the linear arrangement used in Lemma~\ref{l:bisec2ecutwidth} to show that $\ecw(G') \leq k' + 2$.  The critical points in that linear arrangement are all within the $L$ and $R$ components.  At all positions in these components, each vertex has two edges
extending to its right and two edges extending to its left (one red and one green in each case).  Thus, at these positions, the extended modified cutwidth is less than the extended cutwidth by two.
\end{proof}

\begin{lemma}
\label{l:bisec2emcutwidth}
If $\emcw(G') \leq k'$, then $\langle G, k \rangle$ is a positive instance of \mcess.
\end{lemma}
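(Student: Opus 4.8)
The plan is to mirror the proof of Lemma~\ref{l:ecutwidth2bisec}, replacing the gap-based counting of the (extended) width by a vertex-based counting of the (extended) modified width, and to check that the two units lost in passing from Corollary~\ref{cor:kernel} to Corollary~\ref{cor:grid-mcw} are exactly offset by the target threshold dropping from $k'+2$ to $k' = \kprime$.

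First I would fix a linear arrangement $\larr$ of $G'$ whose maximum extended modified width is at most $k'$, and observe that it satisfies the hypotheses of every lemma in Section~\ref{ssec:gprime2g} and of Lemma~\ref{l:e-star}, which are all stated for arrangements of maximum regular width at most $k' + n^2$. Indeed, for any position $i$ one has $\wg(G',\larr,i) = \mwg(G',\larr,i) + d^+$, where $d^+$ is the number of edges incident to the vertex at position $i$ running to its right; since $G'$ is a permutation multigraph every vertex has degree at most four, so $d^+ \le 4$, and since $\mwg(G',\larr,i)$ is bounded above by the extended modified width at $i$, we get $\wg(G',\larr,i) \le k' + 4 \le k' + n^2$ for $n$ large. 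Consequently the kernels are totally ordered with those of $L,M,R$ consecutive, exactly $\frac{n}{2}$ of the grids $G_i$ lie on each side, and the notation $X_L,X_R,l^*,r^*,e^*,\tau^{(\larr)}$ from Section~\ref{ssec:space} is available.

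Next I would take as witness the vertex $v^* = \larr^{-1}(e^* + 1)$, i.e.\ the vertex immediately to the right of the gap $e^*$; by definition of $e^*$ this vertex lies in the strict interior of the kernel $K^{(\larr)}_{X_R}$, so its induced rank inside $X_R$ exceeds $(\lgh)^2$ and Corollary~\ref{cor:grid-mcw} yields at least $\lgh - 2 = 3n^4 - 1$ edges internal to $X_R$ crossing over $v^*$. Because $v^*$ belongs to $L$ or $R$, no edge counted by Lemma~\ref{l:e-star} is incident to it; hence such an edge crosses over the vertex $v^*$ if and only if it crosses the gap $e^*$, and Lemma~\ref{l:e-star} contributes at least $2n^3 + 2\tau^{(\larr)}$ further edges over $v^*$. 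Finally I would recover the endpoint contribution exactly as in Lemma~\ref{l:ecutwidth2bisec}: letting $\alpha_e \in \{0,1,2\}$ be the number of the two vertices of $V_e$ lying in the first or last row of $X_L$ that are misplaced to the right of $v^*$, the $2 - \alpha_e$ non-misplaced endpoints send extended edges over $v^*$, while applying Lemma~\ref{lem:firstrow} to $X_L$ with $S = \{ p : \larr^{-1}(p) \in X_L \wedge p > e^* \}$ produces $\alpha_e$ internal edges of $X_L$ crossing over $v^*$ (these are not incident to $v^*$, so they cross the vertex, not merely the gap).

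Adding these contributions, the extended modified width at $v^*$ is at least $(3n^4 - 1) + (2n^3 + 2\tau^{(\larr)}) + (2 - \alpha_e) + \alpha_e = 3n^4 + 2n^3 + 2\tau^{(\larr)} + 1$. Since this is bounded by $k' = \kprime = 3n^4 + 2n^3 + 2k + 2$, we obtain $2\tau^{(\larr)} + 1 \le 2k + 2$, i.e.\ $\tau^{(\larr)} \le k$. Partitioning the vertices of $G$ according to whether their kernel lies to the left of $K^{(\larr)}_{X_L}$ or to the right of $K^{(\larr)}_{X_R}$ then gives a balanced partition, by the $\frac{n}{2}$--$\frac{n}{2}$ split already noted, whose cut equals $\tau^{(\larr)} \le k$, so $\langle G,k\rangle$ is a positive instance of \mcess. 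The main obstacle is the careful bookkeeping of the third paragraph: one must verify that each edge produced by Corollary~\ref{cor:grid-mcw} and by Lemma~\ref{lem:firstrow} genuinely crosses over the \emph{vertex} $v^*$ rather than only an adjacent gap, and confirm that the two units lost in the internal grid count, relative to the $\lgh$ used in the regular-width argument, are precisely replaced by the combined two-unit endpoint contribution, so that the arithmetic closes at $\tau^{(\larr)} \le k$ under the smaller threshold $k'$.
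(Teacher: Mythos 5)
Your proof is correct and takes essentially the same route as the paper's own: derive the regular-width bound $k'+4 \le k'+n^2$ from the extended-modified-width hypothesis so that the kernel lemmas and Lemma~\ref{l:e-star} apply, then sum the three contributions at a vertex adjacent to the gap $e^*$ (the $3n^4-1$ internal edges of $X_R$ from Corollary~\ref{cor:grid-mcw}, the $2n^3+2\tau^{(\larr)}$ cross edges, and the two-unit endpoint/misplacement contribution via Lemma~\ref{lem:firstrow}) to force $\tau^{(\larr)} \le k$ and read off the balanced cut. The only deviation is cosmetic: you take the witness vertex immediately to the right of $e^*$ where the paper takes the one to its left, which if anything accommodates the strict inequality in Corollary~\ref{cor:grid-mcw} more cleanly, and your explicit checks that gap-crossing edges also cross the chosen vertex are just a more careful rendering of steps the paper asserts implicitly.
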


\begin{proof}
Assume a linear arrangement $\larr$ for $G'$ having maximum extended modified width bounded by $k'$.  The maximum (non-extended) modified width of $G'$ must be smaller than or equal to $k'$,  and the  maximum (non-extended, non-modified) width of $G'$ must be smaller than or equal to $k'+4$, because the maximum degree of vertices in $G'$ is four.  Since the maximum width of $G'$ under $\larr$ is bounded by $k'+n^2$, we apply Lemma~\ref{l:e-star} and conclude that there are at least $2n^3 + 2\tau^{(\larr)}$ distinct edges which cross over the vertex to the left of the gap $e^*$, not including edges internal to the grids $L$ or $R$.  

By Corollary~\ref{cor:grid-mcw}, the number of edges internal to $X_R$ which cross over the vertex to the left of the gap $e^*$ is at least $3n^4-1$.   Finally, consider the two endpoints of either the red or green path appearing in the first and in the last lines of $X_L$, and let $\alpha_e$ be the number of such endpoints that have been misplaced to the right of $e^*$ under $\larr$, with $0 \leq \alpha_e \leq 2$.   We apply the same argument as in the proof of Lemma~\ref{l:ecutwidth2bisec}, and conclude that there are at least $\alpha_e$ edges internal to $X_L$ that cross over the vertex to the left of the gap $e^*$.  Along with the $2-\alpha_e$ extended edges departing from $X_L$, this accounts for two additional edges that cross over the vertex to the left of the gap $e^*$.

Combining all of the above contributions, the total number of edges crossing over the vertex to the left of the gap $e^*$ is at least $3n^4 + 2n^3 + 2\tau^{(\larr)} + 1$.  Since the extended modified width of the vertex to the left of the gap $e^*$ is at most $k' = \kprime$, it also follows that $\tau^{(\larr)} \le k + \frac{1}{2}$.  This means that the number of edges $(v_i,v_j)$ in $G$ such that $K^{(\larr)}_i < K^{(\larr)}_{X_L}$ and $K^{(\larr)}_j > K^{(\larr)}_{X_R}$ is at most $k$.  Hence, by partitioning the nodes of $G$ according to the position of their corresponding kernels under $\larr$, we have an equal
size subset partition whose cut is at most $k$.
\end{proof}

\begin{theorem}
\label{thm:emcw_np}
Let $G$ be a permutation multigraph and let $k$ be a positive integer. 
The problem of deciding whether $\emcw(G) \leq k$ is NP-complete.
\end{theorem}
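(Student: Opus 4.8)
The plan is to follow the same two-step strategy used for Theorem~\ref{thm:ecw_np}, now with the extended modified cutwidth in place of the extended cutwidth.  The substantive work has already been carried out in the two preceding lemmas: Lemma~\ref{l:emcutwidth2bisec} gives one direction of the correspondence, namely that a positive instance $\langle G, k\rangle$ of \mcess\ yields $\emcw(G') \le k'$, while Lemma~\ref{l:bisec2emcutwidth} gives the converse, namely that $\emcw(G') \le k'$ forces $\langle G, k\rangle$ to be a positive instance of \mcess.  Combining the two lemmas, $\langle G, k\rangle$ is a positive instance of \mcess\ if and only if $\langle G', k'\rangle$ is a positive instance of the extended modified cutwidth decision problem.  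Since the construction of $G'$ and $k'$ from $G$ and $k$ in Section~\ref{ssec:gprime} is computable in polynomial time, and since \mcess\ is NP-complete even when restricted to cubic graphs, this is a polynomial-time reduction establishing that deciding $\emcw(G) \le k$ is NP-hard.

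To complete the argument I would establish membership in NP exactly as for the regular and the extended cutwidth: a linear arrangement $\nu$ of $G$ can be guessed in polynomial time, and for each position $i$ the value $\emwg(G,\nu,i)$ can be evaluated in polynomial time directly from its definition, so the maximum over all positions can be verified efficiently.  The problem is therefore NP-complete.

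The only point that requires care, and which has already been absorbed into the lemmas, is that the target threshold differs from the extended cutwidth case: for the extended cutwidth the relevant bound was $k'+2$, whereas for the extended modified width it returns to $k'$.  This is because at every critical position inside the $L$ and $R$ components each vertex carries one red and one green edge on each side, so the modified width lies two below the width there, as established in Lemma~\ref{l:emcutwidth2bisec}.  I do not expect any further obstacle at the level of this theorem, which is essentially a packaging of the two reduction lemmas together with the routine NP-membership observation; the delicate estimates have all been discharged earlier.
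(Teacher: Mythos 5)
Your proposal is correct and follows essentially the same route as the paper: NP-hardness by combining Lemma~\ref{l:emcutwidth2bisec} and Lemma~\ref{l:bisec2emcutwidth} into a reduction from \mcess\ (whose polynomial-time computability of $G'$ and $k'$ you rightly make explicit), and NP membership by guessing a linear arrangement and evaluating the maximum extended modified width in polynomial time. Your closing remark about the threshold dropping from $k'+2$ back to $k'$ is a fair observation, but it is indeed already discharged inside the lemmas and plays no separate role at the level of this theorem.
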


\begin{proof}
By Lemma~\ref{l:bisec2emcutwidth} and Lemma~\ref{l:emcutwidth2bisec}, an algorithm that decides whether $\emcw(G) \leq k$ can be used to solve \mcess.  Thus deciding whether $\emcw(G) \leq k$ is NP-hard.  The problem is in NP because the maximum extended modified width of a linear arrangement can be verified in polynomial time.
\end{proof}

We now relate the notion of extended modified width for a permutation multigraph and the time complexity of a parsing algorithm using a linear strategy.  Let $s$ be a synchronous rule with $r$ nonterminals having the form in~(\ref{e:rule_with_pi}), and let $G_s$ be the  permutation multigraph associated with $s$. Let also $\sigma_s$ be some linear parsing strategy defined for $s$, and let $\nu_s$ be the linear arrangement associated with $\sigma_s$, as defined in Section~\ref{ssec:pmg}.  Recall from Section~\ref{ssec:membership} that the family of parsing algorithms we investigate in this article use parsing states to represent the boundaries (internal and external) that delimit the substrings of the input that have been parsed at some step, following our strategy $\sigma_s$.   

For some $i$ with $i \in [r-1]$, let us consider some parsing state with state type $(s, \sigma_s, i)$.  In the next parsing step $i+1$, we move to a new state with state type $(s, \sigma_s, i+1)$ by adding to our partial analyses the $(i+1)$-th pair of nonterminals from the right-hand side of $s$, defined according to $\sigma_s$.  As already observed in Section~\ref{ssec:membership}, this operation involves some updates to the sequence of boundaries of our old state of type $(s, \sigma_s, i)$.  More precisely, the new state is constructed from the old state by removing a number $\delta^{(-)}_{i+1}$ of boundaries, and by adding a number $\delta^{(+)}_{i+1}$ of new boundaries.  From the definition of $G_s$, we know that $\delta^{(-)}_{i+1}$ is the number of backward edges at vertex $i+1$, and $\delta^{(+)}_{i+1}$ is the number of forward edges at vertex $i+1$, where backward, forward and vertex $i+1$ are all defined relative to the linear arrangement $\nu_s$ and include the extended edges.  We also have $\delta^{(-)}_{i+1} + \delta^{(+)}_{i+1} = \Delta(\nu_s^{-1}(i+1))$, where $\Delta(\nu^{-1}(i+1))$ is the degree of the vertex at position $i+1$. 

The total number of boundaries $t_{i+1}$ involved in the parsing step $i+1$ is then the number of boundaries for state type $(s, \sigma_s, i)$, which includes the $\delta^{(-)}_{i+1}$ boundaries that need to be removed at such step, plus the new boundaries $\delta^{(+)}_{i+1}$.  We already know that $\ewg(G_s, \nu_s, i)$ is the number of boundaries for $(s, \sigma_s, i)$.  We can then write
\begin{eqnarray*}
t_{i+1} & = & \ewg(G_s, \nu_s, i) + \delta^{(+)}_{i+1} \\
 & = & \ewg(G_s, \nu_s, i) - \delta^{(-)}_{i+1} 
       + \delta^{(-)}_{i+1} + \delta^{(+)}_{i+1} \\
 & = & \emwg(G_s, \nu_s, i+1) + \Delta(\nu_s^{-1}(i+1)) \\
 & = & \emwg(G_s, \nu_s, i+1) + 4 \; .
\end{eqnarray*}
That is, the total number of boundaries involved in a parsing
step is the number of boundaries that are not affected by the
step, which correspond to edges passing over a vertex in the 
linear arrangement, $\emwg(G_s, \nu_s, i+1)$, plus the number 
of boundaries opened or closed by adding the new nonterminal,
which is the vertex's degree $\Delta(\nu_s^{-1}(i+1))$.

Let $w_1$ and $w_2$ be the input strings in our synchronous parsing problem, and let $n$ be the maximum between the lengths of $w_1$ and $w_2$.  Observe that there may be $\order{n^{t_{i+1}}}$ different instantiations of parsing step $i+1$ in our algorithm.  In order to optimize the time complexity of our algorithm, relative to synchronous rule $s$, we then need to choose a linear arrangement that achieves maximum extended modified width of $\emcw(G_s)$.  From Theorem~\ref{thm:emcw_np}, we then conclude that optimization of the time complexity of linear parsing for SCFGs is NP-hard.

\section{Discussion}
\label{disc}
In this section, we discuss the implications of our results
for machine translation.
Synchronous 
parsing is the problem of finding a suitable representation of
the derivations
of a string pair consisting of one string from each
language in the translation.  In the context of 
statistical machine translation, this problem arises
when we wish to analyze string pairs consisting 
of known parallel text in, say, English and Chinese,
for the purposes of counting how often each SCFG rule
is used and estimating its probability.  Thus,
synchronous parsing corresponds to the training phase
of a statistical machine translation system.
Our results show that it is NP-hard to find 
the linear synchronous parsing strategy with the 
lowest space complexity or the lowest time complexity.
This indicates that learning complex translation models
from parallel text is a fundamentally hard problem.

A separate, but closely related, problem arises when
translating new Chinese sentences into English,  
a problem known as decoding.  A simple decoding
algorithm consists of parsing the Chinese string with the
Chinese side of the SCFG, and simply reading the English
translation off of the English side of each rule used.
This can be accomplished in time $\order{n^3}$ using the CYK
parsing algorithm for (monolingual) context-free grammars,
since we use only one side of the SCFG.

More generally, we may wish to compute not only the single
highest-scoring translation, but a compact representation
of all English translations of the Chinese string.
Just as the chart constructed during monolingual parsing can be
viewed as a non-recursive CFG generating all analyses 
of a string, we can parse the Chinese string with the
Chinese side of the SCFG, retain the resulting chart, 
and use the English sides of the rules as a non-recursive
CFG generating all possible English translations of the
Chinese string.  However, in general, the rules cannot
be binarized in this construction, since the Chinese and English
side of each rule are intertwined.  This means that the
resulting non-recursive CFG has size greater than $\order{n^3}$,
with the exponent depending on the maximum length of the
SCFG rules.  One way to reduce the exponent is to factor
each SCFG rule into a sequence of steps, as in our linear
SCFG parsing strategies.

Machine translation systems do, in fact, require 
such a representation of possible translations,
rather than simply taking the single best translation 
according to the SCFG.
This is because the score from the SCFG is combined 
with a score from an English $N$-gram language model in order
to bias the output English string toward hypotheses
with a high prior probability, that is, strings that 
look like valid English sentences.  In order to incorporate
scores from an English $N$-gram language model
of order $m$, we extend the dynamic
programming algorithm to include in the state of each 
hypothesis the first and last $m-1$ words of each 
contiguous segment of the English sentence.  Thus, the
number of contiguous segments in English, which is the 
fan-out of the parsing strategy on the English side, 
again enters into the complexity \cite{Huang:2009}. 
Given a parsing strategy with fan-out $f_c$ on the 
Chinese side and fan-out $f_e$ on the English side,
the space complexity of the dynamic programming table
is $\order{n^{f_c}V^{2f_e(m-1)}}$, where
$n$ is the length of the Chinese input string,
$V$ is the size of the 
English vocabulary, and $m$ is the order of $N$-gram 
language model.  Under the standard assumption that
each Chinese word has a constant number of possible
English translations, this is equivalent to 
$\order{n^{f_c+2f_e(m-1)}}$.  Thus, our NP-hardness result
for the space complexity of linear strategies for synchronous parsing
also applies to the space complexity of linear strategies
for decoding with an integrated language model.

Similarly, the time complexity of language-model-integrated decoding is
related to the time complexity of synchronous parsing
through the order $m$ of the $N$-gram language model.
In synchronous parsing, the time complexity of a step combining of state
of type $(s,\sigma,k)$ and a nonterminal $(A_{1,k+1},A_{2,\pi^{-1}(k+1)})$ to produce a 
state of type $(s,\sigma,k+1)$ is $\order{n^{a+b+c}}$, where 
$a$ is the number of boundaries in states of type $(s,\sigma,k)$,
$b$ the number in nonterminal 
$(A_{1,k+1},A_{2,\pi^{-1}(k+1)})$, and $c$ the number in
type $(s,\sigma,k+1)$.  If we rewrite $a$ as $a_c+a_e$, where $a_c$
is the number of boundaries in Chinese and $a_e$ is
the number of boundaries in English, then the exponent 
for the complexity of synchronous parsing is:
\[ a_e + b_e + c_e + a_c + b_e + c_e \]
and the exponent for language-model-integrated decoding is:
\[ (m-1)(a_e + b_e + c_e) + a_c + b_e + c_e \]
Note that these two expressions coincide in the case where $m=2$.
Since we proved that optimizing the time complexity
of linear synchronous parsing strategies is NP-complete,
our result also applies to the more general problem 
of optimizing time complexity of language-model-integrated
decoding for language models of general order $m$.

\paragraph*{Open Problems}
This article presents the first NP-hardness 
result regarding parsing strategies for SCFGs.
However, there is a more general version of the problem 
whose complexity is still open.
In this article, we have restricted ourselves
to linear parsing strategies, that is, strategies
that add one nonterminal at a time to the subset of
right hand side nonterminals recognized so far.  In general,
parsing strategies may group right hand side nonterminals
hierarchically into a tree.  For some permutations,
hierarchical parsing strategies for SCFG rules can be more efficient 
than linear parsing strategies \cite{Huang:2009}.
Whether the time complexity of hierarchical
parsing strategies is NP-hard is not known
even for the more general class of LCFRS\@.
An efficient algorithm for 
minimizing the time complexity of hierarchical strategies
for LCFRS would imply an improved approximation
algorithm for the well-studied graph-theoretic
problem of treewidth \cite{gildea-cl11}.
Minimizing fan-out of hierarchical strategies,
on the other hand, is trivial, for both LCFRS and SCFG\@.
This is because the 
strategy of combining all right hand side nonterminals 
in one step (that is, forming a hierarchy of 
only one level) is optimal in terms of fan-out,
despite its high time complexity.

\bibliographystyle{alpha}
\bibliography{segmentalNumber}

\end{document}